\newcommand{\nat}{\mathbb{N}}
\renewcommand{\implies}{\rightarrow}
\renewcommand{\phi}{\varphi}
\newcommand{\Ps}{\mathcal{P}}
\newcommand{\Var}{\textsf{Var}}
\newcommand{\FV}{\textsf{FV}}
\newcommand{\FO}{\textup{FO}\xspace}
\newcommand{\ML}{\textup{ML}\xspace}
\newcommand{\FOML}{\textup{FOML}\xspace}
\newcommand{\existsBox}[1]{\exists #1 \Box}
\newcommand{\forallBox}[1]{\forall #1 \Box}
\newcommand{\boxExists}[1]{\Box\exists #1}
\newcommand{\boxForall}[1]{\Box\forall #1}
\newcommand{\forallDiamond}[1]{\forall #1 \Diamond}
\newcommand{\existsDiamond}[1]{\exists #1 \Diamond}
\newcommand{\diamondForall}[1]{\Diamond\forall #1}
\newcommand{\diamondExists}[1]{\Diamond\exists #1}
\newcommand{\AB}{\texttt{AB}\xspace}
\newcommand{\EB}{\texttt{EB}\xspace}
\newcommand{\BA}{\texttt{BA}\xspace}
\newcommand{\BE}{\texttt{BE}\xspace}
\newcommand{\ABEB}{\texttt{ABEB}\xspace}
\newcommand{\ABBA}{\texttt{ABBA}\xspace}
\newcommand{\EBBA}{\texttt{EBBA}\xspace}
\newcommand{\EBBE}{\texttt{EBBE}\xspace}
\newcommand{\BABE}{\texttt{BABE}\xspace}
\newcommand{\ABEBBA}{\texttt{ABEBBA}\xspace}
\newcommand{\ABEBBE}{\texttt{ABEBBE}\xspace}
\newcommand{\ABBABE}{\texttt{ABBABE}\xspace}
\newcommand{\EBBABE}{\texttt{EBBABE}\xspace}
\newcommand{\ABEBBABE}{\texttt{ABEBBABE}\xspace}
\newcommand{\LBF}{\texttt{LBF}\xspace}
\newcommand{\M}{\mathcal{M}}
\newcommand{\W}{\mathcal{W}}
\newcommand{\D}{\mathcal{D}}
\newcommand{\R}{\mathcal{R}}
\newcommand{\live}{\delta}
\newcommand{\val}{\rho}
\newcommand{\T}{\mathcal{T}}
\newcommand{\onlyT}{\textsf{OnlyT}}
\newcommand{\Hsuc}{\textsf{Hsuc}}
\newcommand{\Vsuc}{\textsf{Vsuc}}
\newcommand{\children}{\Delta}
\newcommand{\literal}{\textsf{literal}\xspace}
\newcommand{\module}{\textsf{module}\xspace}
\newcommand{\comp}{\textsf{C}}
\newcommand{\Esafe}{\textsf{Existential}-\textsf{safe}\xspace}
\newcommand{\END}{\texttt{END}}
\newcommand{\type}{\Pi}
\newcommand{\Center}{\mathcal{T}_0}
\newcommand{\Left}{\mathcal{T}_1}
\newcommand{\Right}{\mathcal{T}_2}
\newcommand{\EXPspace}{\textsc{ExpSpace}\xspace}
\newcommand{\NEXP}{\textsc{Nexp}\xspace}
\newcommand{\NEXPtime}{\textsc{NexpTime}\xspace}
\newcommand{\PSPACE}{\textsc{PSpace}\xspace}
\newcommand{\cmark}{\ding{51}}
\newcommand{\xmark}{\ding{55}}
\title{Are Bundles Good Deals for \FOML?}
\author{Mo Liu}{LORIA, Nancy, France}{mo.liu@loria.fr}{}{}
\author{Anantha Padmanabha}{DI ENS, ENS, CNRS, PSL University \&
  Inria, Paris, France}{anantha.padmanabha@inria.fr}{}{}%mandatory, please use full name; only 1 author per \author macro; first two parameters are mandatory, other parameters can be empty.
\author{R Ramanujam}{Institute of Mathematical Sciences, HBNI , Chennai, India (Retd)\\ Azim Premji University, Bengaluru, India (Visiting)}{jam@imsc.res.in}{}{}
\author{Yanjing Wang}{Department of Philosophy, Peking University, Beijing, China}{y.wang@pku.edu.cn}{}{}
\authorrunning{Padmanabha, Ramanujam}%mandatory. First: Use abbreviated first/middle names. Second (only in severe cases): Use first author plus 'et al.'
\keywords{bundled fragments, first-order modal logic, decidability, tableaux}
\begin{document}

\maketitle

\begin{abstract}
Bundled products are often offered as good deals to customers. When we bundle quantifiers and modalities together (as in $\exists x \Box$, $\Diamond \forall x$ etc.) in first-order modal logic (\FOML), we get new logical operators whose combinations produce interesting fragments of \FOML without any restriction on the arity of predicates, the number of variables, or the modal scope. It is well-known that finding decidable fragments of \FOML is hard, so we may ask: do bundled fragments that exploit the distinct expressivity of \FOML constitute good deals in balancing the expressivity and complexity? There are a few positive earlier results on some  particular fragments. In this paper, we try to fully map the terrain of bundled fragments of \FOML in (un)decidability, and in the cases without a definite answer yet, we show that they lack the finite model property. Moreover, whether the logics are interpreted over constant domains (across states/worlds) or increasing domains presents another layer of complexity. We also present the \textit{loosely bundled fragment}, which  generalizes the bundles and yet retain decidability (over increasing domain models).
 \end{abstract}

\section{Introduction}
Among the gifts of logic to computer science, perhaps the most important
are the use of first-order logic ($\FO$) for descriptions of computational domains,
and propositional modal logic ($\ML$) for descriptions of state transitions. The former
has led to striking results, not only in database theory, finite model theory,
and descriptive complexity, but also for knowledge representation in artificial 
intelligence. The latter has led to industrial applications, especially in the use
of propositional temporal logics and epistemic logics for system specification and verification.

A natural invitation then is to combine the best of both, using first-order
modal logic ($\FOML$) for the description of state transition systems where states
are given by first-order descriptions of computational domains. Such an
idea is implicit in the realm of database updates, in the control of  
infinite state systems, networks with unbounded parallelism (such as in 
the context of mobile processes), and in the study of cryptographic protocols. In such contexts, there are unboundedly many data elements or processes or, in the case of security protocols, multi-sessions, and we wish to study how these change on applying some transition.
Unfortunately, first-order modal logic seems to combine the worst of both computationally, losing also some good properties of first-order logics and propositional modal logics. 

One of the significant achievements of the last century was mapping the
terrain of decidable fragments of first-order logic (\cite{BGG97}).  Propositional
modal logics are shown  to be robustly decidable (\cite{Vardi1997,AndrekaNB98}), with many extensions
going beyond first-order expressiveness (such as transitive closure) yielding
decidability (cf. e.g., \cite{GMV99}). On the other hand, it is hard to obtain decidable
fragments of $\FOML$, and the situation seems quite hopeless: even the {\em 
two-variable fragment with one single unary predicate} is (robustly) 
undecidable over almost all useful model classes \cite{RybakovS17a}.

Despite such discouragement, a small band of researchers have managed
to identify some decidable fragments of $\FOML$. Notable among them are
the  \textit{monodic fragments} \cite{HodkinsonWZ00,Mono01}, which requires that there be at most  one free 
variable in the scope of any modal subformula. Combining the monodic 
restriction with a decidable fragment of $\FO$  we  often obtain decidable fragments of $\FOML$ even with extra frame conditions. The principal idea here is that two variables occurring free in modal scope allow us to code up a binary relation, which leads to undecidability in the presence of quantifiers, and hence we restrict the use of such variables. Interestingly, this idea originally came from the study on decidable description logics in knowledge representation (cf. \cite{HodkinsonWZ01}). 

This line of research has led to applications in temporal logics for
infinite state systems (\cite{DFKLFOtemporal08}), branching time
temporal logics (\cite{HWZ02}), epistemic logics (\cite{BelardinelliL09,BelardinelliL12}) and logics with counting quantifiers (\cite{Hampson2016-HAMDFM}), and so on. 
%Fitting (\cite{FittingMandD00}), while not identifying a decidable fragment, shows that a variety
%of $\FOML$ is suitable for description of relational databases.
  
  \bigskip
    
In studying decidable syntactic fragments of $\FO$, we typically study 
restrictions on  quantifier alternation, vocabulary or the number of variables in the formula. The use of equality and constants often brings some complexity as well. When it comes to guarded fragments \cite{AndrekaNB98}, fluted fragments \cite{tendera2019} and such, there are no restrictions on variables, quantification
or vocabulary, but on \textit{quantification scope}: quantification of a variable is not free, but is subject to its position in the syntax tree of the formula. Essentially, this can limit the power of the quantifiers, which is also the `secret of success' of propositional modal logic computationally.
%The monodic fragment places a variable restriction in modal scope. 
These are
reminiscent of `quantifier phrases' in linguistics, where quantifiers are 
combined with predicates or connectives (as in `both' or `all up'). 

A recent idea of this kind was explored by  Wang (\cite{Wang17d})
who showed that when the existential quantifier and a box modality 
were always bundled together to appear as a single quantifier-modality pair ($\exists x\Box$), the resulting fragment of FOML enjoys the attractive properties of propositional modal logic: finite tree model property, \PSPACE decision procedure, simple axiom system and so on, without any restriction on predicates or the occurrences of variables. It does not build on a decidable \FO fragment explicitly, but exploits the distinct feature of \FOML in capturing the interaction between quantifiers and modalities, as in Barcan formulas. Interestingly, similar to the origin of monodic fragments, the motivation again came from a formal treatment of knowledge. Wang used $\exists x\Box$ as a new modality where $\Box$ is the know-that operator to capture the logic structure behind the \textit{knowing-wh} expressions such as \textit{knowing what, knowing how, knowing why}, and so on (cf. \cite{Wang2018}), e.g.,  \textit{knowing how to achieve $\phi$} is rendered as \textit{there exists} a method $x$ such that the agent \textit{knows that} $x$ can guarantee $\phi$ \cite{Wang18}. %Instead of using the full first-order modal language, to capture the \textit{knowing-wh} expressions, we only need to express bundles like $\exists x\Box$ where the quantifier can only appear with modalities. 

Encouraged by this, \cite{padmanabha_et_al:LIPIcs:2018:9942} took the next step, by considering
not only the combination  $\exists x \Box$ but also its companion 
$\forall x\Box$. They found that the logic with both of these combinations
continued to be decidable over increasing domain models, though it was later shown that there was 
a price to be paid in terms of complexity. Such modal-quantifier combinations
were thus called \textit{bundled modalities} or simply \textit{bundles}. Clearly we can define more bundles such as $\Box \forall$, $\Box \exists$ etc., with the obvious semantics.

Beyond the epistemic context,  bundled fragments also offers many  interesting possibilities for system specification:
%\vspace{-2pt}
\begin{itemize}
\item $\neg \existsBox{x}~(x < c)$: No element is guaranteed to be bounded
by constant $c$ (after update).
%\item $\existsBox {x}~\neg \existsBox{y}~(x < y)$: There is a king element that dominates every update so that afterwards, no element is sure to lie above it.
\item $\existsBox{x}~\Box \exists{y}~(x > y)$: There is an element that dominates some element after every update.
\item $\Box \exists{x}~\Box \forall{y}~(x \leq y)$: All updates admit a local minimum.
\item $\existsBox{x}~(\existsBox{y}~(x > y) \land \existsBox{y}~(x < y))$: There
is an element that dominates another no matter the update and is dominated
by another no matter the update.
\end{itemize}
The story in \cite{padmanabha_et_al:LIPIcs:2018:9942} came with a twist:  the computational property of many bundled fragments depends on whether you assume the domain of each world is constant or only expanding along the relation.\footnote{Making the domain  vary arbitrarily across the worlds causes problems defining proper semantics for quantifiers cf. \cite{Cresswell96}} 
% one that is familiar to researchers in
% philosophical logic. The semantics of $\FOML$ depends on whether we work
% with {\sf constant} domain interpretations or {\sf varying} domain interpretations.
In the former, a single domain is fixed for the entire model, with only predicate interpretations changing during system evolution. This is natural for many database applications. In the latter, the domain itself may be state-dependent.
% However, to ensure that a variable does not refer to elements that don't exist,
% we consider {\sf increasing} domain interpretations: domains increase monotonically
% on transitions. 
This is natural in the context of systems of processes where new
sub-processes are created during system evolution,  and in security theory, where new parallel sessions of security protocols are spawned.  \cite{padmanabha_et_al:LIPIcs:2018:9942}  showed  that while the $\exists x \Box$ 
bundle cannot distinguish between the two interpretations,  even the
fragment of $\forall\Box$ with only unary predicates is undecidable over constant domain models. For other decidable fragments such as the monodic fragment, there is a natural translation showing that for any formula over a varying domain model, there is an equisatisfiable formula over a constant domain model. Unfortunately, such a translation does not preserve expressibility in bundled fragments.

All this opens up a range of questions: what about other bundles such as 
$\Box \exists x$ or $\Box \forall x$ and combinations thereof? Which of these
distinguish constant domain and varying domain models? What about further
bundles such as $\forall x \exists y \Box$? Can we identify the border line
between decidability and undecidability in this terrain? 

This is the project taken up in this paper, and what we present is a 
{\em trichotomy} on the decidability of bundled fragments: decidable ones, undecidable ones, and for those without definite answer yet, we show they lack the finite model property. Moreover, we
present the {\textit loosely bundled} fragment that generalizes the bundling idea to what we believe to be the strongest yet decidable bundled fragment.

\begin{figure}
\begin{center}
\begin{tabular}{|l|l|l|l|l||l|}
\hline
 Domain&$\forall\Box$&$\exists\Box$&$\Box\forall$&$\Box\exists$&Upper/ Lower Bound\\
\hline
\hline
\multirow{5}{*}{Constant}&\cmark &*&*&*&\multirow{2}{*}{Undecidable}\\
\cline{2-5}
& *&*&\cmark &* &\\
\cline{2-6} 
 & \xmark & \cmark & \xmark & \xmark & \PSPACE-complete \\
\cline{2-6}
& \xmark & \xmark & \xmark & \cmark  &\multirow{2}{*}{No FMP}\\
\cline{2-5} 
& \xmark & \cmark  &  \xmark & \cmark   &\\
\hline
\hline
\multirow{11}{*}{Increasing}& \cmark & \xmark & \xmark & \xmark &  \\
\cline{2-5} 

&\xmark & \cmark & \xmark & \xmark &\PSPACE-complete    \\
\cline{2-5} 

&\xmark & \xmark & \cmark & \xmark &   \\
\cline{2-6} 

&\xmark & \xmark & \xmark & \cmark & \EXPspace/ \PSPACE\\

\cline{2-6}
\noalign{\vskip-2\tabcolsep \vskip-3\arrayrulewidth \vskip\doublerulesep}
\\
\cline{2-6} 

&\cmark & \cmark & \xmark & \xmark & \multirow{2}{*}{\EXPspace/\NEXPtime} \\
\cline{2-5} 
&\xmark & \xmark & \cmark & \cmark &\\

\cline{2-6} 

&* & \cmark & \cmark & * & Undecidable \\
\cline{2-6}

&\xmark & \cmark & \xmark & \cmark & No FMP\\
\cline{2-6} 
\noalign{\vskip-2\tabcolsep \vskip-3\arrayrulewidth \vskip\doublerulesep}
\\
\cline{2-6} 

&\cmark & \cmark & \xmark & \cmark & Undecidable\\
\cline{2-6} 

&\cmark & \xmark & \cmark & \cmark & \multirow{2}{*}{\EXPspace/ \NEXPtime} \\
\cline{2-5}
&\multicolumn{4}{c||}{ loosely bundled} &\\
\hline

\end{tabular}
\end{center}
\caption{Satisfiability problem classification for combinations of bundled fragments. The results are new in this paper, except those summarized in Theorem 1 below.
 }
%The results in the table are new in this paper, except those mentioned in Theorem \ref{thm.known}.}
\label{fig-big table}
\end{figure}

% \begin{figure}
% \begin{tabular}{|l|l|l|l||l|l|}
% \hline
% $\forall\Box$&$\exists\Box$&$\Box\forall$&$\Box\exists$& Domain&Upper/ Lower Bound\\
% \hline
% \hline
% \cmark &*&*&*&\multirow{5}{*}{Const.}&\multirow{2}{*}{Undecidable}\\
% \cline{1-4}
% *&*&\cmark &*&  &\\
% \cline{1-4} \cline{6-6}
% \xmark & \cmark & \xmark & \xmark &  & \noteYWT{\PSPACE-complete} \\
% \cline{1-4} \cline{6-6}
% \xmark & \xmark & \xmark & \cmark &  &\multirow{2}{*}{No FMP}\\
% \cline{1-4} 
% \xmark & \cmark  &  \xmark & \cmark  &  &\\
% \hline
% \hline
% \cmark & \xmark & \xmark & \xmark &Inc&  \\
% \cline{1-4} 

% \xmark & \cmark & \xmark & \xmark & Inc&\PSPACE-complete    \\
% \cline{1-4} 

% \xmark & \xmark & \cmark & \xmark & Inc&   \\
% \cline{1-4} \cline{6-6}

% \xmark & \xmark & \xmark & \cmark &Inc& \noteYWT{\EXPspace/ \PSPACE}\\

% \hline
% \hline

% \cmark & \cmark & \xmark & \xmark &Inc& \multirow{2}{*}{\noteYWT{\EXPspace/\NEXPtime}} \\
% \cline{1-4} 

% \xmark & \xmark & \cmark & \cmark & Inc&\\

% \cline{1-4} \cline{6-6} 

% * & \cmark & \cmark & * &Inc& Undecidable \\
% \cline{1-4} \cline{6-6} 

% \xmark & \cmark & \xmark & \cmark &Inc& No FMP\\
% \cline{1-4} 
% \cline{1-4} 

% \cmark & \cmark & \xmark & \cmark &Inc& Undecidable\\
% \hline

% \cmark & \xmark & \cmark & \cmark &Inc& \EXPspace/ \NEXPtime \\
% \hline\hline

% \multicolumn{4}{|c||}{ Loosely bundled} &Inc& \EXPspace / \NEXPtime\\
% \hline

% \end{tabular}
% \caption{Satisfiability problem classification for combinations of bundled fragments}
% \label{fig-big table}
% \end{figure}

The results are presented in Fig \ref{fig-big table}, ordered by the number of various bundles. A $*$ in the figure means that the result holds with or without the presence of the corresponding bundle. Note that not all of the cases are listed explicitly in the table: some cases are covered by others, e.g., the decidability of the fragment with $\forall\Box$ and $\Box\forall$ follows from the case with an extra $\Box\exists$. It is to be noted that constant 
domain  and increasing domain interpretations make a significant difference.  Where the logics are decidable, we present a tableau-based decision procedure. Proofs of undecidability involve coding of tiling problems, and lack of finite model property is shown by forcing infinite domains. The details are tricky, but 
technically interesting, combining many techniques that arise from the study of first-order logics and from modal logics. Specifically, as we will see in Section \ref{sec: ABABE}, for some combinations of bundles we can pull $\exists$ outside the scope of $\forall$ thereby allowing us to search for witnesses that work across worlds, using realised types. 

%How are  such bundles relevant in the context of database updates and infinite state systems? In such systems, an update or a transition or event is an abstract specification that changes different domain elements in different ways, and bundles provide an abstraction of such change. However, the contribution of the paper is theoretical, in decidability theory, rather than in systems modelling. 
%

The paper is structured as follows. After presenting the various fragments in the next section, we proceed from bad news to good news, presenting
undecidability results, lack of finite model property and then decidability.

\section{Syntax and Semantics}
The syntax of First Order Modal logic is given by extending the first order logic with modal operators. Note that we exclude equality, constants and function symbols from the syntax.

\begin{definition}[$\FOML$ syntax]
\label{def: FOML syntax}
Given a countable set of predicates $\Ps$  and   a countable set of variables $\Var$, the syntax of $\FOML$ is given by:
$$\alpha ::=  P(x_1,\ldots,x_n)  \mid \neg \alpha \mid \alpha \land \alpha  \mid \exists x \alpha \mid \Box \alpha $$ 
where $P\in \Ps$ has arity $n$ and $x,x_1,\ldots, x_n \in \Var$.
\end{definition}
The boolean connectives $\lor, \implies$, $\leftrightarrow$, and the modal operator $\Diamond$ which is the dual of $\Box$ and the quantifier $\forall$ are all defined in the standard way. The notion of free variables, denoted by $\FV(\alpha)$ is similar to what we have for first order logic with $\FV(\Box\alpha) = \FV(\alpha)$. We write $\alpha(x)$ to mean that $x$ occurs as a free variable of $\alpha$. Also, $\alpha[y/x]$ denotes the formula obtained from $\alpha$ by replacing every free occurrence of $x$ by $y$.

\begin{definition}[$\FOML$ structure]
\label{def: FOML structure}
An increasing domain model for $\FOML$ is a tuple $\M = (\W, \D, \live, \R, \val)$ where, 
$\W$ is a non-empty countable set called {\em  worlds};\footnote{Note that $\FOML$ can be translated into two-sorted $\FO$, and due to the L\"owenheim–Skolem theorem for countable languages, every model has an equivalent countable model, cf. \cite{G07FML}. \label{ft.LS}} $\D$ is a non-empty countable set called {\em domain}; $\R\subseteq (\W\times \W)$ is the {\em accessibility relation}. The map $\live:\W\mapsto 2^\D$ assigns to each 
$w\in \W$ a \textit{non-empty} local domain set such that whenever
$(w,v) \in \R$ we have $\live(w)\subseteq \live(v)$ and 
$\val: (\W\times \Ps) \mapsto \bigcup\limits_{n}2^{\D^n}$ is the {\em valuation function} which specifies the interpretation of predicates at every world over the local domain with appropriate arity.
\end{definition}

The monotonicity condition is useful for evaluating the free variables present in the formula \cite{Cresswell96}. These models are called {\em increasing domain models}. The model $\M$ is said to be a {\em constant domain model} if for all $w\in \W$ we have $\live(w) = \D$. 

For a given model $\M$ we denote $\W^\M,\R^\M$ etc to refer to the corresponding components. We simply use $\W,\R,\live$ etc when $\M$ is clear from the context.

To evaluate formulas, we need an assignment function for variables. For a given model $\M$, an assignment  function $\sigma: \Var\mapsto \D$ is {\em relevant} at 
$w \in \W$ if $\sigma(x)\in \delta(w)$ for all $x\in \Var$.  

\begin{definition}[$\FOML$ semantics]
\label{def: FOML semantics}
Given an $\FOML$ model $\M = (\W, \D, \live, \R, \val)$ and $w \in \W$, and $\sigma$ relevant at $w$, for all $\FOML$ formula $\alpha$
define $\M,w,\sigma \models \alpha$ inductively as follows:
{\small
$$\begin{array}{|lcl|}
\hline
\M, w, \sigma\models P(x_1,\ldots,x_n) &\Leftrightarrow & (\sigma(x_1), \ldots, \sigma(x_n))\in \val(w,P)  \\  
\M, w, \sigma\models \neg\alpha &\Leftrightarrow&   \M, w, \sigma\not\models \alpha \\ 
\M, w, \sigma\models (\alpha\land \beta) &\Leftrightarrow&  \M, w, \sigma\models \alpha \text{ and } \M, w, \sigma\models \beta \\ 
\M, w, \sigma\models \exists x \alpha &\Leftrightarrow& \text{there is some $d\in \live(w)$ such that }  \M, w, \sigma_{[x\mapsto d]}\models \alpha \\
\M, w, \sigma\models \Box  \alpha &\Leftrightarrow&  \text{for every } u\in \W  \text{if $(w,u)\in\R$ then } \M, u, \sigma\models\alpha\\

\hline
\end{array}$$
}
\end{definition}

We sometimes write $\M,w\models \alpha(a)$ to mean $\M,w,[x\mapsto a] \models \alpha(x)$. 

A formula $\alpha$ is \emph{satisfiable} if there is some $\FOML$ structure $\M$ and $w\in \W$ and some assignment $\sigma$ relevant at $w$  such that $\M,w,\sigma \models \alpha$. In the sequel, we will only talk about the relevant $\sigma$ given a pointed model. A formula $\alpha$ is \emph{valid} if $\neg \alpha$ is not satisfiable.

\subsection{Bundled fragments}

The motivation of bundling is to restrict the occurrences of quantifiers using modalities. For instance allowing only formulas of the form $\forall x \Box \alpha$ is one such bundling. We could also have $\Diamond \exists y~\alpha$. Thus, there are many ways to `bundle' the quantifiers and modalities. We call these the `bundled operators'. The following syntax defines all possible bundled operators of one quantifier and one modality:

\begin{definition}[Bundled-$\FOML$ syntax]
\label{def: bundled-FOML syntax}
The bundled fragment of $\FOML$ is the set of all formulas constructed by the following syntax:
{\small
$$\alpha ::= P(x_1,\ldots,x_n)  \mid \neg \alpha \mid \alpha \land \alpha \mid \Box \alpha \mid \existsBox{x} \alpha\mid \forallBox{x} \alpha \mid \boxExists{x} \alpha \mid \boxForall{x} \alpha $$
}
where $P\in \Ps$ has arity $n$ and $x,x_1,\ldots, x_n \in \Var$.
\end{definition}

Note that the dual of the bundled operators will give us the formulas of the form $\forallDiamond{x} \alpha,~\existsDiamond{x}\alpha,~\diamondForall{x} \alpha,~\diamondExists{x} \alpha$. Also, note that $\Box \alpha$ can be defined using any one of the bundled operator where the quantifier is applied to a variable that does not occur in $\alpha$. However, we retain $\Box\alpha$ in the syntax for technical convenience. 

The following \textit{constant domain} models  may help to get familiar with bundles. 
$$
\xymatrix@R-20pt@C-10pt{
\underline{w_1} \ar[r]\ar[dr]& v_1: Pa  &  \underline{w_2} \ar[r]\ar[dr]& v_2: Pc &\underline{w_3} \ar[r]& v_3: Pc  \\
   &                    u_1: Pb    &    & u_2 \\
   \M_1 &&\M_2 & &\M_3
}
$$ 
where $\D^{\M_1}=\{a,b\}$, $\D^{\M_2}=\D^{\M_3}=\{c\}$. $\boxExists{x}Px$ holds at $w_1$ and $w_3$ but not at $w_2$; $\existsBox{x}Px$ holds only at $w_3$; $\neg \forallBox{x}Px$ holds at $w_1$ and $w_2$; $\neg\boxForall{x}\neg Px$ holds at all the $w_i$.   

\medskip
 We denote $\AB$ (to mean forAll-Box) to be the language that allows only atomic predicates, negation, conjunction, $\Box\alpha$ and $\forallBox{x} \alpha$ (dually $\existsDiamond{x} \alpha$) formulas. Similarly we have $\EB$(Exists-Box), $\BA$(Box-forAll) and $\BE$(Box-Exists) to mean the fragments that allows formulas of the form $\exists x \Box\alpha$, $\Box\forall x \alpha$ and $\Box\exists x\alpha$ and their duals respectively. Note that the atomic formulas, negation, conjunction and $\Box\alpha$ are allowed in all the fragments. In general, these fragments are not equally expressive, e.g., as shown by \cite{Wang17d}, the $\EB$ fragment cannot express $\boxExists{x}$, $\forallBox{x}$ and $\boxForall{x}$ bundles over models with increasing (or constant) domains. Complete axiomitizations of the $\EB$-fragment over various frame classes are provided in \cite{Wang17d}\cite{Wang21}.
 %\noteYW{I deleted the space between the quantifiers and $\Box$ in $\forallBox{x}$ and $\existsBox{x}$.}

 We are interested in the combinations of the bundled operators and their decidability.  We denote the combinations of the bundled operators using their short-hand notations. For instance $\EBBA$ (to mean exists-Box + Box-all) is the language that allows atomic predicates, negation, conjunction along with $\existsBox{x}\alpha$ and $\boxForall{x} \alpha$. So in $\EBBA$ we can write formulas of the form $\boxForall{x}\existsBox{y} P(x,y)$, which allows quantifier alternation over the same local domain. 
 
 Similarly we have other combinations like $\ABBA, \EBBABE$ etc. We denote $\ABEBBABE$ to be the fragment that contains all the bundled formulas.\footnote{The naming follows the convention that every pair of letters correspond to a bundled operator and we follow the precedence $\AB > \EB > \BA > \BE$. That is, if the first two letters are not $\AB$ then the fragment does not include  $\forall x\Box$ as a bundle and so on.} 
 
 We summarize the existing results below. 
 \begin{theorem}\label{thm.known}
{\cite{padmanabha_et_al:LIPIcs:2018:9942,Liu2019bundled}}The satisfiability problem for the fragments:
\begin{itemize}
\item  $\ABEB$ and $\BABE$ are decidable over increasing domain models.
\item $\AB$ and $\BA$ are undecidable over constant domain (even with unary predicates).
\item $\EB$  is decidable over constant domain. Moreover $\EB$ cannot distinguish between increasing domain models and constant domain models.
% \noteYW{Do we still need to mention $\BA$ here?}
\end{itemize}
%\noteYW{I merged the citations above.} \noteAP{Merged theorems}
 \end{theorem}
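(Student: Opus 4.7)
The plan is to address the three bullets separately, since they invoke rather different techniques.

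For decidability of $\ABEB$ and $\BABE$ over increasing domain models, I would design a tableau-based procedure working on quantifier-free types over a Fischer--Ladner-style closure of the input formula $\phi$. The key structural observation is that in these fragments a quantifier never appears without an immediately adjacent modality, so a free variable can never be carried across a modal step while still inside the scope of another bundled quantifier from the \emph{same} world; this blocks the standard two-variable coding of binary relations. A tableau node is a world labelled by a local domain (a finite set of witness names) together with a type from the closure. Expansion of $\forallBox{x}\alpha$ at $w$ instantiates $\alpha[y/x]$ at every successor for every $y$ currently in $\live(w)$; expansion of $\existsBox{x}\alpha$ introduces one fresh witness $y$, enforces $\alpha[y/x]$ at every successor (and, by increasingness, at all its further successors it remains available); the $\BA/\BE$ bundles are symmetric but instantiate/create witnesses one modal step later. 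Since the closure is of polynomial size and each bundle introduces at most one fresh witness per occurrence, types stabilise and termination follows by a standard blocking argument; \PSPACE and \EXPspace bounds come from on-the-fly traversal. The main obstacle here is showing that the interaction between $\forallBox{x}$ and $\existsBox{x}$ does not force unboundedly many witnesses to be tracked simultaneously along a branch; this is where the increasing-domain assumption is essential.

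For undecidability of $\AB$ and $\BA$ over constant domains with only unary predicates, I would reduce from the $\mathbb{N}\times\mathbb{N}$ recurring tiling problem. The constant-domain assumption is what gives us the expressive power we lack in the increasing case: since each element of $\D$ persists identically across every world, unary predicates effectively colour a \emph{global} set of `cells'. A formula of the form $\forallBox{x}\,\chi(x)$ then acts as a rigid universal constraint on all cells at every reachable world. Horizontal and vertical grid moves are simulated by two kinds of successor worlds (distinguished by auxiliary propositions), and the confluence law of the grid is enforced using $\forallBox{x}$ (resp.\ $\boxForall{x}$) to transport cell colour information through both directions. The hard part is engineering confluence purely with $\forall\Box$ or $\Box\forall$ (one has no freedom to interleave $\exists$); one bypasses this by having $\forallBox{x}$ propagate, for each element, its full colour profile, so that confluence becomes an equation between propagated profiles.

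For decidability of $\EB$ over constant domains and the coincidence with the increasing-domain case, I would establish a small tree-model property: any satisfiable $\EB$ formula $\phi$ has a model whose depth is at most the modal depth of $\phi$, whose branching is bounded by the number of $\existsBox{x}$ subformulas, and whose local domain at each world consists exactly of the witnesses introduced along the path from the root plus the free variables of $\phi$. The point is that $\existsBox{x}\alpha$ requires a \emph{single} witness shared by all successors; since $\EB$ has no other quantifier, no other use of the domain is forced. For the equivalence between constant- and increasing-domain satisfiability of $\EB$, the direction from constant to increasing is trivial; for the converse, starting from an increasing-domain model $\M$ satisfying $\phi$ at $w_0$ one pads every local domain up to $\bigcup_{v}\live(v)$ and extends valuations arbitrarily on the newly introduced elements. $\EB$ formulas are invariant under such padding because $\existsBox{x}\alpha$ only asks for the existence of a single witness, and any witness available locally remains available after padding; no formula in the fragment can detect the presence of the extra elements.
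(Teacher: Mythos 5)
This theorem is not proved in the paper at all: it is an explicit summary of results imported from \cite{padmanabha_et_al:LIPIcs:2018:9942} and \cite{Liu2019bundled}, so there is no in-paper proof to compare against line by line. Judged on its own merits, your sketches for the first two bullets are at least pointed in the right direction (tableaux over types for \ABEB/\BABE, a tiling reduction exploiting the rigidity of a constant domain for \AB/\BA), but note two things. First, for \ABEB you explicitly defer the one genuinely hard step --- showing that the interplay of $\forallBox{x}$ and $\existsBox{x}$ does not force unboundedly many tracked witnesses --- and your ``standard blocking at polynomial size'' framing is known to be wrong in detail: the $(\forall)$-style rule re-instantiates over a domain that grows with each fresh witness, the tableau is exponential, and the paper itself records (footnote in Section~\ref{sec: lower bound}) that the original \PSPACE claim for \ABEB was a bug, with the problem actually \NEXPtime-hard. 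This does not affect decidability, but it means the termination argument cannot be the routine one you describe. The present paper's own route to these decidability facts is different: it embeds \ABEB and \BABE into \LBF (Proposition~\ref{prop-LBF embeds many bundled operators}) and runs the \LBF tableau of Theorem~\ref{thm-tableau for LBF}, whose $(\forall)$ rule is only applicable to \Esafe sets and relies on clean rewriting.

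The concrete gap is in your third bullet. To turn an increasing domain model of an \EB formula into a constant domain one, you propose to pad every $\live(v)$ up to $\bigcup_v \live(v)$ and ``extend valuations arbitrarily on the newly introduced elements.'' That fails: \EB is closed under negation, so it contains the duals $\forallDiamond{x}\,\alpha$, and a universally quantified formula must also hold of the padded elements. For example, if $\M,w \models \forallDiamond{x}\,\neg P(x)$ and you add a fresh $d$ to $\live(w)$ with $P(d)$ true at every successor of $w$ (a perfectly ``arbitrary'' choice), the formula is destroyed. The padding must instead make each new element \emph{mimic} the full behaviour of some element already in $\live(w)$ across the entire subtree --- exactly the $\T_{d\mapsto c}$ operation of Definition~\ref{def-tree operation} and Proposition~\ref{prop-tree extension is ok} in this paper, used there for a different fragment. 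With mimicking in place of arbitrariness, your argument for the constant/increasing coincidence of \EB goes through; without it, the stated invariance claim is false as written.
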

 
 \paragraph*{Loosely Bundled Fragment} To study the most general decidable fragment, we introduce the notion of \textit{`loosely bundled fragment'} (\LBF). Note that a bundled formula of the form $\existsBox{x} \alpha$ imposes a restriction that there is exactly one modal formula in the scope of $\exists x$. But this is a strong requirement. We weaken this condition to allow formulas of the form $\exists x\beta$ where $\beta$ is a boolean combination of atomic formulas and modal formulas. Moreover we can allow a quantifier alternation of the form $\exists x_1,\ldots \exists x_n ~\forall y_1,\ldots \forall y_m ~\beta$. As we will see, the fact that the existential quantifiers are outside the scope of universal quantifiers can help us to obtain decidability results over increasing domain models.
 
 \begin{definition}[$\LBF$ syntax]
\label{def-LBF syntax}
The loosely bundled fragment of $\FOML$ is the set of all formulas constructed by the following syntax:

{\small 
\begin{tabular}{r l}
$\psi::=$&$ P(z_1,\ldots z_n)\mid \neg P(z_1,\ldots z_n) \mid \psi\land \psi \mid \psi \lor \psi \mid \Box\alpha \mid \Diamond \alpha$\\
$\alpha:=$&$\psi \mid  \alpha \land \alpha \mid \alpha \lor \alpha \mid \exists x_1\ldots \exists x_k \forall y_1\ldots \forall y_l~\psi$
\end{tabular} 
}

\noindent where $k,l,n \ge 0$ and $P\in \Ps$ has arity $n$ and $x_1,\ldots x_k$,  $y_1,\ldots y_l$, $z_1,\ldots, z_n \in \Var$.
\end{definition}

We let $\LBF$ be the set of all formulas that can be obtained from the grammar of $\alpha$ above. Note that the syntax does not allow a quantifier alternation of the form $\forall x \exists y~ \alpha$. Also, inside the scope of quantifier prefix $\exists^*\forall^*$, we can only have boolean combination of atomic and modal formulas.
 
 Another advantage of the loosely bundled fragment is that some combination of bundled fragments can be embedded into $\LBF$. Hence proving the decidability for $\LBF$ implies the decidability for these combinations as well.
 \begin{proposition}
 \label{prop-LBF embeds many bundled operators}
 The fragments $\ABEB$ and $\BABE$ are  subfragments of  $\LBF$.
 \end{proposition}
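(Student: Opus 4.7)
The plan is to define an essentially identity translation from each of $\ABEB$ and $\BABE$ into $\LBF$ and then verify by structural induction that the result fits the $\LBF$ grammar. The first step is syntactic: the $\LBF$ grammar only admits literals and negated literals rather than arbitrary negations, so I would put every bundled formula into negation normal form using the standard dualities $\neg(\alpha \land \beta) \equiv \neg\alpha \lor \neg\beta$, $\neg \Box\alpha \equiv \Diamond \neg\alpha$, $\neg \forall x\Box\alpha \equiv \exists x \Diamond \neg\alpha$, $\neg \Box\forall x\alpha \equiv \Diamond \exists x \neg\alpha$, and their companions. Each such rewrite stays inside the original bundled fragment (using the dual bundles $\exists x\Diamond, \forall x\Diamond$ or $\Diamond\exists x, \Diamond\forall x$), so without loss one may assume the input is already in NNF.

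The second step is the induction. For $\BABE$, I would prove the strengthened claim that every NNF $\BABE$ formula is a $\psi$-formula of $\LBF$. Literals are $\psi$s directly; boolean combinations of $\psi$s are $\psi$s via the $\psi \land \psi$ and $\psi \lor \psi$ productions; for $\Box\alpha$ or $\Diamond\alpha$, the induction hypothesis gives that $\alpha$ is a $\psi$, hence also an $\LBF$ $\alpha$-formula, so $\Box\alpha$ and $\Diamond\alpha$ are $\psi$s. The informative cases are the bundles: for $\Box \forall x\alpha$, the IH makes $\alpha$ a $\psi$, so $\forall x\alpha$ instantiates the $\exists^{k}\forall^{l}\psi$ pattern with $k=0,\ l=1$ and is thus an $\LBF$ $\alpha$-formula, and then wrapping with $\Box$ returns a $\psi$; the cases $\Box\exists x$, $\Diamond\forall x$, $\Diamond\exists x$ are symmetric. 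For $\ABEB$, I would instead show that every NNF $\ABEB$ formula is an $\LBF$ $\alpha$-formula, with the auxiliary invariant that it is actually a $\psi$ whenever its head is a literal, a $\Box$, or a $\Diamond$. The bundled cases $\forall x\Box\alpha$ and $\exists x\Box\alpha$ (and their $\Diamond$-duals) then slot directly into the $\exists^{0}\forall^{1}\psi$ and $\exists^{1}\forall^{0}\psi$ patterns, because the IH makes the inner $\Box\alpha$ a $\psi$.

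The only delicate point in the proof is the bookkeeping between the $\psi$-clause (quantifier-free top) and the $\alpha$-clause (admitting an $\exists^{*}\forall^{*}$ prefix over a $\psi$) of the $\LBF$ grammar: the induction has to track which clause each translated subformula inhabits. The crucial invariant is that a $\Box$ or $\Diamond$ at the head of a translated formula always restores $\psi$-status, which is precisely why a single bundled quantifier always fits the one-block $\exists^{*}\forall^{*}\psi$ pattern of $\LBF$. Semantic equivalence between the original formula and its translation is immediate from the NNF rewrites, so no further model-theoretic argument is required.
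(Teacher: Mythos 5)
Your argument is correct, and it is exactly the routine grammar check the paper intends: the paper states this proposition without proof, treating it as immediate from the fact that each bundle $\forall x\Box$, $\exists x\Box$, $\Box\forall x$, $\Box\exists x$ (and its dual) places a single quantifier directly over a module, which is a $\psi$, so it instantiates the one-block $\exists^{k}\forall^{l}\psi$ clause of \LBF. Your NNF preprocessing and the $\psi$-versus-$\alpha$ bookkeeping in the induction are the right details to make this precise, and they check out.
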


\subsection{Trichotomy}
 
The goal of this paper is to classify the decidability status of the satisfiability problem for the various combinations of the bundled fragments.  As discussed before, decidability depends on whether we are considering increasing domain models or constant domain models. 

We consider all possible combinations and obtain a trichotomy classification. We  prove for any combination of bundles, that it is undecidable, or that it admits a tableau based decision procedure or that it lacks the finite model property. The last cases do not give us any (un)decidability result, but demonstrate that the tableau method based on finite model property does not work.

The key idea is that in our setting a fragment can be proved to be undecidable if it can assert $\forall x \exists y \Box~ \alpha$ and $\forall x \Box \forall y \Box \forall z~ \alpha$.  As we will see, the first property is used in the tiling encoding to assert that every `grid point' $x$ has a horizontal/vertical successor $y$. It is important for both quantifiers to be applicable over the same local domain. Moreover  $\Box\alpha$ in $\forall x \exists y \Box \alpha$ ensures that the witness $y$ acts uniformly across all the descendants. The second formula $\forall x \Box \forall y \Box \forall z~ \alpha$ is used to verify the `diagonal property' of the grid. These formulas can be asserted in the fragments $\EBBA$ and $\ABEBBE$ and consequently are shown to be undecidable over increasing domain models.

If a fragment can express $\forall x \exists y \Box~ \alpha$ but not $\forall x \Box \forall y \Box \forall z~ \alpha$ then we will prove that such fragments do not have finite model property. For instance the fragment $\EBBE$ has this property. Also note that over constant domain models, the fragment $\BE$ can assert $\forall x \exists y \Box~\alpha$ in the form $\Diamond \forall x \Box \exists y \Box~ \alpha$. In this case, even though the two quantifiers are applied at different worlds, since the local domain is same across all worlds, the formula will serve the same purpose as intended. We will prove that these fragments do not have the finite model property.

Finally, if a fragment cannot express $\forall x\exists y \Box~ \alpha$ then we will prove that it satisfies finite model property and give a tableau procedure. These fragments include $\ABEBBA$ and $\LBF$.

\section{Undecidable fragments}
\label{sec-undec}

 In \cite{padmanabha_et_al:LIPIcs:2018:9942}  the authors prove that the $\AB$ fragment over constant domain models is undecidable. In \cite{Liu2019bundled} it is proved that a similar undecidability reduction holds for the $\BA$ fragment over constant domain models. 
 In this section, we consider the undecidable bundled fragments over increasing domain models. These are the fragments in which we can express both $\forall x \exists y \Box \alpha$ and $\forall x \Box\forall y \Box \forall z \alpha$ which include $\EBBA$ and $\ABEBBE$ fragments.

For the proof we use the tiling problem over the first quadrant. Given a tiling instance $\T = (T,H,V,t_0)$ where $T$ is a finite set of tiles and $t_0\in T$ and $H,V \subseteq T\times T$ are the horizontal and vertical constraints respectively, a mapping $f: (\nat\times \nat) \to T$ is called a proper tiling if $f(0,0) = t_0$ and for all $i,j \in \nat$ if $f(i,j) = t$ and $f(i,j+1) = t'$ then $(t,t') \in V$ and similarly if $f(i,j) = t$ and $f(i+1,j) = t'$ then $(t,t') \in H$.  The problem, to decide whether an input tiling instance $\T$ has a proper tiling is undecidable \cite{van1996convenience}.

In the encoding, the idea is to interpret domain elements as grid points. We use two binary predicates $P$ and $Q$ to identify the horizontal (going right) and vertical successors (going up) respectively. Also, we abuse notation and consider every $t \in T$ as a unary predicate so that $t(x)$ means the `grid point' $x$ is tiled with $t$. 

\subsection{$\EBBA$ over increasing domain}

This fragment allows formulas of the form $\exists x\Box \alpha$ and $\Box \forall x~\alpha$ (and their duals). Hence we can express $\forall x \exists y \Box \alpha$ and $\forall x \Box\forall y \Box \forall z \alpha$ in the form of $\Box\forall x \exists y \Box \alpha$ and also $\Box\forall x\Box \forall y \Box \forall z \phi$ respectively.
For a given a tiling instance $\T$ we construct an $\EBBA$ sentence $\phi_\T$ such that there exists a proper tiling of $\T$ iff $\phi_T$ is satisfiable in an increasing domain model.
We first define some notations for short-hand reference:

\begin{figure}[h]
  \centering
  \small
  \begin{tabular}{|r c l |}
\hline &&\\
$\onlyT(t,x)$&$:=$&$t(x) \land \bigwedge\limits_{t'\ne t} \neg t'(x)$\\
\hline &&\\
$\Hsuc(x,y)$&$:=$&$ P(x,y) \implies \bigvee\limits_{(t,t')\in H}\big( t(x)\land t'(y)\big)$\\
\hline &&\\
$\Vsuc(x,y)$&$:=$&$Q(x,y) \implies \bigvee\limits_{(t,t')\in V} \big( t(x) \land t'(y)\big)$
\\ \hline &&\\
$\children^0 := \top$& and & $\children^n = \Diamond \top \land \Box \big( \children^{n-1})$\\ 
\hline 
\end{tabular}
  \caption{Short hand formulas}
  \label{Fig-short formulas}
 \end{figure}

The formula $\onlyT(t,x)$ asserts that grid point $x$ is tiled with exactly $t$ and no other tile.  The formulas $\Hsuc(x,y)$ and $\Vsuc(x,y)$ ensure that the horizontal and vertical successors satisfy the tiling constraints respectively. Also,  $\children^n$ ensures that there is at least one path of length $n$ starting from the current world and all  paths starting from the current world can be extended to length at least $n$.

For better readability, we sometimes drop the brackets in the predicates $P(x,y)$ and $Q(x,y)$ and write $Pxy$ and $Qxy$ respectively. 
%\noteYW{I deleted the brackets in the atomic formulas in the later parts. Please double check.}
The tiling encoding formulas are described in Fig. \ref{Fig-BA+EB undec formulas}

\begin{figure}[h]
  \centering
  \footnotesize
  \begin{tabular}{|r l | }
\hline &\\
$\alpha_0:=$&$\Diamond \exists x_0~\Big\{ \Box \Box\big( \onlyT(t_0,x_0)\big)\Big\} \land \Box\forall x\Big\{ \bigvee\limits_{t\in T} \Box\Box \onlyT(t,x) \Big\} $\\
&$\land~ \children^3\top$\\
\hline &\\
$\alpha_H:=$&$\Box\forall x~\Big\{ \exists x_1 \Box \Big(\Box Pxx_1 \Big)\Big\}$
\\
$\alpha_V:=$&$\Box\forall x~\Big\{ \exists x_2 \Box \Big(\Box Qxx_2 \Big)\Big\}$\\ \hline &\\

$\alpha_{Hs}:=$&$\Box\forall x~\Big\{  \Box \forall y \Big(\Box \Hsuc(x,y) \Big) \Big\} $\\
$\alpha_{Vs}:=$&$\Box\forall x~\Big\{ \Box \forall y \Big(\Box \Vsuc(x,y) \Big) \Big\} $\\
\hline
&\\
$\phi_H:=$&$\Box\forall x~\Big\{   \Box \forall y\Big(  \Diamond Pxy \leftrightarrow \Box Pxy \Big) \Big\}$
\\
$\phi_V:=$&$\Box\forall x~\Big\{  \Box \forall y\Big(  \Diamond Qxy \leftrightarrow \Box Qxy \Big) \Big\}$\\
\hline &\\

$\psi:=$&$\Box\forall x~ \Big\{   \Box \forall y \Big( [\exists z'\Box (Qxz' \land P z'y) ]\implies [\Box \forall z (Pxz \implies Qzy)] \Big) \Big\}$\\
\hline
\end{tabular}

  \caption{$\EBBA$ formulas for encoding the tiling instance over increasing domain models}
  \label{Fig-BA+EB undec formulas}
\end{figure}

The formula $\alpha_0$ asserts that there is a grid point which is tiled with $t_0$ and every grid point has a unique tile (and $\children^3$ ensures that there are sufficiently many descendants). Further, $\alpha_H$ and $\alpha_V$ assert that every grid point  has a horizontal and vertical successor respectively. Formulas $\alpha_{Hs}$ and $\alpha_{Vs}$ ensure that the horizontal and vertical constraints of the input tiling instance are satisfied. 

The formulas $\phi_H$ and $\phi_V$ ensure that the horizontal and vertical successor information is uniform across all descendants respectively. Finally $\psi$ states the grid points satisfy the diagonal property.

\bigskip
For a given tiling instance $\T$,  the corresponding  formula $\alpha_T$ is given by the conjunction of the above formulas.

\begin{theorem}
\label{thm-EBBA undecidability}
For a given tiling instance $\T$, there exists a proper tiling of $\T$ over $\nat \times \nat$ iff $\alpha_T$ is satisfiable in an increasing domain model.
\end{theorem}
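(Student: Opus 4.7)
The statement is a reduction from the first-quadrant tiling problem, so I prove both directions of the iff. For the forward direction, given a proper tiling $f:\nat\times\nat\to T$, I build a witness model $\M$ whose frame is a chain of length $3$ with a self-loop at the tail (so $\children^n\top$ holds everywhere required). Set $\live(w_0)$ to be any singleton and $\live(w_i)=\nat\times\nat$ for $i\geq 1$. Interpret $P((i,j),(i',j'))$ to hold exactly when $i'=i+1$ and $j'=j$, and $Q$ analogously for $(i,j+1)$, uniformly across all worlds (so the rigidity conjuncts $\phi_H,\phi_V$ hold trivially), and put $t^\M((i,j))$ iff $f(i,j)=t$. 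Each conjunct of $\alpha_\T$ then holds by inspection: $\alpha_0$ with witness $x_0:=(0,0)$, $\alpha_H$ and $\alpha_V$ with witnesses $(i+1,j)$ and $(i,j+1)$, $\alpha_{Hs}$ and $\alpha_{Vs}$ from $f$ being proper, and $\psi$ from commutativity of the integer grid.

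\textbf{Backward direction.} Suppose $\M,w_0\models\alpha_\T$. From the $\Diamond\exists x_0$ clause of $\alpha_0$ I extract an element $a_{0,0}\in\D$ that is $t_0$-tiled at the appropriate deep worlds. I then inductively choose grid representatives $a_{i,j}\in\D$ for every $(i,j)\in\nat\times\nat$: given $a_{i,j}$, applying $\alpha_H$ (resp.\ $\alpha_V$) at $a_{i,j}$ yields a $P$-witness $a_{i+1,j}$ (resp.\ a $Q$-witness $a_{i,j+1}$) at the requisite modal depth. The increasing-domain assumption keeps these witnesses in the local domain of all deeper worlds, and $\phi_H,\phi_V$ guarantee that once a $P$- or $Q$-atom is forced at one deep world it holds at every further descendant, so a single ``reference'' deep world below $w_0$ simultaneously witnesses every atom I need. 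Define $f(i,j):=t$ for the unique $t$ with $t(a_{i,j})$ (uniqueness and totality coming from $\alpha_0$'s universal clause); the tile-constraint conjuncts $\alpha_{Hs},\alpha_{Vs}$ then yield $(f(i,j),f(i+1,j))\in H$ and $(f(i,j),f(i,j+1))\in V$.

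\textbf{The technical crux.} The main obstacle is defining the diagonal successor $a_{i+1,j+1}$ coherently: a priori, the element reached by $Q$-then-$P$ from $a_{i,j}$ need not equal the element reached by $P$-then-$Q$, so there is no canonical tile to assign at position $(i+1,j+1)$. The diagonal formula $\psi$ is designed precisely for this. Given $a_{i,j},a_{i+1,j},a_{i,j+1}$ from the induction, I apply $\alpha_H$ once more at $a_{i,j+1}$ to obtain some $b$ with $\Box(Qa_{i,j}a_{i,j+1}\land Pa_{i,j+1}b)$ at the relevant deep world; instantiating $\psi$ with $x:=a_{i,j}$, $z':=a_{i,j+1}$, $y:=b$, its consequent $\Box\forall z\,(Pa_{i,j}z\to Qzb)$ forces $Qa_{i+1,j}b$. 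Hence $a_{i+1,j+1}:=b$ serves simultaneously as a $P$-successor of $a_{i,j+1}$ and as a $Q$-successor of $a_{i+1,j}$, closing the induction. What remains is essentially bookkeeping: verifying that all the modal depths line up (for which $\children^3$ and the layered $\Box\Box$ prefixes are sized exactly right), and that $\phi_H,\phi_V$ together with the increasing-domain condition deliver the uniformity assumed throughout the construction.
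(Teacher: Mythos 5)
Your proof is correct and follows essentially the same route as the paper's: the same explicit chain-plus-grid model for the forward direction, and for the converse the same inductive construction of grid representatives from the $\alpha_H$/$\alpha_V$ witnesses, with the diagonal closed by instantiating $\psi$ on the $Q$-then-$P$ path and upgrading the resulting $\Diamond$-fact to a $\Box$-fact via $\phi_V$. The only differences are cosmetic (a self-loop instead of a fourth chain world, and a square-by-square induction rather than the paper's column-then-rows bookkeeping).
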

\begin{proof} $(\Rightarrow)$
Suppose $f: (\nat\times\nat) \to T$ is the tiling function. Then define the  model $\M = (\W,\D,\R,\val)$ where
\begin{itemize}
\item $\W = \{ w_0,w_1,w_2,w_3\}.$
\item $\D = \nat\times\nat$. \qquad \quad   %\noteYWT{In graph: $\xymatrix@C-15pt {w_0 \ar[r]& w_1\ar[r] & w_2\ar[r] & w_3}$}
\item $\R = \{ (w_k,w_{k+1}) \mid 0\le k <3\}$
\item Define $\val$ as follows (for all $k\le 3$):\\
$\val(w_k,P) = \Big\{ \Big((i,j),~(i+1,j)\Big)\mid i,j\in \nat\Big\}$\\
$\val(w_k,Q) = \Big\{\Big((i,j),~(i,j+1)\Big) \mid i,j\in \nat\Big\}$\\
For every $t\in T$ define $\val(w_k,t) = \Big\{ (i,j) \mid f(i,j) = t\Big\}$
\end{itemize}
Valuation is important only at $w_3$ (valuations at other worlds are irrelevant). It can be verified that $\M,w_0\models \alpha_\T$.

\bigskip
$(\Leftarrow)$~ Suppose $\M,r\models \alpha_\T$. We fix an enumeration of the elements in the domain of $\M$ since it is countable (cf. footnote \ref{ft.LS}). Below, every time we pick a domain element satisfying a property, we pick the one that is least in the enumeration of domain elements satisfying that property. 
%With this observation, we can remove the qualification for $d_0$ and witnesses below.}

 Let $\alpha_0', \alpha_H',\alpha_V',\alpha_{Hs}',\alpha_{Vs}',\phi_H',\phi_V',\psi'$ be the respective formulas where the outermost modality is removed. Then by $\alpha_0$ there is some $r\to u$ such that\\  $\M,u\models \alpha_0'\land  \alpha_H'\land \alpha_V'\land \alpha_{Hs}'\land \alpha_{Vs}'\land \phi_H'\land \phi_V'\land \psi'$.

Define a mapping $g: (\nat \times \nat) \to \live(u)$ by induction as follows: let $d_0\in \live(u)$ such that $\M,u\models \Box\Box \onlyT(t_0,d_0)$. Define $g(0,0) = d_0$.
 
 For all $j > 0$ if $g(0,j-1) = c$ then let $d\in  \live(u)$ be a  witness for $x_2$ with the assignment $x\mapsto c$ in $\alpha_V$. So, we have $\M,u\models \Box\Box Qcd$. Define $g(0,j) = d$.
 
 For all $i,j > 0$ if $g(i-1,j) = c$ then let $d\in \live(u)$ be a  witness for $x_1$ with the assignment $x\mapsto c$ in $\alpha_H$. So, we have $\M,u\models \Box\Box Pcd$. Define $g(i,j) = d$.

Intuitively, we induce a grid over  $\live(u)$, by first building the $y$-axis of the first quadrant and then build horizontal lines fixing each $y$-coordinate in such a way that the vertically adjacent elements are also connected to satisfy the `grid property'.

Note that for all $i,j$ if $g(i,j) = c$ and $g(i+1,j) = c'$ then by definition, we have $\M,u\models \Box\Box Pcc'$. First we prove that if $g(i,j) = c$ and $g(i,j+1) = d$ then we  have $\M,u\models \Box\Box Qcd$. This is proved by induction on $i$.
 
 In the base case, $i=0$ and if $g(0,j) = c$ and $g(0,j+1) = d$ then  by construction we have $\M,u\models \Box\Box Qcd$.
 Now, for the induction step,  consider some $i>0$ and let $g(i,j)  = c$ and $g(i,j+1) = d$. Let $g(i-1,j) = a$ and $g(i-1,j+1) = b$.  By induction hypothesis, $\M,u\models \Box \Box Qab$. By construction we also have $\M,u\models \Box\Box Pac$ and $\M,u\models \Box\Box Pbd$.  \\
  We will prove that $\M,u\models \Box\Box Qcd$. For this, pick any successor $u\to v$. We will prove that $\M,v\models \Box Qcd$. Note that we have $M,v\models \Box Pac$ and $\M,v\models \Box Pbd$ and  $\M,v\models \Box Qab$.

 Now since $\M,v\models \children^1$, there is some $v\to w$  and  we have
  $\M,w\models Pac \land Pbd \land Qab$. Further,  by $\psi$ (assigning $x$ to $a$):  $\M,v\models \forall y \Big( [\exists z'\Box (Qaz' \land P z'y)] \implies [\Box \forall z (Paz \implies Qzy)] \Big)$. 
  By assigning $y$ to $d$ we have
  $\M,v\models [\exists z'\Box (Qaz' \land Pz'd)] \implies [\Box \forall z (Paz \implies Qzd) ]$.\\
   But note that we have $\M,v\models \Box\Big( Qab \land Pbd\Big)$. 
   Hence, $\M,v\models \Box \forall z (Paz \implies Qzd) $.
 
 Since $v\to w$, we have $\M,w\models \forall z (Paz\implies Qzd)$. Now, by assigning $z$ to $c$ we have $\M,w\models Pac$ and hence we have $\M,w\models Qcd$.  This implies that $\M,v\models \Diamond Qcd$. Now by $\phi_V$ we have $\M,v\models \Diamond Qcd \leftrightarrow \Box Qcd$. Hence $\M,v\models \Box Qcd$ as required.

\bigskip

The tiling function $f: (\nat\times \nat) \to T$ is defined as follows: For every $(i,j) \in (\nat\times \nat)$, suppose $g(i,j) = d$ then since $\M,u\models \forall x \Big\{\bigvee\limits_{t\in T} \Box\Box \onlyT(t,x) \Big\}$ there is some unique $t\in T$ such that $\M,u\models \Box \Box t(d)$. Define $f(i,j) = t$.

To prove that $f$ is a proper tiling function, first note that with $g(i,j) = d_0$ we have $f(0,0) = t_0$ since $\M,u\models \Box\Box \onlyT(t_0,d_0)$. Now pick some $i,j$ and let $g(i,j) = c$ and $f(i,j) = t$ . We verify that $f$ satisfies horizontal and vertical tiling constraints.

  Let $f(i+1,j) = t'$ and $g(i+1,j) = d$ be the horizontal successor. 
  By construction, $\M,u\models \Box\Box Pcd$ and $M,u\models \Box\Box \Big( t(c) \land t'(d) \Big)$. Since $\M,u\models \children^2$, there is some $u\to v\to w$, and hence we have $\M,w\models Pcd \land t(c) \land t'(d)$. But then by $\alpha_{Hs}$ we have $\M,u\models \Box\Box \Hsuc(c,d)$ and hence $(t,t')\in H$.

 Similarly for vertical constraints, suppose $g(i,j+1) = d$ and $f(i,j+1) = t'$ then note that we have already proved that in this case $\M,u\models \Box\Box Qcd$. Again, we  have $\M,u\models \Box\Box \Big( t(c) \land t'(d) \Big)$. Since $\M,u\models \children^2$, there is some $u\to v\to w$ and hence we have $\M,w\models Qcd \land t(c) \land t'(d)$. But then by $\alpha_{Vs}$ we have $\M,v\models \Box\Box \Vsuc(c,d)$ and hence $(t,t') \in V$.
\end{proof}

\begin{corollary}
Let $\mathcal{L}$ be any fragment of $\FOML$ such that $\EBBA\subseteq \mathcal{L}$. Then the satisfiability problem for $\mathcal{L}$ over increasing domain models is undecidable.
\end{corollary}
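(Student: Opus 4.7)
The plan is to leverage Theorem \ref{thm-EBBA undecidability} directly as a black box. That theorem supplies a computable map $\T \mapsto \alpha_\T$ from tiling instances to $\FOML$-sentences such that $\T$ admits a proper tiling of $\nat \times \nat$ iff $\alpha_\T$ is satisfiable in an increasing domain model. Since the tiling problem over the first quadrant is undecidable, this map already witnesses the undecidability of satisfiability for any syntactic superset of $\EBBA$, provided we can certify that every $\alpha_\T$ is an $\mathcal{L}$-formula.

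First I would perform the syntactic audit that $\alpha_\T \in \EBBA$. Inspecting Figure \ref{Fig-BA+EB undec formulas}, every top-level modal or quantifier construct is either a plain $\Box$ (or its dual $\Diamond$), one of the bundles $\existsBox{x}$, $\boxForall{x}$, or one of their duals $\forallDiamond{x}$, $\diamondExists{x}$. For instance, $\alpha_0$ begins with $\diamondExists{x_0}$ followed by plain boxes; $\alpha_H,\alpha_V$ use $\boxForall{x}$ on the outside and $\existsBox{x_i}$ inside; $\alpha_{Hs},\alpha_{Vs}$ and $\phi_H,\phi_V$ only stack $\boxForall{\cdot}$ with plain $\Box$; and the most involved formula $\psi$ nests $\boxForall{x}$, $\boxForall{y}$, $\existsBox{z'}$, and $\boxForall{z}$, each of which is a permitted bundled operator. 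Hence the conjunction $\alpha_\T$ lies in $\EBBA$, and since $\EBBA \subseteq \mathcal{L}$ by hypothesis, $\alpha_\T$ is also a formula of $\mathcal{L}$.

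Next I would note that satisfiability in an increasing domain model is a purely semantic notion: whether $\alpha_\T$ has a model does not depend on which syntactic fragment we happen to view it as belonging to. Consequently the same map $\T \mapsto \alpha_\T$ is a many-one reduction from the tiling problem to the satisfiability problem for $\mathcal{L}$ over increasing domain models, and undecidability of the former transfers to the latter.

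The main (and essentially only) obstacle is the syntactic check in the second paragraph: one must confirm that no fragment of $\alpha_\T$ secretly employs a quantifier or modal pattern outside the $\EBBA$ grammar, with particular attention to the dualized bundles appearing under negation and to the nested quantifier--modality alternations in $\psi$. Once this bookkeeping is done, the corollary is immediate and requires no further model-theoretic work.
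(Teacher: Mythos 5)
Your proposal is correct and matches the paper's (implicit) argument exactly: the paper states this corollary without proof, treating it as immediate from Theorem \ref{thm-EBBA undecidability} via precisely the observation you make, namely that the reduction formula $\alpha_\T$ is syntactically an $\EBBA$-formula (your audit of Figure \ref{Fig-BA+EB undec formulas} is accurate) and that satisfiability is fragment-independent, so the same many-one reduction from tiling works for any $\mathcal{L} \supseteq \EBBA$. Nothing further is needed.
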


\subsection{$\ABEBBE$ over increasing domain}

In this fragment we have the bundled operators $\forall x \Box~\alpha$, $\exists x\Box~\alpha$ and $\Box\exists x~\alpha$ (and their duals). Hence, in this case also we can express $\forall x \exists y \Box \alpha$ and $\forall x \Box\forall y \Box\forall z \alpha$ as $\Diamond \forall x \exists y \Box \alpha$ and $\forall x \Box \forall y \Box \forall z \alpha$ respectively. Note that the former formula is a combination of $\BE$ and $\EB$ formulas and the later is an $\AB$ formula.

The tiling encoding formulas are described in Fig. \ref{Fig-AB+EB+BE undec formulas} which are the modification of the formulas in Fig. \ref{Fig-BA+EB undec formulas} to suit the $\ABEBBE$ fragment. 

\begin{figure}[h]
  \centering
  \footnotesize
 \begin{tabular}{|rrl | }
\hline
$\hat{\alpha_0}:=$&$\Box \exists x_0\Big\{$&$ \Box \Box\big( \onlyT(t_0,x_0)\big)\Big\} \land \children^3\top$\\
&&\\ \hline
&$\Diamond \forall x \Big\{$& \\
$\hat{\hat{\alpha}}_0:=$&&$ \bigvee\limits_{t\in T} \Box\Box \onlyT(t,x)~\land$ \\
$\hat{\alpha}_H:=$&&$ \exists x_1 \Box \Big(\Box Pxx_1 \big) ~\land$ \\
$\hat{\alpha}_V:=$&&$ \exists x_2 \Box \Big(\Box Qxx_2 \Big) ~\land$ \\

$\hat{\alpha}_{Hs} :=$&&$\forall y  \Box  \Big(\Box \Hsuc(x,y) \Big)~\land $ \\
$\hat{\alpha}_{Vs} :=$&&$  \forall y\Box \Big( \Box \Vsuc(x,y) \Big) ~\land$ \\

$\hat{\phi}_H:=$&&$   \forall y\Box \Big(  \Diamond Pxy \leftrightarrow \Box Pxy \Big)~\land $ \\
$\hat{\phi}_V:=$&&$ \forall y\Box \Big(  \Diamond Qxy \leftrightarrow \Box Qxy \Big) ~\land$ \\

$\hat{\psi}:=$&&$  \forall y \Box  \Big( [\exists z'\Box (Qxz' \land P z'y) ]\implies [\forall z\Box  (Pxz \implies Qzy)] \Big)$ \\
&$ \Big\}$& \\
\hline
\end{tabular}

  \caption{$\ABEBBE$ formulas to encoding the tiling instance over increasing domain models}
  \label{Fig-AB+EB+BE undec formulas}
\end{figure}

For a given tiling instance $\T$,  the corresponding  formula $\hat{\alpha}_T$ is given by the conjunction of the above formulas.

\begin{theorem}
\label{thm-ABEBBE undecidability}
The satisfiability problem for $\ABEBBE$ fragment over increasing domain models is undecidable.
\end{theorem}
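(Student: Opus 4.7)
The plan is to closely parallel the proof of Theorem \ref{thm-EBBA undecidability}, establishing that $\T$ has a proper tiling over $\nat \times \nat$ iff $\hat{\alpha}_\T$ is satisfiable in an increasing domain model. The formulas in Figure \ref{Fig-AB+EB+BE undec formulas} express the same grid constraints as the $\EBBA$ encoding but are rewritten to fit the $\ABEBBE$ syntax: the outer universal assertion over domain elements is realised by $\Diamond \forall x$ (the $\BE$-dual applied to a single $\R$-successor) instead of $\Box \forall x$, while the inner universals over $y$ and $z$ are written as $\forall y \Box$ and $\forall z \Box$ (both $\AB$-formulas) instead of $\Box \forall y$ and $\Box \forall z$.

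For the $(\Rightarrow)$ direction, I would reuse the same four-world chain model $\M$ from the $\EBBA$ proof, with $\D = \nat \times \nat$, interpreting $P$ and $Q$ as horizontal and vertical adjacency and each $t \in T$ via $f$. Every conjunct of $\hat{\alpha}_\T$ is verified at $w_0$ by essentially the same checks as before; the only new point is that $\Diamond \forall x \{\ldots\}$ holds at $w_0$ via the successor $w_1$, since each inner clause is true for every assignment $x \mapsto (i,j) \in \live(w_1)$.

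For $(\Leftarrow)$, assume $\M, r \models \hat{\alpha}_\T$. The clause $\hat{\hat{\alpha}}_0$ selects a single successor $u$ of $r$ at which the big inner conjunction holds for every $x \in \live(u)$. Since $\hat{\alpha}_0$ is $\Box$-prefixed, at this same $u$ we obtain some $d_0 \in \live(u)$ with $\M, u \models \Box\Box \onlyT(t_0, d_0)$, and set $g(0,0) = d_0$. The grid $g \colon \nat \times \nat \to \live(u)$ is then built inductively using the witnesses of $\exists x_1$ in $\hat{\alpha}_H$ and $\exists x_2$ in $\hat{\alpha}_V$; these lie in $\live(u)$ because the existentials are evaluated at $u$. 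Vertical propagation follows the same induction on $i$ as before: given $g(i-1,j) = a$, $g(i,j) = c$, $g(i-1,j+1) = b$, $g(i,j+1) = d$ with $\M, u \models \Box\Box (Pac \land Pbd \land Qab)$, instantiate $\hat{\psi}$ with $x \mapsto a$ and $y \mapsto d$ at $u$, move to any successor $v$, witness the antecedent by $z' = b$ (legal since $b \in \live(u) \subseteq \live(v)$), and instantiate the consequent with $z \mapsto c$ to get $\M, v \models \Diamond Qcd$; by $\hat{\phi}_V$ this upgrades to $\Box Qcd$. The tiling function $f(i,j) := t$ where $\M, u \models \Box\Box\, t(g(i,j))$ is well-defined by the disjunction in $\hat{\hat{\alpha}}_0$, and horizontal and vertical constraints follow from $\hat{\alpha}_{Hs}$ and $\hat{\alpha}_{Vs}$ together with $\children^3$ supplying enough depth.

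The hard part will be verifying that relocating the quantifier scopes does not weaken the uniformity needed at the witness world $u$. Specifically, one must check that binding $y$ at $u$ and $z$ at $v$ via $\forall y \Box$ and $\forall z \Box$, rather than one modal step later as in the $\EBBA$ case, still delivers the correct grid assignments when $\hat{\psi}$'s implication is unpacked, and that the existential witness $z' = b$ for $\exists z' \Box$ evaluated at $v$ remains available because of $\live(u) \subseteq \live(v)$. Once these anchoring issues are settled, the rest is a mechanical adaptation of the $\EBBA$ argument and undecidability follows by reduction from the tiling problem.
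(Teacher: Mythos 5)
Your proposal is correct and follows essentially the same route as the paper, which itself proves this theorem by reusing the model and the grid/diagonal argument from Theorem \ref{thm-EBBA undecidability}, with the quantifier scopes relocated to fit $\ABEBBE$ syntax. The "anchoring" concern you flag is resolved exactly as you indicate: all relevant witnesses lie in $\live(u)$ and remain available at descendants by the increasing-domain condition, so the instantiations of $\hat{\psi}$, $\hat{\phi}_V$, and the existential $z'=b$ go through unchanged.
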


\begin{proof}
 $(\Rightarrow)$ It can be verified that $\M,w_0 \models \hat{\alpha}_T$ where $\M$ is the model described in the proof of Theorem \ref{thm-EBBA undecidability}.

$(\Leftarrow)$   Suppose $M,r\models \hat{\alpha}_\T$. Then there is some $r\to u$ such that

 \begin{tabular}{rrl  }

&$ \exists x_0\Big\{$&$ \Box \Box\big( \onlyT(t_0,x_0)\big)\Big\} \land \children^2\top \land$\\

$\M,u\models$&$ \forall x \Big\{$& \\
&&$ \bigvee\limits_{t\in T} \Box\Box \onlyT(t,x)~\land$ \\
&&$ \exists x_1 \Box \Big(\Box Pxx_1 \big) ~\land$ \\
&&$ \exists x_2 \Box \Big(\Box Qxx_2 \Big) ~\land$ \\

&&$\forall y  \Box  \Big(\Box \Hsuc(x,y) \Big)~\land $ \\
&&$  \forall y\Box \Big(\Box \Vsuc(x,y) \Big) ~\land$ \\

&&$   \forall y\Box \Big(  \Diamond Pxy \leftrightarrow \Box Pxy \Big)~\land $ \\
&&$ \forall y\Box \Big(  \Diamond Qxy \leftrightarrow \Box Qxy \Big) ~\land$ \\

&&$  \forall y \Box  \Big( [\exists z'\Box (Qxz' \land P z'y) ]\implies [\forall z\Box  (Pxz \implies Qzy)] \Big)$ \\
&$ \Big\}$& \\

\end{tabular}

Now define a mapping $g: (\nat \times \nat) \to \live(u)$ by induction as follows: let $d_0\in \live(u)$
 such that $\M,u\models \Box\Box \onlyT(t_0,d_0)$. Define $g(0,0) = d_0$.
 
 For all $j > 0$ if $g(0,j-1) = c$ then let $d\in  \live(u)$ be a  witness for $x_2$ with the assignment $x\mapsto c$ in $\hat{\alpha}_V$. So, we have $\M,u\models \Box\Box Qcd$. Define $g(0,j) = d$.
 
 For all $i,j > 0$ if $g(i-1,j) = c$ then let $d\in \live(u)$ be a  witness for $x_1$ with the assignment $x\mapsto c$ in $\hat{\alpha}_H$. So, we have $\M,u\models \Box\Box Pcd$. Define $g(i,j) = d$.
 
\medskip
Note that for all $i,j$ if $g(i,j) = c$ and $g(i+1,j) = c'$ then we have $\M,u\models \Box\Box Pcc'$. Now we claim that if $g(i,j+1) = d$ then using the same reasoning as in the proof of Theorem \ref{thm-EBBA undecidability} we can argue that  $\M,u\models \Box\Box Qcd$. 
Now define the tiling function $f: (\nat\times \nat) \to T$ where for every $(i,j) \in (\nat\times \nat)$, suppose $g(i,j) = d$ then since $\M,u\models \forall x \Big\{\bigvee\limits_{t\in T} \Box\Box \onlyT(t,x) \Big\}$ there is some unique $t\in T$ such that $\M,u\models \Box \Box t(d)$. Define $f(i,j) = t$.  
 
Again, using the same arguments as in the proof of Theorem \ref{thm-EBBA undecidability}, we can verify that $f$ is indeed a proper tiling function.
\end{proof}

\section{Fragments without Finite Model Property}
\label{sec-noFmp}

In this section we consider the fragments that lack  Finite Model Property, or rather, that they can force models with infinite domains. Of course, this does not necessarily mean that the satisfiability problem for these fragments are undecidable but any strategy to prove decidability based on finite model property will fail. Abstractly speaking, these are the fragments in which we can express $\forall x \exists y~\Box\alpha$ but we cannot express $\forall x \Box\forall y \Box \forall z~\alpha$. 

The inability to express $\forall x \Box\forall y \Box \forall z~\alpha$ means that we cannot assert the `grid property' needed for the tiling encoding and consequently  we cannot prove undecidability.
On the other hand, using formulas of the form $\forall x \exists y~\Box \alpha$, we can induce an irreflexive-transitive ordering over the local domain at the world where the formula is satisfied. Then, stating that such an ordering does not have a maximal element will imply that the domain of the model has to be infinite. 

To encode the ordering we use a binary predicate $P$, with the intended meaning of $Pxy$ to mean $x<y$ in the ordering being defined.

\subsection{$\EBBE$ over increasing domain}

In this fragment we are allowed $\exists x\Box~\alpha$ and $\Box\exists x~\alpha$ (and their duals) in the syntax. Hence we can assert $\forall x\exists y~\Box \alpha$ in the form of $\Diamond \forall x\exists y\Box~\alpha$. But we cannot express $\forall x\Box\forall y\Box \forall z\alpha$. This is because every $\forall$ operator needs to necessarily have a $\Diamond$ either before or after its occurrence. 

We prove that this fragment lacks the finite model property.
The encoding formula is given as follows:

\begin{tabular}{l r l l l}
$\phi_1:=$&$  \Diamond\forall x \Big[ $&$~\exists y \Box\Box Pxy~\land \Box\Box \neg Pxx ~\land$\\ 
&&$\Diamond \forall y\Big(~\big[\Diamond Pxy\leftrightarrow \Box Pxy~\big]~\land$\\
&&$\quad\quad\quad\quad \Diamond \forall z\big[ \big(Pxy \land Pyz\big) \implies \big(Pxz\big) \big]\Big)~\Big]$
\end{tabular}

% {\small
% \begin{tabular}{l r l l l}
% $\phi_1:=$&$  \Diamond\forall x \Big[ $&$\Diamond \top~ \land ~\exists y \Box\Box\Box Pxy~\land\Box\Box\Box \neg Pxx ~\land$\\ 
% &&$\Box\Big\{\Diamond \forall y\Big(~\big[\Diamond Pxy\leftrightarrow \Box Pxy~\big]~\land$\\
% &&$\quad\quad\quad\quad \Diamond \forall z\big[ \big(Pxy \land Pyz\big) \implies \big(Pxz\big) \big]\Big)~\Big\}~\Big]$
% \end{tabular}
% }

The first conjunct asserts that every $x$ has a successor $y$ and the second conjunct asserts that the binary predicate $P$ is irreflexive. The last conjunct asserts that $P$ is transitive and the one before ensures that the relation $Pxy$ holds uniformly across all branches.
%The formula asserts that the binary predicate $P$ is irreflexive, transitive and serial. The shape of the formula is rather involved since it needs to be expressed in $\EBBE$ fragment.

\begin{theorem}
\label{thm-EBBE has no FMP}
The formula $\phi_1$ is satisfiable in a model with infinite $\D$. Moreover, for all increasing domain model $\M$ and $w\in \W$ if $\M,w\models \phi_1$ then there is some $w\to u$ such that $\live(u)$ is infinite.
\end{theorem}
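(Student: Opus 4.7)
The plan is to split the theorem into its two halves. For the satisfiability claim, I will exhibit an explicit model: take a linear chain of four worlds $w, u, v, v'$ with $\R = \{(w,u),(u,v),(v,v')\}$, constant (hence increasing) domain $\D = \nat$, and $\val(w'', P) = \{(i,j) \in \nat \times \nat : i < j\}$ at every world $w''$. Each conjunct of $\phi_1$ then reduces to a trivial property of the strict order on $\nat$: $\exists y \Box\Box Pxy$ is seriality, $\Box\Box \neg Pxx$ is irreflexivity, $\Diamond Pxy \leftrightarrow \Box Pxy$ is automatic because $P$ is interpreted identically at every world, and the innermost $\Diamond \forall z$-clause is transitivity of $<$. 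Thus $\M, w \models \phi_1$ while $\D$ is infinite.

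For the second claim, assume $\M, w \models \phi_1$ and let $u$ be a successor of $w$ witnessing the outer $\Diamond$. For each $a \in \live(u)$, the third conjunct supplies a successor $v_a$ of $u$ at which both the rigidity $\Diamond Pay \leftrightarrow \Box Pay$ (for every $y \in \live(v_a)$) and the local transitivity schema $\forall y\, \Diamond \forall z\, [(Pay \land Pyz) \implies Paz]$ hold. I will define $\mathrm{Above}_a := \{b \in \live(u) : \M, v_a \models \Box Pab\}$ and first establish two properties. Irreflexivity: $\Box\Box \neg Pxx$ at $u$ gives $\Box \neg Paa$ at $v_a$, and because the innermost $\Diamond$ inside the transitivity clause guarantees that $v_a$ has at least one successor, this forces $a \notin \mathrm{Above}_a$. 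Seriality: $\exists y\, \Box\Box Pxy$ at $u$ yields some $b \in \live(u)$ with $\Box Pab$ at $v_a$, that is, $b \in \mathrm{Above}_a$.

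The main work, and the step I expect to require most care, is the following closure lemma: if $b \in \mathrm{Above}_a$ and $\M, u \models \Box\Box Pbc$ with $c \in \live(u)$, then $c \in \mathrm{Above}_a$. The plan is to instantiate the transitivity clause at $v_a$ with $y = b$, obtaining a successor $w_b$ of $v_a$ at which $\forall z\, [(Pab \land Pbz) \implies Paz]$ holds; at $w_b$ both $Pab$ (from $\Box Pab$ at $v_a$) and $Pbc$ (from $\Box\Box Pbc$ at $u$, via $u \to v_a \to w_b$) are true, so $Pac$ follows at $w_b$, giving $\Diamond Pac$ at $v_a$ and then, by rigidity, $\Box Pac$ at $v_a$, i.e., $c \in \mathrm{Above}_a$. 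The crucial feature is that the whole argument happens at the single world $v_a$ attached to the fixed left endpoint $a$, so rigidity can be applied directly; trying to globalise the ordering via $\Box\Box Pac$ at $u$ would not obviously go through, and $\mathrm{Above}_a$ is the right substitute for a global $\prec$-relation.

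With this lemma in hand, the infinite-domain conclusion is short. Choose $a_0 \in \live(u)$ arbitrarily and, for each $i \geq 0$, let $a_{i+1} \in \live(u)$ be a seriality witness for $a_i$, so $\M, u \models \Box\Box Pa_i a_{i+1}$. An easy induction on $j - i$, applying the closure lemma at the fixed world $v_{a_i}$ with successive $b = a_{i+1}, a_{i+2}, \ldots$, shows that $a_j \in \mathrm{Above}_{a_i}$ for every $j > i$; combined with $a_i \notin \mathrm{Above}_{a_i}$, this makes the sequence $a_0, a_1, a_2, \ldots$ pairwise distinct, exhibiting an infinite subset of $\live(u)$ and completing the proof.
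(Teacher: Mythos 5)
Your proof is correct and follows essentially the same route as the paper's: the same sequence of seriality witnesses in $\live(u)$, with distinctness obtained from the same core step (instantiating the transitivity clause at a successor of the $x$-witness world $v_a$ and then upgrading $\Diamond Pac$ to $\Box Pac$ via the rigidity biconditional); packaging this as a closure property of $\mathrm{Above}_a$ and running a forward induction instead of the paper's proof by contradiction is only a cosmetic reorganization. One small bonus: your four-world chain for the satisfiability half is actually needed --- the paper's three-world chain $w_0\to w_1\to w_2$ cannot satisfy the modal-depth-$3$ subformula $\Diamond\forall y\big(\cdots\land\Diamond\forall z[\cdots]\big)$ at $w_1$, since $w_2$ has no successor to witness the innermost $\Diamond$ (and the biconditional $\Diamond Pxy\leftrightarrow\Box Pxy$ would also fail at a dead end).
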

\begin{proof}
Consider the model $\M=(\W,\nat,\R,\live,\val)$ where:\\
 $\W = \{w_0,w_1,w_2\}$ and $\R = \{ (w_j,w_{j+1}) \mid j<2\}$. For all $j\le 2$ define $\live(w_j) = \nat$ and $\val(w_j,P) = \{ (k,l) \mid k,l\in \nat$ and $k<l\}$.
Note that only $\val(w_2,P)$ is relevant. It can be verified that $\M,w_0\models \phi_1$. 

Now assume that $\M,r\models \phi_1$. Then  there exists some $r\to u$ such that

\begin{tabular}{l r l l l}
$\M,u\models$&$ \forall x \Big[ $&$~\exists y \Box\Box Pxy~\land \Box\Box \neg Pxx ~\land$\\ 
&&$\Diamond \forall y\Big(~\big[\Diamond Pxy\leftrightarrow \Box Pxy~\big]~\land$\\
&&$\quad\quad\quad\quad \Diamond \forall z\big[ \big(Pxy \land Pyz\big) \implies \big(Pxz\big) \big]\Big)~\Big]$
\end{tabular}

% {\small
% \begin{tabular}{l r l l l}
% $\M,u\models $&$\forall x \Big[ $&$\Diamond \top~ \land ~\exists y \Box\Box\Box Pxy~\land\Box\Box\Box \neg Pxx ~\land$\\ 
% &&$\Box\Big\{\Diamond \forall y\Big(~\big[\Diamond Pxy\leftrightarrow \Box Pxy~\big]~\land$\\
% &&$\quad\quad\quad\quad \Diamond \forall z\big[ \big(Pxy \land Pyz\big) \implies \big(Pxz\big) \big]\Big)~\Big\}~\Big]$
% \end{tabular}
% }

We will prove that $\live(u)$ is infinite. Towards this we will construct an infinite sequence of distinct domain elements $d_0,d_1\ldots$ over $\delta(u)$. 
 The sequence is constructed by induction such that  for all $i$:
 $\quad\M,u,[x\mapsto d_i,y\mapsto d_{i+1}]\models \Box\Box Pxy$
 
 In the base case let since $\live(u)$ is non-empty, pick some arbitrary $d_0\in \live(u)$.\\
 For the induction step, suppose we have constructed $d_0,\ldots d_i$. 
   Let $d_{i+1}$ be such that $\M,u,[x\mapsto d_i,y\mapsto d_{i+1}]\models \Box\Box Pxy$ (such an element exists since $\M,u\models \forall x \exists y \Box\Box Pxy$).
 
 We claim that $d_{i+1}$ is distinct from $d_0\ldots d_i$. Suppose not; then let $d_{i+1} = d_k$ for some $k \le i$. Let $k+l = i$ for some $l\ge 0$.  
 
Now, with the assignment $x \mapsto  d_{i}$, we have

\begin{tabular}{l  l l l}
$\M,u\models$&$\Diamond \forall y\Big(~\big[\Diamond Pd_iy\leftrightarrow \Box Pd_iy~\big]~\land$\\
&$\quad\quad\quad \Diamond \forall z\big[ \big(Pd_iy \land Pyz\big) \implies \big(Pd_iz\big) \big]\Big)$
\end{tabular}

Hence there is some $u\to w$ such that

\begin{tabular}{l  l l l}
$\M,w\models $&$ \forall y\Big(~\big[\Diamond Pd_iy\leftrightarrow \Box Pd_iy~\big]~\land$\\
&$\quad\quad\Diamond \forall z\big[ \big(Pd_iy \land Pyz\big) \implies \big(Pd_iz\big) \big]\Big)$
\end{tabular}

  By induction hypothesis,  $\M,w\models \Box Pd_jd_{j+1}$ for all $j\le i$ and by assumption we have $\M,w\models \Box Pd_id_k$  (since $d_{i+1} = d_k$). 
  
  Now with the assignment $y\mapsto d_k$ there is some $w\to w^k$ such that
  $\M,w^{k} \models \forall z~( Pd_id_k \land Pd_k z \implies Pd_iz)$. In particular with $z \mapsto d_{k+1}$, we obtain $\M,w^{k} \models Pd_id_{k+1}$.  
  This implies that $\M,w\models \Diamond Pd_id_{k+1}$.     
  
  Now since $\M,w \models \forall y \big(\Diamond Pd_iy\leftrightarrow \Box Pd_iy\big)$, we have $\M,w \models \Box Pd_id_{k+1}$.     
 But again, with  $y\mapsto d_{k+1}$ there is some  $w\to w^{k+1}$ be such that $\M,w^{k+1} \models \forall z~( Pd_id_{k+1} \land Pd_{k+1}z \implies Pd_iz~)$. In particular with $z \mapsto d_{k+2}$, we have $\M,w^{(k+1)} \models Pd_id_{k+2}$. This implies that $\M,w\models \Diamond Pd_id_{k+2}$ from which we can conclude  that $\M,w\models \Box Pd_id_{k+2}$.
  
 Applying this argument $l$ times, we get $\M,w\models \Diamond Pd_id_{k+l}$. Hence we have $\M,w\models \Diamond Pd_id_i$ (since $k+l = i$) which is a contradiction to $\M,u\models \forall x~ \Box\Box \neg Pxx$.
\end{proof}

\subsection{$\forall\exists\Box$ fragment over increasing domain}

Consider any fragment that can express $\forall x\exists y \Box \alpha$ formulas. We  prove that such a fragment does not have finite model property. For instance, any extension of the quantifier prefix of $\LBF$ fragment will allow us to express   $\forall x\exists y\Box~\alpha$ formulas. Similarly the negation closed extension of $\LBF$ also allows us to express $\forall x\exists y\Box~\alpha$.

Note that the ability to express $\forall x\exists y\Box~\alpha$ implies that we can also write formulas of the form $\forall x \Box\alpha$ (by making $y$ as a dummy variable) and also $\Box\alpha$ (by making both $x,y$ as dummy variables). The encoding formula is given as follows:

\begin{tabular}{l l}
$\phi_2~:=$&$ \forall x \exists y \Box\big( \Box \Box Pxy\big) ~\land~ \forall x \Box \big(\Box\Box \neg Pxx\big) ~ \land$\\
&$ \forall x \Box \forall y \Box\forall z\Box \big(Pxy \land Pyz \implies Pxz \big) \land \Diamond\Diamond\Diamond \top$
\end{tabular}

The formula $\phi_2$ is essentially the same as $\phi_1$ only modified to ensure that it can be written using only $\forall x\exists y \Box\alpha$ formulas.

\begin{theorem}
\label{thm-AEB has no FMP}
The formula $\phi_2$ is satisfiable in a model with infinite $\D$. Moreover, for all increasing domain models $\M$ and $w\in \W$, if $\M,w\models \phi_2$ then $\live(w)$ is infinite.
\end{theorem}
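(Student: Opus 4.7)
The satisfiability part is immediate by re-using (a slight expansion of) the model from Theorem~\ref{thm-EBBE has no FMP}: take $\W=\{w_0,w_1,w_2,w_3\}$ with $w_i\R w_{i+1}$, $\live(w_i)=\nat$ for all $i$, and $\val(w_i,P)=\{(k,l)\mid k<l\}$. The $\Diamond\Diamond\Diamond\top$ conjunct is realised by the three-step $\R$-path from $w_0$, irreflexivity and transitivity are easy at $w_3$, and for every $x\mapsto d$ the witness $y\mapsto d{+}1$ works uniformly at the end of every three-step path. So $\M,w_0\models\phi_2$.

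For the forcing part, suppose $\M,w\models\phi_2$. The crucial move is to pull all the semantics down to a single three-step descendant: by the $\Diamond\Diamond\Diamond\top$ conjunct, fix a path $w\R u\R v\R t$. The plan is to build by induction an infinite injective sequence $d_0,d_1,\ldots$ of elements of $\live(w)$ such that at $t$ the predicate $P$ realises the strict order $d_0<d_1<\cdots$.

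Pick any $d_0\in\live(w)$ (which is non-empty by definition). Given $d_i\in\live(w)$, apply $\forall x\exists y\Box\Box\Box Pxy$ at $w$ with $x\mapsto d_i$: there is $d_{i+1}\in\live(w)$ with $\M,w\models \Box\Box\Box P(d_i,d_{i+1})$, so in particular $\M,t\models P(d_i,d_{i+1})$. It remains to argue that the $d_i$ are pairwise distinct. I would establish two facts at the terminal world $t$:
\begin{itemize}
\item \emph{Irreflexivity at $t$:} for every $d\in\live(w)$, the conjunct $\forall x\Box\Box\Box\neg Pxx$ gives $\M,t\models\neg P(d,d)$.
\item \emph{Transitivity at $t$, restricted to $\live(w)$:} for any $a,b,c\in\live(w)$, instantiating the third conjunct step by step along $w\R u\R v\R t$ (using $x\mapsto a$ at $w$, then $y\mapsto b$ at $u$, which is legitimate since $a,b\in\live(w)\subseteq\live(u)$, then $z\mapsto c$ at $v$, again using monotonicity of $\live$) yields $\M,t\models P(a,b)\land P(b,c)\implies P(a,c)$.
\end{itemize}

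From these two facts, a routine induction shows $\M,t\models P(d_i,d_j)$ for every $i<j$, and hence $d_i\neq d_j$ by irreflexivity. Thus $\{d_0,d_1,\ldots\}\subseteq\live(w)$ is infinite.

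The only delicate point I expect is bookkeeping for the nested quantifier--modality alternation in the transitivity clause: one must check at each peeling step that the chosen witness still lies in the current local domain. This is handled transparently by the monotonicity $\live(w)\subseteq\live(u)\subseteq\live(v)\subseteq\live(t)$ built into the increasing-domain semantics, so no real obstacle arises; the rest is a pigeonhole-free direct argument via transitivity and irreflexivity, strictly simpler than the corresponding step in the proof of Theorem~\ref{thm-EBBE has no FMP} because here the existential witnesses are chosen at $w$ itself rather than at a successor.
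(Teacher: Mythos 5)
Your proposal is correct and follows essentially the same route as the paper's proof: the same witness model for satisfiability, and for the forcing part the same use of the three conjuncts to obtain, at a fixed world three steps away, an irreflexive relation on $\live(w)$ that is transitive on triples from $\live(w)$, whence the chain $d_0,d_1,\ldots$ must be injective. The only (harmless) difference is organizational — you first establish $P(d_i,d_j)$ for all $i<j$ at the terminal world and then invoke irreflexivity, whereas the paper derives the contradiction inside the inductive step by iterating transitivity $l$ times after assuming $d_{i+1}=d_j$.
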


\begin{proof}
Consider the model $\M=(\W,\nat,\R,\live,\val)$ where:\\
 $\W = \{w_0,w_1,w_2,w_3\}$ and $\R = \{ (w_j,w_{j+1}) \mid j<3\}$. For all $j\le 3$ define $\live(w_j) = \nat$ and $\val(w_j,P) = \{ (k,l) \mid k,l\in \nat$ and $k<l\}$.
Note that only $\val(w_3,P)$ is relevant. It is easy to verify that $\M,w_0\models \phi_2$.

Now assume that $\M,w\models \phi_2$. We will construct an infinite sequence of distinct domain elements $d_0,d_1\ldots$ over $\live(w)$ and this proves that $\live(w)$ is infinite. 
 The sequence is constructed by induction. We maintain the invariant that for all $i$ we have $\M,w,[x\mapsto d_i, y\mapsto d_{i+1}]\models \Box\Box\Box Pxy$.

 In the base case, since $\live(w)$ is non-empty, pick some arbitrary $d_0\in \live(w)$.
 For the induction step, suppose we have constructed $d_0,\ldots d_i$. 
   Let $d_{i+1}$ be such that $\M,w,[x\mapsto d_i, y\mapsto d_{i+1}]\models \Box\Box\Box Pxy$ \Big(such an element always exists since $\M,w\models \forall x \exists y \Box\Box\Box Pxy$\Big).

  Now we claim that $d_{i+1}$ is distinct from $d_0\ldots d_i$. Suppose not then let $d_{i+1} = d_j$ for some $j \le i$. Let $j+l = i$ for some $l\ge 0$.  Since $\M,w\models \Diamond\Diamond\Diamond\top$, there exists $w\to w'\to u\to v$. Also, since $\M,w\models \forall x \Box \forall y \Box \forall z \Big( Pxy\land Pyx \implies Pxz)$, it follows that for all $a,b,c\in \live(w)$ we have $\M,v\models (Pab\land Pbc) \implies Pac$.
  
  From the construction of the sequence, we have $\M,v\models Pd_id_{i+1} \land Pd_jd_{j+1}$ and hence we have $\M,v\models Pd_id_{j+1}$. But then again since $\M,v,\models Pd_{j+1}d_{j+2}$ we have $Pd_id_{j+2}$. Repeating this for $l$ times, we get $\M,v\models Pd_id_{i}$ but this is a contradiction to $\M,w \models \forall x\Big(\Box\Box\Box \neg Pxx\Big)$.

\end{proof}

\subsection{$\BE$ over constant domain}

When we consider  constant domain models, we can relax the condition that $\forall x\exists y \Box\alpha$ quantification has to be over the same domain. Since the local domain at every world is the same, all quantifiers at all worlds are evaluated over the same domain.  Hence, in $\BE$, using the $\Box\exists x~\alpha$ operator we can write formula of the form $\Diamond \forall x \Box \exists \alpha$.  Now we prove that this is sufficient to prove the lack of finite model property. The encoding formulas are given as follows:

{\small
\begin{tabular}{l r l l l}
$\phi_3:=$&$  \Diamond\forall x \Big[ $&$ \Box\exists y~\Box\Box Pxy~\land\Box\Box\Box \neg Pxx ~\land$\\ 
&&$\Diamond\Big\{\Diamond \forall y\Big(~\big[\Diamond Pxy\leftrightarrow \Box Pxy~\big]~\land$\\
&&$\quad\quad\quad\quad \Diamond \forall z\big[ \big(Pxy \land Pyz\big) \implies \big(Pxz\big) \big]\Big)~\Big\}~\Big]$
\end{tabular}
}

The formula is similar to the formula $\phi_1$, except that we need to insert one more $\Box$ as the padding due to the fact that we can only introduce an existential quantifier after a $\Box$ in $\BE$.
%for the initial $\exists y\Box$ is changed to $\Box\exists y$ to suit the fragment.

\begin{theorem}
\label{thm-BE over constant domain has no FMP}
The formula $\phi_3$ is satisfiable in a constant domain model with infinite $\D$. Moreover, for all constant domain model $\M$ and $w\in \W^M$ if $\M,w\models \phi_3$ then $\D$ is infinite.
\end{theorem}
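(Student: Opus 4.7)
The plan is to follow the pattern of Theorem~\ref{thm-EBBE has no FMP} almost verbatim, making two adjustments: (i) the witness now comes from a $\Box\exists y$ instead of an $\exists y$, so we must first commit to a specific successor to extract the witness, and (ii) the constant-domain assumption lets us reuse any chosen element at any world, which is what makes the $\Box\exists y$ form usable in the first place.

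\textbf{Forward direction.} For satisfiability, I would take the linear constant-domain model $\M = (\W, \nat, \R, \live, \val)$ with $\W = \{w_0, w_1, w_2, w_3, w_4\}$ and $\R$ the immediate successor relation, $\live(w_i) = \nat$ for every $i$, and $\val(w_i, P) = \{(k, l) \mid k < l\}$ at every $w_i$. (Five worlds suffice for the modal depth of $\phi_3$.) A direct unwinding, taking $y := x+1$ as witness for the $\Box\exists y$ clause and arbitrary $z > y$ in the transitivity subformula, shows $\M, w_0 \models \phi_3$.

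\textbf{Reverse direction.} Suppose $\M, w \models \phi_3$ in a constant-domain model $\M$. Let $u$ be the $\Diamond$-successor of $w$ at which the body of $\forall x[\,\cdots\,]$ holds. Because the domain is constant, $\live(v) = \D$ for every $v$, so we may freely instantiate any quantifier at any world with any element of $\D$. I will build an infinite sequence $d_0, d_1, \ldots$ of pairwise distinct elements of $\D$. Pick $d_0 \in \D$ arbitrarily. Given $d_0, \ldots, d_i$, fix some fixed successor $u \to u^\star$ (one exists since $\M, u \models \Diamond\{\Diamond\forall y(\cdots)\}$). Instantiating the first conjunct of the body with $x \mapsto d_i$ and specialising the outer $\Box$ to this $u^\star$ yields some $d_{i+1} \in \D$ with $\M, u^\star \models \Box\Box P d_i d_{i+1}$, i.e.\ $\M, u \models \exists u^\star \to : \Box\Box P d_i d_{i+1}$ at that particular $u^\star$; concretely, $P d_i d_{i+1}$ holds at every world three steps from $u$ through $u^\star$.

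\textbf{Distinctness.} To show $d_{i+1} \notin \{d_0, \ldots, d_i\}$, suppose for contradiction that $d_{i+1} = d_k$ with $k \le i$ and write $i = k + l$. Using $\Diamond\{\Diamond\forall y(\cdots)\}$ at $u$, choose a two-step descendant $u \to u' \to w^\ast$ where the uniformity clause $\Diamond P d_i y \leftrightarrow \Box P d_i y$ and the box-transitivity clause $\Diamond\forall z[(P d_i y \land P y z) \to P d_i z]$ both hold (with $x \mapsto d_i$). By construction we have $\Box P d_j d_{j+1}$ at $w^\ast$ for $j = k, k+1, \ldots, i$, including $\Box P d_i d_k$ by assumption. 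Then iterate the argument from Theorem~\ref{thm-EBBE has no FMP}: transitivity at a $\Diamond$-successor of $w^\ast$, instantiated with $y := d_{k+j}$ and $z := d_{k+j+1}$, gives $\Diamond P d_i d_{k+j+1}$, which the uniformity clause upgrades to $\Box P d_i d_{k+j+1}$. After $l$ iterations this produces $\Diamond P d_i d_i$, contradicting the $\Box\Box\Box \neg P x x$ conjunct (the extra $\Box$ being exactly what is needed to reach the same level as the conclusion).

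\textbf{Main obstacle.} The real bookkeeping challenge is aligning the modal depths after the additional $\Box$ introduced by forcing every $\exists y$ inside a $\Box\exists y$ bundle; in particular, one must verify that the three $\Box$'s on the irreflexivity clause and the two $\Box$'s preceding the transitivity/uniformity block are exactly matched by the depth at which the $P d_j d_{j+1}$ witnesses are asserted. Once the depths are accounted for, the constant-domain assumption trivialises any worry about whether the $d_j$'s ``exist'' at the inner worlds, and the inductive argument proceeds exactly as in the $\EBBE$ case.
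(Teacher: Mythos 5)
Your forward direction is fine; in fact your five-world chain is the right choice, since the innermost world is needed so that the subformulas $\Diamond Pxy\leftrightarrow\Box Pxy$ and $\Diamond\forall z[\cdots]$ are not evaluated at a dead end, and the witness $y:=x+1$ does the job.

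The reverse direction has a genuine gap, and it sits exactly at the point you defer as ``bookkeeping.'' You extract the witness $d_{i+1}$ by specialising the outer $\Box$ of $\Box\exists y\,\Box\Box Pxy$ to one fixed successor $u^\star$ of $u$; this only gives you $Pd_id_{i+1}$ at worlds reachable \emph{through} $u^\star$. But the world $w^\ast$ at which you run the uniformity/transitivity argument is obtained from the conjunct $\Diamond\{\Diamond\forall y(\cdots)\}$ instantiated at $x\mapsto d_i$, so it is reached through some successor $u'$ of $u$ that you do not control, that in general differs from $u^\star$, and that varies with $d_i$. Consequently the claim ``by construction $\Box Pd_jd_{j+1}$ holds at $w^\ast$ for $j=k,\dots,i$'' is unsupported: the successors of $w^\ast$ need not be among the worlds at which the witnesses were forced. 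This is a branch-alignment problem, not a depth-counting one, and the constant-domain hypothesis about which elements ``exist'' at inner worlds does nothing to repair it. The paper's proof avoids it by a move you never make: since the domain is constant, $\forall x\,\Box\exists y\,\Box\Box Pxy$ commutes to $\Box\forall x\,\exists y\,\Box\Box Pxy$, so one descends to a \emph{single} successor world at which $\forall x\,\exists y\,\Box\Box Pxy$ holds outright; all witnesses $d_0,d_1,\dots$ are then anchored at that one world, $\Box\Box Pd_jd_{j+1}$ holds there for every $j$ simultaneously, and hence $\Box Pd_jd_{j+1}$ holds at \emph{every} one-step descendant, whichever branch the $\Diamond\forall y(\cdots)$ clause selects. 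Your argument needs this descent (or some other device that co-locates all the witnesses with the branch carrying the transitivity and uniformity clauses); as written, the induction does not go through.
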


\begin{proof}
We can verify that $\M,w_0\models \phi_3$ where $\M$ is the model described in the proof of Theorem \ref{thm-AEB has no FMP} (Note that $\M$ is a constant domain model).

  For the moreover part, let $\M$ be an arbitrary constant domain model such that $\M,r\models \phi_3$. Then we will prove that $\D$ is infinite. 
First note that there exists some $r\to r'$ such that\\
{\small
\begin{tabular}{l r l l l}
$\M,r'\models $&$  \forall x \Big[ $&$ \Box\exists y~\Box\Box Pxy~\land\Box\Box\Box \neg Pxx ~\land$\\ 
&&$\Diamond\Big\{\Diamond \forall y\Big(~\big[\Diamond Pxy\leftrightarrow \Box Pxy~\big]~\land$\\
&&$\quad\quad\quad\quad \Diamond \forall z\big[ \big(Pxy \land Pyz\big) \implies \big(Pxz\big) \big]\Big)~\Big\}~\Big]$
\end{tabular}
}

Since $\M$ is a constant domain model, here is some successor $r'\to u$ such that:\\

\begin{tabular}{l r l l l}
$\M,u\models$&$  \forall x \Big[ $&$\exists y~\Box\Box Pxy~\land\Box\Box \neg Pxx ~\land$\\ 
&&$\Big\{\Diamond \forall y\Big(~\big[\Diamond Pxy\leftrightarrow \Box Pxy~\big]~\land$\\
&&$\quad\quad\quad\quad \Diamond \forall z\big[ \big(Pxy \land Pyz\big) \implies \big(Pxz\big) \big]\Big)~\Big\}~\Big]$
\end{tabular}

\bigskip

 We will construct an infinite sequence of distinct domain elements $d_0,d_1\ldots$ over $\D$ and this will prove that $\D$ is infinite. 
 The sequence is constructed by induction. We maintain that the invariant that for all $i$: $\M,u \models \Box\Box Pd_id_{i+1}$.
 
 \medskip
 
 In the base case, since $\D$ is non-empty, pick some arbitrary $d_0\in \D$.
 For the induction step, suppose we have constructed $d_0,\ldots d_i$. 
   Let $d_{i+1}$ be such that $\M,u\models \Box\Box Pd_id_{i+1}$ (such an element always exists since $\M,u\models \forall x \exists y \Box\Box Pxy$).

  Now we claim that $d_{i+1}$ is distinct from $d_0\ldots d_i$. Suppose not then let $d_{i+1} = d_j$ for some $j \le i$. Let $j+l = i$ for some $l\ge 0$.  
  Let $u\to v^i$ such that with $x\mapsto d_i$ we have\\
  
$\M,v^i\models \forall y\Big(~\big[\Diamond Pd_iy\leftrightarrow \Box Pd_iy~\big]~\land \Diamond \forall z\big[ \big(Pd_iy \land Pyz\big) \implies Pd_iz \big]\Big)$

  By construction $\M,v^i\models \Box Pd_kd_{k+1}$ for all $k\le i$. Also, by assumption that $d_{i+1} = d_j$, we have $\M,v^i\models \Box Pd_id_j$.  We also have $\M,v^i\models \Box\Box\neg Pd_id_i$.

  Let $v^i\to w^{ij}$ be such that with  the assignment $y\mapsto d_j$ we have
  $\M,w^{ij}\models \forall z~( Pd_id_j \land Pd_jz \implies Pd_iz~)$. In particular with $z \mapsto d_{j+1}$, we obtain $\M,w^{ij} \models Pd_id_{j+1}$.  
  This implies that $\M,v^i\models \Diamond Pd_id_{j+1}$. \\  
  Since $\M,v^i \models \forall y~\big(~\big(\Diamond Pd_iy\leftrightarrow \Box Pd_iy~\big)$ we have $\M,v^i \models \Box Pd_id_{j+1}$. 
  
  Now, let $v^i\to w^{i(j+1)}$ be such that $\M,w^{i(j+1)}\models \forall z~( Pd_id_{j+1} \land Pd_{j+1}z \implies Pd_jz~)$. In particular with $z \mapsto d_{j+2}$, we have $\M,w^{i(j+1)} \models Pd_id_{j+2}$. This implies that $\M,v^i\models \Diamond Pd_id_{j+2}$ from this we can conclude  that $\M,v^i\models \Box Pd_id_{j+2}$.\\
  
 Applying this argument $l$ times, we obtain that $\M,v^i\models \Diamond Pd_id_i$ and hence we obtain a contradiction.

\end{proof}

\section{Decidable fragments}

Over constant domains, it is proved in \cite{padmanabha_et_al:LIPIcs:2018:9942} that the $\EB$ fragment is decidable and $\AB$ is undecidable. In \cite{Liu2019bundled} the undecidability of $\texttt{BA}$ over constant domain is proved. Further, in Section \ref{sec-noFmp} we have proved that the $\BE$ (and hence $\BABE$ ) fragments do not have the finite model property. This completes the picture of bundled fragments  over constant domain models.

Over increasing domain models, in \cite{padmanabha_et_al:LIPIcs:2018:9942, Liu2019bundled},  the $\ABEB$ fragment and $\BABE$ fragments are respectively  shown to be decidable. In Section \ref{sec-undec} we have proved that the fragments $\EBBA$ and $\ABEBBE$ are undecidable and $\EBBE$ lacks the finite model property. The remaining cases are the fragments \LBF and \ABBABE. In this section we take up these two fragments and prove them to be decidable. Note that in both of these fragments the formula $\forall x \exists y \Box\alpha$ is not (syntactically) expressible.

We assume that the formulas are in negation normal form (where $\neg$ appears only in front of atomic predicates). We first define some useful terms and notations.

\begin{definition}
\label{def-module }
For any $\FOML$ formula $\phi$:
\begin{itemize}
\item $\phi$ is a \literal if $\phi$ is of the form $P(x_1,\ldots x_n)$ or of the form $\neg P(x_1,\ldots x_n)$
\item $\phi$ is a \module if $\phi$ is a \literal or $\phi$ is of the form $\Delta \alpha$ where $\Delta \in \{\Box,\Diamond\}$
\item The component of $\phi$ is defined inductively as follows:
\begin{itemize}
 \item If $\phi$ is a \module then $\comp(\phi) = \{\phi\}$
 \item If $\phi$ is of the form $\phi_1\land \phi_2$ or $\phi_1\lor \phi_2$ then\\ $\comp(\phi) = \comp(\phi_1) \cup \comp(\phi_2)$
 \item If $\phi$ is of the form $\forall x~ \phi_1$ or $\exists x~\phi_1$ then\\ $\comp(\phi) = \{ \phi\} \cup \comp(\phi_1)$ 
\end{itemize}

 \item A formula $\phi$ is called \Esafe if every  $\psi\in \comp(\phi)$ is a module or  of the form $\forall x~ \psi'$. A finite set of formulas $\Gamma$ is \Esafe if every  $\phi\in \Gamma$ is \Esafe.
 \end{itemize}
\end{definition}
Intuitively, $\comp(\phi)$ is the set of all subformulas of $\phi$ that are `to be evaluated' at the current world. An existential-safe formula $\phi$ does need witnesses from the current local domain in order to make the formula true. The notions of components and existential-safeness will play a role in the  tableau-based decision procedure to be introduced below. 

Before going into the specific tableau rules for various bundled fragments, we first explain the general method. A tableau is a tree-like structure generated by repeatedly applying a few rules from a single formula $\alpha$ with some auxiliary information as the root of the tree. Intuitively, a tableau for $\alpha$ is a pseudo model which can be transformed into a real model of $\alpha$ under some simple consistency conditions. We can then decide the satisfiability of a formula by trying to find a proper tableau. As in \cite{Wang17d}, a tableau $T$ in our setting is a tree structure such that each node is a triple $(w,\Gamma,\sigma)$ where $w$ is a symbol or a finite sequence of symbols intended as the \textit{name} of a possible world in the real model, $\Gamma$ is a finite set of  $\FOML$-formulas, and $\sigma$ is an assignment function for variables. Since we intend to use the set of variables as the domain in the tableau-induced real model, $\sigma$ is simply a \textit{partial} identity function on $\Var$, i.e., $\sigma(x)=x$ for all $x\in Dom(\sigma)\subseteq \Var$, where the domain of $\sigma$, $Dom(\sigma)$, is intended to be the local domain of the real model. The intended meaning of the node $(w,\Gamma,\sigma)$ is that all the formulas in $\Gamma$ are satisfied on $w$ with the assignment $\sigma$, thus we also write $(w: \Gamma, \sigma)$ for the triple. 

A tableau rule specifies how the node in the premise of the rule is transformed to or connected with one or more new nodes given by the conclusion of the rule.   Applying the rules can generate a tree-like structure, a tableau, which is saturated if every leaf node contains only literals. For any formula $\phi$, we refer to a saturated tableau of $\phi$ simply as a tableau of $\phi$. Further, a saturated tableau is \textit{open} if in every node $(w: \Gamma,\sigma)$ of the  tableau, $\Gamma$ does not contain both $\alpha$ and $\neg \alpha$ for any formula $\alpha$.\footnote{Refer \cite{Wang17d} for an illustration of a similar tableau construction.}

We call a formula {\em clean} if no variable occurs both bound and free in it and
every use of a quantifier quantifies a distinct variable. A finite set of formulas $\Gamma$ is {\em clean} if $\bigwedge\Gamma$, the conjunction of all formulas in $\Gamma$, is clean. Note that every 
$\FOML$-formula can be rewritten into an equivalent clean formula.  For instance, the formulas
$\exists x \Box Px \lor \forall x \Diamond Qx$ and $Px \land  \Box \exists x Qx$  are not clean, whereas $\exists x \Box Px \lor \forall y \Diamond Qy$ and 
$Px \land  \Box \exists y Qy$ are their clean equivalents respectively.
Clean formulas help in handling the witnesses for existential formulas in the tableau in a syntactic way.

Consider a finite set of formulas $\Gamma$ that is clean. Suppose we want to expand $\Gamma$ to $\Gamma \cup \{\alpha_1,\ldots \alpha_k\}$,  then even if each of $\alpha_i$ is clean, it is possible that a bounded variable of $\alpha_i$  also occurs in some $\phi \in \Gamma$ or another $\alpha_j$. To avoid this, first we rewrite the bound variables in each  $\alpha_i$ one by one by using the fresh variables that do not occur in $\Gamma$ and other previously rewritten $\alpha_j$.
Such a rewriting can be fixed by always using the first fresh variable in a fixed  enumeration of all the variables. When $\Gamma$ and $\{\alpha_1,\ldots \alpha_k\}$ are clear from the context, we denote $\alpha_i^*$ to be such a fixed rewriting of $\alpha_i$ into a clean formula. It is not hard to see that the resulting finite set $\Gamma \cup \{\alpha_1^*,\ldots \alpha_k^*\}$ is clean.

\subsection{$\LBF$ over increasing domain}

 \begin{figure}
 \begin{center}
 \begin{tabular}{|c c|}
\hline
\multicolumn{2}{|c|}{} \\
     $\dfrac{w:\phi_1\lor\phi_2,\Gamma,\sigma}{w: \phi_1,\Gamma,\sigma\mid w:\phi_2,\Gamma,\sigma}\ (\lor)$ &  $\dfrac{w:\phi_1\land\phi_2,\Gamma,\sigma}{w:\phi_1,\phi_2,\Gamma,\sigma} \ (\land)$ \\
    \multicolumn{2}{|c|}{} \\
     \hline
     \hline
      
      \multicolumn{2}{|c|}{}\\
      \multicolumn{2}{|c|}{$\dfrac{w:\exists x \phi,~ \Gamma,~\sigma}
       {w: \phi,~ \Gamma,~\sigma'}(\exists)$} \\& \\
        \multicolumn{2}{|c|}{ where $\sigma' = \sigma \cup \{(x,x) \}$}\\
      \multicolumn{2}{|c|}{} \\ 
      \hline
      \hline
      
       \multicolumn{2}{|c|}{}\\
      \multicolumn{2}{|c|}{$\dfrac{w:\forall y\phi,~ \Gamma,~\sigma}
       {w:~\{\phi^*[z/y] \mid z \in Dom(\sigma)\},~ \Gamma,~\sigma}(\forall)$} \\& \\
        \multicolumn{2}{|c|}{ where $\Gamma$ is \Esafe and}\\
        \multicolumn{2}{|c|}{every $\phi^*[z/y]$ is a clean rewriting of $\phi[z/y]$}\\
        \multicolumn{2}{|c|}{ with respect to $\Gamma \cup \{\phi[z/y]\mid z\in Dom(\sigma) \}$}\\
      \multicolumn{2}{|c|}{} \\ 
      \hline
      \hline
      
       \multicolumn{2}{|c|}{Given $n\geq 1$: and $m,s\geq 0$} \\
      \multicolumn{2}{|c|}{}\\
       \multicolumn{2}{|c|}{$\cfrac{\genfrac{}{}{0pt}{0}{w:\Diamond \phi_1,\ldots,\Diamond \phi_n}{\Box \beta_1,\ldots \Box \beta_m} \ l_1,\cdots,l_s,~~\sigma}{\left\langle wv_i: \phi_i,~\{\beta_j \mid j\in [1,m]\},~ \sigma\right\rangle \ \text{for all} \ i\in [1,n] } \quad (\Diamond)$} \\ 
       \multicolumn{2}{|c|}{} \\ 
      \hline
      \hline
      
      \multicolumn{2}{|c|}{Given $m \geq 1, s\geq 0$:} \\
      \multicolumn{2}{|c|}{} \\
      \multicolumn{2}{|c|}{$\dfrac{w:\Box\beta_1,\ldots \Box \beta_m,~l_1,\ldots,l_s,\sigma}{w:l_1,\cdots,l_s,\sigma} \quad (\END)$} \\
      \multicolumn{2}{|c|}{} \\
      \hline
\end{tabular}

\caption{Tableau rules for $\LBF$, here every $l_i$ is a \literal}
\label{fig-tableau for LBF}
\end{center}
\end{figure}

Tableau rules for  $\LBF$ fragment are described in Fig. \ref{fig-tableau for LBF}. 
 The $(\land)$ and $(\lor)$ rules are standard, where we make a non-deterministic choice of one of the branches for $(\lor)$. The rule $(\END)$ says that  if we are left with only modules and there are no $ \Diamond$
formulas, then the branch does not need to be explored further. The $(\Diamond)$ rule creates one successor world for every $\Diamond$ formula at the current node and includes all the $\Box$ formulas that need to be satisfied along with the $\Diamond$ formula and $\sigma$ is inherited in the successor worlds to preserve increasing domain property. The $(\exists)$ rule picks $x$ itself as a witness to satisfy $\exists x \phi$ and $(\forall)$ rule expands the set of formulas to include a clean version of $\phi[z/y]$ for every variable $z$ in the current local domain.

Note that only the $(\Diamond)$ rule can change (the name of) the possible world thus creating a new successor. It simply extends the name $w$ by new symbols $v_i$ for each successor. Therefore there can be many  nodes in the tableau sharing the same world name but such nodes form a path. Given $w$ we use $t_w$ to denote the last node sharing the first component $w$. Given a node $t=(w: \Gamma, \sigma)$ in a tableau, we use $Dom(t)$ to denote the domain of the domain of $\sigma$.

Also, there is an implicit ordering on how rules
are applied: $(\Diamond)$ rule can be applied at a node $(w,\Gamma,\sigma)$ only if all formulas of $\Gamma$ are modules
and hence may be applied only after the $(\land,\lor, \forall, \exists)$ rules have been applied as 
many times as necessary at $w$. Similarly $(\forall)$ rule can be applied only when $\Gamma$ is \Esafe which means that the $(\exists)$ rule cannot be applied any more at the current node.

\begin{proposition}
\label{prop-some rule can be applied always}
For every tableau $T$ and every node $v= (w,\Gamma,\sigma)$ in $T$ if $v$ is a leaf then either $\Gamma$ contains only literals or there is some rule that can be applied at $v$.
\end{proposition}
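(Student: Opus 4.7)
The plan is to do a case analysis on the shape of the formulas in $\Gamma$ at the leaf node $v=(w,\Gamma,\sigma)$, and show that in each case at least one of the six rules in Figure \ref{fig-tableau for LBF} is applicable. I would prioritise the rules in the same order in which they must be applied in any legal tableau development: first the boolean rules, then the existential, then the universal, and finally the modal/end rules.

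First I would argue that if $\Gamma$ contains any formula of the form $\phi_1\land\phi_2$ or $\phi_1\lor\phi_2$, then $(\land)$ or $(\lor)$ applies directly, since these rules have no side conditions. Otherwise, if $\Gamma$ contains any formula of the form $\exists x\,\phi$, the $(\exists)$ rule applies (again with no side condition). So assume neither of these situations holds; then every formula of $\Gamma$ is a literal, a $\Box$-formula, a $\Diamond$-formula, or of the form $\forall y\,\phi$.

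The crux of the proof is the case in which some $\forall y\,\phi$ lies in $\Gamma$: here the $(\forall)$ rule is applicable only if $\Gamma$ is \Esafe. I would establish this by tracing where such $\forall$-formulas could have come from. By inspection of the \LBF grammar, every universal quantifier in the original formula occurs inside a prefix $\exists x_1\cdots\exists x_k\forall y_1\cdots\forall y_l$ applied to a $\psi$ that is a boolean combination of literals and modal formulas, i.e.\ of modules. The only tableau rules that remove quantifiers are $(\exists)$ and $(\forall)$; the $(\land), (\lor), (\Diamond)$ rules only decompose boolean/modal structure (possibly with clean renaming), and $(\END)$ deletes formulas. Hence a $\forall$-formula surviving in $\Gamma$ at the present stage must be of the shape $\forall y_i\cdots\forall y_l\,\psi'$, where $\psi'$ is (a renaming of) a boolean combination of modules. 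Its components are either itself, a shorter $\forall$-prefix on $\psi'$, or a module; all of these are modules or $\forall$-formulas, so the formula is \Esafe. Together with the fact that literals and $\Box/\Diamond$-formulas are trivially \Esafe, this shows $\Gamma$ is \Esafe, so $(\forall)$ applies.

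Finally, if no $\forall$-formula is present either, then $\Gamma$ consists entirely of modules. If some $\Diamond$-formula is in $\Gamma$ then $(\Diamond)$ applies; if there is none but some $\Box$-formula, then $(\END)$ applies; and otherwise $\Gamma$ contains only literals, which is the alternative allowed by the statement. The main obstacle is the justification that any surviving $\forall y\,\phi$ has \Esafe components, which rests entirely on the \LBF grammar restricting universal quantification to the prefix pattern $\exists^*\forall^*$ over module-level formulas; a straightforward induction on the length of the tableau derivation leading to $v$ makes this precise.
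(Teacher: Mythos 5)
Your proof is correct and takes essentially the same route as the paper's: the same case analysis on the outermost connective, applying $(\land)/(\lor)$ first, then $(\exists)$, then $(\forall)$ once $\Gamma$ is \Esafe, and finally $(\Diamond)$ or $(\END)$ when only modules remain. The only difference is that you spell out the provenance argument for why a surviving $\forall$-formula has only modules and $\forall$-formulas as components, a step the paper dispatches with ``from the syntax of $\LBF$ it follows that $\Gamma$ is \Esafe''.
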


\begin{proof}
Suppose $\Gamma$ contains at least one non-literal formula. If there is some $\phi\in \Gamma$ where $\phi$ is  of the form $\phi_1\land \phi_2$ or $\phi_1 \lor \phi_2$ then we can apply $(\land)$ or $(\lor)$ rule respectively.  So assume that the operators $\land, \lor$ does not occur as the outer most connective in any formula of $\Gamma$.

Further, if every $\phi\in \Gamma$ is a \module then we can apply $(\Diamond)$ rule if 
at least one formula of the form $\Diamond \phi_1\in \Gamma$, otherwise we can apply (\END) rule.

So the remaining case is that there is at least one formula in $\Gamma$ which has a quantifier at the outermost level. Now if we have some $\exists x \phi_1\in \Gamma$ then we can apply $(\exists)$ rule. Otherwise every quantified formula in $\Gamma$ is of the form $\forall y \phi$. But then, from the syntax of $\LBF$ it follows that $\Gamma$ is \Esafe and hence we can pick some $\forall y \phi \in \Gamma$ and apply $(\forall)$ rule.
\end{proof}

\begin{theorem}
\label{thm-tableau for LBF}
For any clean $\LBF$ formula $\theta$, let $Dom(\sigma_r)= \FV(\theta)\cup\{z\}$ where $z$ does not occur in $\theta$. There is an open tableau $T$ with root $(r:\{\theta\},\sigma_r)$ iff $\theta$ is satisfiable in an increasing domain model. 
\end{theorem}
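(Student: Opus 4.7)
My plan is to establish the biconditional by two separate constructions: for the forward direction, extracting a model from an open tableau; for the converse, guiding a tableau construction by a given satisfying model.

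For the direction from an open saturated tableau $T$ to a satisfying model, I would define $\M_T = (\W, \D, \live, \R, \val)$ where $\W$ consists of the distinct world-names $w$ appearing in $T$, $\R$ connects $w$ to $wv$ exactly when the $(\Diamond)$-rule created node $wv$ as a child of the terminal node $t_w$, the local domain $\live(w)$ is $Dom(\sigma)$ at $t_w$ (so $\D \subseteq \Var$), and $\val(w,P)$ collects all positive literal atoms $P(x_1,\ldots,x_n)$ appearing in some node sharing the name $w$. The $(\Diamond)$-rule's inheritance of $\sigma$ to successors makes $\live$ monotone along $\R$, and openness of $T$ guarantees no conflict between $P(\bar x)$ and $\neg P(\bar x)$ at the same world. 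I would then prove by structural induction on formulas that for every node $(w,\Gamma,\sigma)$ in $T$ and every $\phi\in \Gamma$, $\M_T,w,\sigma \models \phi$, using the identity assignment on $Dom(\sigma)$. Applying this at the root yields $\M_T,r,\sigma_r\models \theta$, and since $Dom(\sigma_r) \supseteq \FV(\theta)$ and $\sigma_r$ is relevant at $r$, the formula is satisfied in an increasing-domain model.

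For the converse, given $\M,\hat w_0,\mu_0\models \theta$, I would construct a tableau top-down maintaining the invariant at each node $(u,\Gamma,\sigma)$ that there exist $\hat u\in \W^\M$ and $\hat\mu: Dom(\sigma) \to \live^\M(\hat u)$ with $\M,\hat u,\hat\mu \models \phi$ for every $\phi\in \Gamma$. The rule choices are driven by $\M$: for $(\lor)$ take the true disjunct; for $(\exists x\phi)$ pick a real witness $d\in \live^\M(\hat u)$ and set $\hat\mu(x):=d$; for $(\forall y\phi)$ every instance $\phi^*[z/y]$ is automatically true under $\hat\mu$; for $(\Diamond)$ replace $\hat u$ in each branch by a chosen $\R^\M$-successor witnessing the corresponding $\Diamond \phi_i$, which then also satisfies all $\Box\beta_j\in \Gamma$, and inherit $\hat\mu$ (extended to include any newly accessed local-domain elements) on the inherited $\sigma$. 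The invariant propagates through every rule, and because literals that truly hold in $\M$ cannot contradict, no node contains both $P(\bar x)$ and $\neg P(\bar x)$, so the resulting tableau is open.

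The principal obstacle lies in the $(\forall)$ rule and its interaction with the domain growth caused by $(\exists)$: in the soundness argument one needs that when $(\forall y\phi)$ is applied at $(u,\Gamma,\sigma)$, no subsequent $(\exists)$-step at $u$ will enlarge $Dom(\sigma)$, for otherwise some instance $\phi[d/y]$ with a later-introduced $d\in \live(u)$ of $\M_T$ would be missing from $\Gamma$. This is exactly what the \Esafe side-condition together with the syntactic discipline of $\LBF$ secures: $\LBF$ forbids $\forall$ from occurring under $\exists$ or under a module except inside the $\exists^* \forall^* \psi$ block whose body $\psi$ is a boolean combination of literals and modules, so once $\Gamma$ becomes \Esafe all top-level $\exists$ have been discharged and $Dom(\sigma)$ is stable for the remainder of the node's life. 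The clean-rewriting step $\phi^*[z/y]$ is routine book-keeping to avoid variable capture with witnesses introduced later by nested $(\exists)$-applications inside modules, and the increasing-domain condition on $\M_T$ is enforced automatically by the $(\Diamond)$-rule's passing of $\sigma$ unchanged.
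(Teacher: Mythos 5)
Your proposal is correct and follows essentially the same route as the paper: the forward direction builds the same induced model (worlds from names, domain from $\Var$, local domains from $Dom(\sigma)$ at terminal nodes, valuation from literals) and proves the same truth lemma by induction, while your model-guided top-down construction for the converse is just a concrete rephrasing of the paper's argument that every rule preserves satisfiability together with the saturation guarantee of Proposition~\ref{prop-some rule can be applied always}. You also correctly isolate the one delicate point the paper relies on, namely that the \Esafe side-condition on the $(\forall)$ rule freezes $Dom(\sigma)$ so that the instantiation set really covers all of $\live(w)$ in the induced model.
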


\begin{proof}
First we claim that the rules preserve cleanliness of the formulas. To see this, we verify that for every rule, if $\Gamma$ in the antecedent of the rule is clean then the $\Gamma'$ obtained after the application of the rules is also clean. This is obvious for $(\land), (\lor), (\Diamond)$ and (\END) rules. The $(\exists)$ rule preserves cleanliness because it frees variable $x$ which are not bound by any  other quantifier in the antecedent. The $(\forall)$ rule preserves cleanliness by rewriting. 

($\Rightarrow$): Let $T$ be an open tableau rooted at $(r:\{\theta\},\sigma_r)$.  Define a model $\M=(\W,\D,\live,\R,\val)$ as follows: 
\begin{itemize}
    \item $\W=\{w\mid (w: \Gamma,\sigma)$ is a node in $T\}$
    \item $\D=\Var$
    \item $\R = \{ (w,v) \mid v$ is of the form $wv'$ for some $v'\}$
    \item For every $w\in \W$ define $\live(w)=Dom(t_w)$\\
     where $t_w$ is the last node of $w$ in $T$ 
    \item For every $w\in \W$ and $p\in \Ps$ define \\
    $\val(w,P) = \{ \overline{x} \mid P\overline{x} \in \Gamma$ where $t_w = (w,\Gamma,\sigma)\}$
\end{itemize}

Clearly, $\mathcal{M}$ is an increasing domain model, and since $z\in Dom(\sigma_r)$, there is no empty local domain. As $T$ is an open tableau, $\val$ is well-defined. 

{\it Claim. } For every node $(w:\Gamma,\sigma)$ in $T$  and for every $\LBF$ formula $\phi$ if $\phi\in \Gamma$ then  $\M,w,\sigma\models \phi$.

\medskip
The proof of the claim is by induction on the nodes of $T$ from leaf nodes to the root. For the base case, $(w:\Gamma,\sigma)$ is a leaf node and hence $\Gamma$ contains only literals. Thus, by the definition of $\val(w)$, the claim holds as $\sigma$ is an identity assignment. 

For inductive step, $(w:\Gamma,\sigma)$ is non leaf node and hence some rule is  applied at this node and we have one or more descendants depending on the rule. Also, the claim holds for all the descendants. Now we consider various cases depending on which rule was applied.

\begin{itemize}
\item If $(\land)$ rule was applied then $\Gamma$ is of the form $\Gamma'\cup \{\phi_1\land \phi_2\}$ and the node $(w:\Gamma,\sigma)$ has one descendant $(w: \Gamma'\cup\{\phi_1,\phi_2\},\sigma)$. By induction hypothesis $\M,w,\sigma \models \phi_1$ and $\M,w,\sigma \models \phi_2$. Thus we have $\M,w,\sigma \models \phi_1\land \phi_2$. Further, by induction for every $\phi'\in \Gamma'$ we have $\M,w,\sigma\models \phi'$ and hence the claim holds.

\item If $(\lor)$ rule was applied then $\Gamma$ is of the form $\Gamma'\cup \{\phi_1\lor \phi_2\}$ and the node $(w:\Gamma,\sigma)$ has one descendant $(w: \Gamma'\cup\{\phi_i\},\sigma)$ for some $i\in \{1,2\}$. By induction hypothesis $\M,w,\sigma \models \phi_i$. Thus we have $\M,w,\sigma \models \phi_1\lor \phi_2$. Further, by induction for every $\phi'\in \Gamma'$ we have $\M,w,\sigma\models \phi'$ and hence the claim holds.

\item If $(\exists)$ rule was applied then $\Gamma$ is of the form $\Gamma'\cup \{\exists x \phi_1\}$ and the node $(w:\Gamma,\sigma)$ has one descendant $(w:\Gamma'\cup \{\phi_1\},\sigma')$ where $\sigma' = \sigma\cup \{(x,x)\}$.

By Induction hypothesis, $\M,w,\sigma'\models\phi_1$ and since $\sigma'=\sigma\cup\{(x,x)\}$, it follows that $\M,w,\sigma_{[x\mapsto x]}\models \exists x\phi_1$. Also since $\Gamma$ is clean, $x$ does not occur in any $\phi'\in \Gamma'$. Thus for every $\phi'\in \Gamma'$ by induction hypothesis we have $\M,w,\sigma'\models \phi'$ and hence $\M,w,\sigma \models \phi'$.

\item If $(\forall)$ rule is applied at $(w: \Gamma,\sigma)$ then $\Gamma$ is of the form $\Gamma'\cup \{\forall y \phi\}$ and we have a descendant $(w: \Gamma' \cup \{\phi^*[z/y]\mid z\in Dom(\sigma)\}, \sigma)$.

Also note that since we applied $(\forall)$ rule, this means $\Gamma$ is \Esafe. Hence, by syntax of $\LBF$, the $(\exists)$-rule is not applied at any of the descendants of $(w:\Gamma,\sigma)$ in the tableau where the nodes are of the form $(w: \Gamma_1,\sigma_1)$. This implies that $Dom(\sigma)=Dom (t_w)=\live(w)$. 

Thus by induction hypothesis,  for every $z\in \live(w)$ we have $\M,w,\sigma \models \phi*[z/y]$. But then $\phi^*[z/y]$ is equivalent to $\phi^*$, so it follows that $\M,w,\sigma_{[y\mapsto z]}\models\phi$.  Again, by induction hypothesis, for every $\phi'\in \Gamma'$ we have $\M,w,\sigma \models \phi'$ and hence the claim holds. 

\item If $(\Diamond)$ rule is applied at $(w:\Gamma,\sigma)$ then $\Gamma$ is of the form $\{l_1\ldots l_s\} \cup \{\Diamond\phi_1\ldots \Diamond\phi_n\}\cup \{\Box\beta_1\ldots \Box\beta_m\}$ for some $s,m\ge 0$ and $n\ge 1$.

Consequently there are $n$ children each of the from $(wv^i: \Gamma^i,\sigma)$ where $\Gamma^i = \{\phi_i\} \cup \{ \beta_1\ldots \beta_m\}$. Further, we have $(w,wv^i)\in \R$ for every $i\le n$ and there are the only worlds accessible from $w$ in $\M$.Thus, by induction hypothesis and semantics, we have $\M,w,\sigma \models \Box\beta_i$ for every $i\le m$ and $\M,w,\sigma \models \Diamond \phi_i$ for every $i\le n$. Also note that $(w,\Gamma,\sigma)$ is the last node of $w$ and hence, by definition of $\val$ at $w$ we have $\M,w,\sigma \models l_i$ for every $i\le s$. 

\item If ($\END$) rule is applied at $(w:\Gamma,\sigma)$ then $\Gamma$ is of the form  $\{l_1\ldots l_s\} \cup \{\Box\beta_1\ldots \Box\beta_m\}$ for some $s,m\ge 0$. There is one descendant $(w:\{l_1\ldots l_s\},\sigma)$ which is a leaf node in $T$.
Thus, there are no accessible worlds from $w$ in $\M$ and hence $\M,w,\sigma \models \Box\beta_j$ for all $j\le m$ vacuously. Finally since $(w:\{l_1\ldots l_s\},\sigma)$ is the last node of $w$, by definition of $\val$, we have $\M,w,\sigma \models l_j$ for every $j\le s$.
\end{itemize}

\bigskip

($\Leftarrow$) From Proposition \ref{prop-some rule can be applied always} it follows that we can always apply some rule until every leaf node $(w: \Gamma,\sigma)$ is such that $\Gamma$ contains only literals. Thus every (partial) tableau can be extended to a saturated tableau. To prove that such a tableau is open, it  suffices to show that all rules preserve satisfiability.  

The case of $(\land)$ and (\END) are trivial. For $(\lor)$ rule , if $\Gamma \cup\{\phi_1\lor\phi_2\}$  is satisfiable, then either $\Gamma\cup \{\phi_1\}$ is satisfiable or $\Gamma\cup \{\phi_2\}$ is satisfiable. Either way, $(\lor)$ rule also preserves satisfiability. For $(\Diamond)$ rule, it is straightforward by semantics. 

For $(\exists$)  rule, suppose $\M=(\W,\D,\live,\R,\val)$ is an increasing domain model, $w\in \W$ and $\pi$ is an assignment such that  $\M,w,\pi\models\exists x\phi\land \bigwedge \Gamma$. By semantics, it follows that there is a element $d\in \live(w)$ such that $\M,w,\pi_{[x\mapsto d]}\models\phi$. By cleanliness, $x$ is not free in  $\Gamma$ and hence $\M,w,\pi_{[x\mapsto d]}\models\bigwedge\Gamma$. Thus, $\{\phi\}\cup\Gamma$  is also satisfiable. 

For $(\forall)$ rule,   suppose $\M,w,\pi\models\forall y \phi\land \bigwedge\Gamma$.  By semantics, for every $d\in \live(w)$ we have  $\M,w,\pi_{[y\mapsto d]}\models\phi $. 

Let $Dom(\sigma) = \{z_0, z_1,\ldots z_k\}$ and for all $i\le k$ let $\pi(z_i) = d_i$. Now since every $\phi^*[z/y]$ is a clean rewriting of $\phi[z/y]$, we have $\M,w,\pi_{[z_0\mapsto d_0,\ldots z_k\mapsto d_k] }\models \phi^*[z_0/y]\land \ldots \phi^*[z_k/y]$. Also, by cleanliness for every $i\le k$ the variable $z_i$ does not occur in $\bigwedge\Gamma$. Hence we also have $\M,w,\pi_{[z_0\mapsto d_0,\ldots z_k\mapsto d_k]} \models \bigwedge \Gamma$.

Hence the set of formulas $\{\phi^*[z/y]\mid z\in Dom(\sigma)\}\cup \Gamma$ is satisfiable.  

\end{proof}
 
Note that the depth of the tableau is linear in the size of the formula. However, as we have to rewrite formulas using new variables when applying $(\forall)$ rule, the size of the domain  is exponential in the size of the formula. Therefore, the tableau procedure can be implemented by an \EXPspace algorithm. 

  From Proposition \ref{prop-LBF embeds many bundled operators},  $\ABEB$ and $\BABE$ are subfragments of $\LBF$. As a corollary, these fragments are also decidable. Also note that the quantifier prefix in $\LBF$ is of the form $\exists^*\forall^*$ and hence any  extension of this quantifier prefix or extending $\LBF$ with negation closure will result in a fragment that will be able to express $\forall x\exists y\Box~ \alpha$ (and hence will not have the finite model property,  cf Theorem \ref{thm-AEB has no FMP}).
In this sense  $\LBF$ is  the largest fragment in which $\forall x\exists y\Box~\alpha$ is not (syntactically) expressible.

\subsection{$\ABBABE$ over increasing domain} 
\label{sec: ABABE}

In this fragment we are allowed $\forall x\Box~\alpha,~ \Box\forall x~\alpha$ and $\Box \exists x~\alpha$ and their duals. 
Note that $\ABBABE$ fragment is not closed under subformulas. For instance, $ \phi ~:=~\forall x~\Big( \exists y\Diamond \alpha \lor \forall z \Box \beta\Big)$ is a subformula of $\phi'~:=~\Diamond \forall x~\Big( \exists y\Diamond \alpha \lor \forall z \Box \beta\Big)$. But $\phi'$ is in the fragment and $\phi$ is not in the fragment. 

We say that $\phi$ is a subformula of $\ABBABE$ if there is some formula $\phi' \in \ABBABE$ such that $\phi$ is a subformula of $\phi'$.

\begin{proposition}
\label{prop-no AEB inside ABBABE}
Let $\phi$ be a subformula of $\ABBABE$ such that $\phi$ is of the form $Q x~\psi$ where $Q\in \{\forall,\exists\}$.
 Then for every $\beta \in \comp(\psi)$,  $\beta$ is a \module or $\beta$ is of the form $\forall x \Box\beta'$ or $\exists x\Diamond \beta'$.
\end{proposition}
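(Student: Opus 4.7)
The plan is to proceed by a structural analysis of the $\ABBABE$ grammar. The key syntactic observation is that in $\ABBABE$ no quantifier appears unbundled: every occurrence of $\forall$ or $\exists$ is either part of an $\AB$-bundle (immediately followed by a modality, as in $\forall x\Box$ or $\exists x\Diamond$) or part of a $\BA$ or $\BE$ bundle (immediately preceded by a modality, as in $\Box\forall x$, $\Box\exists x$, $\Diamond\forall x$, or $\Diamond\exists x$). This severely restricts how a subformula of the form $Qx\,\psi$ can arise within an $\ABBABE$ formula.

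I would split into two cases based on the origin of $\phi=Qx\,\psi$. In the first case, $\phi$ is itself an $\AB$-bundled formula: $\phi=\forall x\Box\alpha$ or $\phi=\exists x\Diamond\alpha$. Then $\psi$ is the \module $\Box\alpha$ or $\Diamond\alpha$, so $\comp(\psi)=\{\psi\}$ and the conclusion is immediate. In the second case, $\phi$ arises as a subformula of one of $\Box\forall x\alpha$, $\Diamond\exists x\alpha$, $\Box\exists x\alpha$, $\Diamond\forall x\alpha$, so that $\psi=\alpha$ may be an arbitrary $\ABBABE$ formula. This reduces the proposition to an auxiliary lemma.

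The lemma to establish is: for every $\alpha\in\ABBABE$ and every $\beta\in\comp(\alpha)$, $\beta$ is a \module or of the form $\forall x\Box\beta'$ or $\exists x\Diamond\beta'$. I would prove it by structural induction on $\alpha$. The literal case and the cases where $\alpha$ is syntactically a \module ($\alpha=\Box\gamma$, $\Diamond\gamma$, or any $\BA$/$\BE$-bundled formula, whose top-level connective is $\Box$ or $\Diamond$) are immediate because $\comp(\alpha)=\{\alpha\}$. The boolean cases reduce to the inductive hypothesis via $\comp(\alpha_1\star\alpha_2)=\comp(\alpha_1)\cup\comp(\alpha_2)$. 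The only remaining subcases are the $\AB$-bundles $\alpha=\forall x\Box\beta'$ or $\alpha=\exists x\Diamond\beta'$, where $\comp(\alpha)=\{\alpha\}\cup\comp(\Box\beta')$ (resp.\ with $\Diamond$), consisting of $\alpha$ in the allowed form together with the single \module $\Box\beta'$ (resp.\ $\Diamond\beta'$).

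The argument is really a bookkeeping one, and there is no substantive obstacle: the definition of $\comp$ never descends beneath a modality, so any quantified formula appearing in a component set must have been generated as a top-level quantified production of $\ABBABE$, and the only such productions are the $\AB$-bundled forms $\forall x\Box$ and $\exists x\Diamond$. Care is needed only to enumerate the $\ABBABE$ productions correctly and to observe that the $\BA$- and $\BE$-bundled formulas are themselves modules rather than top-level quantified formulas.
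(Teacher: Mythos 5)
Your proposal is correct: the paper states Proposition~\ref{prop-no AEB inside ABBABE} without proof, treating it as a routine consequence of the grammar, and your argument supplies exactly the intended bookkeeping. The case split on where the quantifier of $Qx\,\psi$ originates (an \AB-bundle, where $\psi$ is already a \module, versus the body of a \BA/\BE-bundle, where $\psi$ is a full \ABBABE formula handled by your structural-induction lemma on $\comp$) is the natural and, as far as I can see, only reasonable route, and it correctly uses the fact that $\comp$ never descends below a modality.
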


From Proposition \ref{prop-no AEB inside ABBABE}, it follows that we cannot (syntactically) express formulas of the form $\forall x\exists y\Box \phi$ in the fragment. However, $\forall x\exists y \Diamond \phi$ is still allowed. For instance, $\Diamond \forall x~\Big( \exists y\Diamond \alpha \Big)$ formula is in the fragment.

Thus, we can have $\forall x\exists y \Diamond \phi$ but not  $\forall x\exists y \Box\phi$.
Intuitively this means that the different witnesses $y$ for each $x$ can work on \textit{different} successor world. The fragment cannot enforce the interaction between $x$ and $y$ at all successors. This property can be used to prove that we can reuse the witnesses by creating new successor subtrees as required. 

 To get the decidability for $\ABBABE$ fragment, the main idea is to prove that the formulas of the form $\forall x\exists y\Diamond \phi$ can be satisfied by picking some boundedly many witnesses $y$ that will work for \textit{all} $x$. This is the same as proving that if $\forall x\exists y\Diamond \phi$ is satisfiable then $\exists y_1,\ldots \exists y_l \forall x \big( \bigvee \Diamond \phi[y/y_i] \big)$ is satisfiable (where $l$ is bounded).
We illustrate the proof idea with an example.

\begin{example}
 Consider the formula $ \alpha :=~\forall x\Big( \Box\neg Pxx \land \exists y \Diamond Pxy \Big)$ which is a subformula of the  fragment. Let $\T,r \models \alpha$ where $\T$ is a tree model rooted at $r$. Now we will modify $\T$ to obtain $\M$  which is also a tree model rooted at $r$ such that $\M,r\models \exists y_1 \exists y_2 \forall x~\Big( \Box \neg Pxx \land \big(~ \Diamond Pxy_1 \lor \Diamond Pxy_2~\big)\Big)$.

The model $\M$ is obtained by extending $\T$ in the following way. Let $\live^\T(r) = \D_r$. To obtain $\M$, first we extend the local domain of $r$ by adding a fresh element $a$. The idea is that for every $d\in \D_r$ (when assigned to $x$) we will ensure that the new element $a$ can be picked as the $y$-witness. To achieve this, we do the following: For every $d\in \D_r$ let $d' \in \D_r$ and $(r,s^d)\in \R^\T$ such that $\T,s^d \models Pdd'$. Let $\T^d$ be the subtree of $\T$ rooted at $s^d$. We will create a new copy of $\T^d$ and call its root as $u^d$. Now, in the  new subtree rooted at $u^d$, we make the new element $a$ `behave' like $d'$ and we add an edge from $r$ to $u^d$. So, in particular, $\M$  will have $(r,u^d)\in \R^\M$ such that $\M,u^d \models Pda$. Since we do this construction for every $d\in \D_r$ we obtain that for all $d\in \D_r$ we have $\M,r\models \Diamond Pda$.

Now note that  while evaluating $\alpha$ at $(\M,r)$ the $\forall x$ quantification will now also apply to $a$  (since $a$ is added to the local domain at $r$ in $\M$). But then, we cannot use $a$ itself as the witness for $a$  since we also need to ensure that $\M,r \models \forall x \Box \neg Pxx$. Hence we will add another element $b$ that acts as a witness for $a$. Further, $b$ also needs a witness. But now we can choose $a$ to be the witness for $b$ since that does not violate the formula $\forall x \Box \neg Pxx$.

So to complete the construction, we pick some arbitrary $d\in \D_r$ for which we have some $d'\in \D_r$ and $(r,s^d) \in \R^\T$ such that $\T,s^d\models Pdd'$.
We create two copies of $\T^d$ (subtree rooted at $s^d$) and call their roots as $v^d$ and $w^d$ respectively. In the subtree rooted at $v^d$ we ensure that $a$ and $b$ `behave' like $d,d'$ respectively and in the subtree rooted at $w^d$ we ensure that $a$ and $b$ `behave' like $d',d$ respectively. In particular we have $\M,v^d \models Pab$ and $\M,w^d\models Pba$. Finally we add edges from $r$ to $v^d$ and from $r$ to $w^d$ in $\M$. 

Thus, we have: $\live^\M(r) = \live^\T(r) \cup \{a,b\}$ and $\M,r\models \exists y_1 \exists y_2 \forall x~\Big( \Box \neg Pxx \land \big(~ \Diamond Pxy_1 \lor \Diamond Pxy_2~\big)\Big)$. With the above construction, this assertion can be verified by assigning $y_1$ and $y_2$ to $a$ and $b$ respectively.
\end{example}

Note that in principle, it is possible for a $\exists$ quantified formula to occur in the scope of a $\forall$ quantifier as a boolean combination with other $\exists$ quantified formulas and modules. Moreover these additional formulas can assert some `type' information that may force us to pick additional witnesses. For example if the formula is 
$\forall x\Big[ \Big( \Box (\neg Pxx  \land Rx)\lor \Box (\neg Pxx  \land \neg Rx)\Big)~ \land \exists y \Diamond \big( Rx \implies (Pxy \land \neg Ry) ~\land~ \neg Rx \implies (Pxy \land  Ry) \big)\Big]$, then we need two initial $y$-witnesses $a_1,a_2$ where one is used for witness whose `type' is $\Box (\neg Pyy \land Ry)$ and other for witness whose `type' is $\Box (\neg Pyy \land \neg Ry)$ and we also need the corresponding additional witnesses $b_1,b_2$.
In general the formula can force us to pick witness of a particular `$1$-type' which means we might need exponentially many witnesses. 

Thus,  we need to replace one $\exists$ inside the scope of $\forall$ by $2 l$ many $\exists$ quantifiers outside the scope of $\forall$ where $l$ is bounded exponentially in the size of the given formulas.  We now prove this formally.

For any formula $\phi$ if $\alpha\in \comp(\phi)$ we denote this by $\phi[\alpha]$. This means that $\alpha$ does not occur inside the scope of any modality in $\phi$.  For any formula $\beta$ we denote $\phi[\beta/\alpha]$ obtained by rewriting $\phi$ where $\alpha$ is replaced by $\beta$. In particular we are interested in the case where $\alpha$ is of the form  $\exists y \Diamond\psi$. Thus we always consider $\phi[\exists y\Diamond \psi]$.

For every $l\ge 0$ if $\overline{y} = y_1,y_1'\ldots y_l,y_l'$ are fresh variables, we denote $\overline{y}\Diamond\psi$ to be the formula $\bigvee\limits_{i\le l}\big( \Diamond \psi[y_i/y] \lor \Diamond \psi[y_i'/y]\big)$  which is a big disjunction where each disjunct replaces $y$ in $\psi$ with one of $y_i$ or $y_i'$. Further, we denote $\phi[\overline{y}\Diamond\psi / \exists y \Diamond\psi]$ as simply $ \phi[\overline{y} \Diamond\psi]$.

For instance, for the formula $\phi := \Big(Px~ \lor~ \exists y\Diamond Qxy\Big)$ where $\psi := \exists y\Diamond Qxy$, for $l=2$ and $\overline{y} = y_1,y_1',y_2,y_2'$ being fresh variables, $\phi[\overline{y} \Diamond\psi]$ is given by:
 $\Big( Px~ \lor \big(\Diamond Qxy_1 \lor \Diamond Qxy_1' \lor \Diamond Qxy_2 \lor \Diamond Qxy_2'   \big)\Big)$

The size of a formula denoted by $|\phi|$ is the number of symbols occurring in $\phi$ and for a finite set of formulas $\Gamma$, let  $|\Gamma| = \sum\limits_{\phi\in \Gamma}|\phi|$.

\begin{lemma}
\label{lemma-bounded witness enough}
Let $\Gamma'$ be  a clean finite set of  formulas such that every $\alpha\in \Gamma$ is a subformula of $\ABBABE$ where $\Gamma' = \Gamma \cup \{\forall x \phi[\exists y \Diamond \psi]\}$. 
 If~ $\bigwedge\Gamma \land \forall x \phi[\exists y \Diamond \psi]$ is satisfiable then \\there exists $l \le 2^{|\Gamma'|}$ such that $\bigwedge\Gamma~ \land~ \exists y_1\exists y_1'\exists y_2\exists y_2'\ldots \exists y_l\exists y_l'~ \forall x~ \phi[\overline{y} \Diamond\psi]$ is satisfiable, where  $\overline{y} = y_1,y_1',\ldots y_l,y_l'$ are fresh variables.
\end{lemma}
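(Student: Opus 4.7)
The plan is to take a satisfying tree model for the original formula and augment the local domain at its root by $2l$ fresh elements that collectively serve as $y$-witnesses for \emph{every} choice of $x$, where $l$ is bounded by the number of distinct "witness types" realized. First, I would fix a tree model $\M$ with root $r$ such that $\M, r \models \bigwedge \Gamma \land \forall x~\phi[\exists y \Diamond \psi]$. For each $d \in \live(r)$ for which the boolean structure of $\phi$ actually demands the component $\exists y \Diamond \psi$ to hold, there is some witness $e_d \in \live(r)$ making $\Diamond \psi$ true with $x \mapsto d,\ y \mapsto e_d$. The key observation is that all that really matters about $e_d$ is its \emph{type} at $r$, namely the set of subformulas of $\Gamma'$ that it satisfies; since $\Gamma'$ has at most $|\Gamma'|$ subformulas, at most $l \le 2^{|\Gamma'|}$ distinct witness types are realized. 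For each realized type $\tau_j$ pick a canonical representative $e_j \in \live(r)$.

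Next, I would construct the augmented tree $\M'$ as follows. Extend $\live(r)$ by $2l$ fresh elements $a_1,a_1',\ldots,a_l,a_l'$ (these will interpret $y_1,y_1',\ldots,y_l,y_l'$). For each old $d \in \live(r)$ whose originally chosen witness has type $\tau_j$, attach to $r$ a new $R$-successor which is a faithful copy of the subtree of $\M$ rooted at some $s$ with $\M,s \models \psi[d/x,\, e_j/y]$, with $e_j$ renamed to $a_j$ inside the copy; this forces $\M',r \models \Diamond \psi[d/x,\, a_j/y]$. For the new elements $a_j,a_j'$ themselves — which now fall under the $\forall x$ quantification — I apply the pairing trick from the warm-up example: attach further copies of appropriate subtrees in which $a_j$ plays the role of $e_j$ with $a_j'$ playing the role of $e_j$'s own original witness, and symmetrically for $a_j'$. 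Because $a_j$ and $a_j'$ are each other's witnesses, the chain of required witnesses closes without introducing additional fresh elements.

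Finally, I would verify two things. (a) $\bigwedge \Gamma$ still holds at $r$ in $\M'$: the formulas in $\Gamma$ do not mention the fresh variables, and since every newly attached subtree is a relabelled copy of an existing subtree of $\M$, any $\Box\alpha$ or $\Box\forall x'$ or $\Box\exists x'$ conjunct at root level is preserved at the new $R$-successors. (b) With the assignment $y_j \mapsto a_j$ and $y_j' \mapsto a_j'$, the formula $\forall x~\phi[\overline{y}\Diamond \psi]$ holds: for every element of the augmented $\live(r)$ — whether an old $d$ or a new $a_j$/$a_j'$ — at least one of the disjuncts $\Diamond \psi[y_j/y] \lor \Diamond \psi[y_j'/y]$ is satisfied by construction, and by Proposition \ref{prop-no AEB inside ABBABE} every other component of $\phi$ is a module, a $\forall x'\Box\beta'$, or an $\exists x'\Diamond\beta'$, each of which is preserved under the copy-and-rename construction.

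The main obstacle I anticipate is the meticulous bookkeeping for the copies and renamings: one must ensure that each fresh $a_j$ indeed witnesses $\Diamond \psi$ for all relevant old $d$'s simultaneously; that the paired elements $a_j,a_j'$ do not violate other conjuncts of $\phi$ (such as $\Box\neg Pxx$-like clauses) when taking each other as witnesses; and that every outer $\Box$-, $\forall x'\Box$- or $\Box\exists x'$-constraint still holds at the new $R$-successors of $r$. The structural restriction from Proposition \ref{prop-no AEB inside ABBABE} is essential here, because it rules out genuine $\forall x'\exists y'\Box$ alternations inside $\phi$ that would otherwise force recursive witness pairing beyond what the single round of copying handles.
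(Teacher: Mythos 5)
Your proposal is correct and follows essentially the same route as the paper's proof: the same $1$-type counting at the root to bound $l$ by $2^{|\Gamma'|}$, the same addition of $2l$ fresh elements that mimic type representatives, the same attachment of renamed subtree copies to realise $\Diamond\psi$ for each old element, and the same pairing trick $(a_j,a_j')$ to close the witness chain for the new elements, all relying on Proposition \ref{prop-no AEB inside ABBABE} to keep the remaining components preserved under copy-and-rename. The only cosmetic difference is that the paper renames the \emph{actual} witness $c'$ (rather than the canonical representative of its type) to the fresh element inside each copied subtree, which is what your construction needs as well.
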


Note that the $\exists y$ quantifier is pulled outside the scope of the $\forall x$ quantifier and replaced with bounded number of witnesses $y_1,y_1',y'_2,y_2'\ldots y_l,y_l'$. Consequently $\Diamond~\psi$ is replaced with a disjunction each replacing $y$ with one of $y_i$ or $y_i'$ for every $i\le l$.

To prove the lemma first we formally define the tree editing operation described in the example. Given a tree model $\T$ rooted at $r$, let $d\not\in \D^\T$. To add the new domain element $d$ to a local domain of $r$, we also need to specify the `type' of the new element $d$ at $r$ and its descendants. Towards this, we pick some domain element $c$ that is already present in $\live(r)$ and assign the type of $d$ to the type of $c$ at every world.

\begin{definition}
\label{def-tree operation}
Given a tree model $\T = (\W,\D,\R,\live,\val)$ rooted at $r$, let $d\not\in \D$ and $c\in \live(r)$. Define the operation of `adding $d$ to $\live(r)$ by mimicking $c$', denoted by $\T_{d\mapsto c} = (\W,\D',\R,\live',\val')$ where:
\begin{itemize}
\item[-] $\D' = \D\cup \{d\}$ 
\item[-] for all $w\in \W$ we have $\live'(w) = \live(w) \cup \{d\}$
\item[-] For every $w\in \W$ and predicate $P$ define\\
 $\val'(w,P) =  \{ \overline{e'} \mid$ there is some $\overline{e} \in \live(w,P)$ and $\overline{e'}$ is obtained from $\overline{e}$ by replacing zero or more occurrences of $c$ in $\overline{e}$ by $d\}$
 \end{itemize}

Suppose that we want to extend the domain  with $\overline{d} = d_1,\ldots d_n$ which are fresh. Let $\omega: \overline{d}\mapsto \D'$ where $\D'\subseteq \delta^\T(r)$ and we want each $d_i$ to mimic $\omega(d_i)$. Then we  denote $\T_{\omega}$ to be the tree obtained by $\Big( \big(\T_{d_1\mapsto \omega(d_1)}\big)_{d_2\mapsto \omega(d_2) \ldots} \Big)_{d_n\mapsto \omega(d_n)}$.
 
\end{definition}

\begin{proposition}
\label{prop-tree extension is ok}
Let $\T = (\W,\D,\R,\live,\val)$ be a tree model rooted at $r$ with $\T_{d\mapsto c}$ being an extended tree where $d\not\in \D$ and $c\in \live(r)$. Then for all interpretation $\sigma$ and for all $\FOML$ formula $\phi$ and all  $w \in \W$:\\
{\small $\T,w,\sigma_{[x\mapsto c]} \models \phi$ ~~iff~~ $\T_{d\mapsto c},w,\sigma_{[x\mapsto c]} \models \phi$ ~~iff~~ $\T_{d\mapsto c},w,\sigma_{[x\mapsto d]} \models \phi$}
\end{proposition}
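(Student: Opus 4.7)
The plan is to prove a slightly strengthened statement by induction on the structure of $\phi$, so that the inductive hypothesis can cover the existential case cleanly. Specifically, I would prove: for every $w\in\W$ and every assignment $\tau$ with image in $\live'(w)$, letting $\tau^\star$ denote the assignment that agrees with $\tau$ except that $\tau^\star(z)=c$ whenever $\tau(z)=d$, we have
\[
\T,w,\tau^\star \models \phi \ \Longleftrightarrow\ \T_{d\mapsto c},w,\tau^\star \models \phi \ \Longleftrightarrow\ \T_{d\mapsto c},w,\tau \models \phi.
\]
Setting $\tau=\sigma_{[x\mapsto d]}$ (so that $\tau^\star=\sigma_{[x\mapsto c]}$, assuming without loss of generality that $\sigma$ takes values in $\D$ only) yields the proposition.

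For the atomic case $P(x_1,\ldots,x_n)$, let $\overline{e}=(\tau(x_1),\ldots,\tau(x_n))$ and $\overline{e^\star}=(\tau^\star(x_1),\ldots,\tau^\star(x_n))$. By definition of $\val'$, the first equivalence follows from the fact that $\val'(w,P)$ contains every tuple of $\val(w,P)$ (via the ``zero replacements'' clause), and no new tuple over $\D$ is added; the second equivalence follows because $\overline{e}$ is obtained from $\overline{e^\star}$ by replacing some occurrences of $c$ by $d$, which is exactly the operation permitted in the definition of $\val'(w,P)$. The boolean cases are immediate by IH, and the $\Box$ and $\Diamond$ cases are immediate since $\W$, $\R$ and the tree structure are unchanged (noting that any successor $v$ of $w$ inherits the same extension since $\live'(v)=\live(v)\cup\{d\}$ and $\val'$ is defined uniformly over all worlds).

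The one nontrivial case is $\exists y\,\phi'$. If $\T,w,\tau^\star \models \exists y\,\phi'$, there is $e\in\live(w)$ with $\T,w,(\tau^\star)_{[y\mapsto e]}\models\phi'$; since $\live(w)\subseteq\live'(w)$, applying IH with the assignment $(\tau^\star)_{[y\mapsto e]}$ gives satisfaction in $\T_{d\mapsto c}$ under both $(\tau^\star)_{[y\mapsto e]}$ and $\tau_{[y\mapsto e]}$. Conversely, if $\T_{d\mapsto c},w,\tau\models \exists y\,\phi'$, there is $e\in\live'(w)$ with $\T_{d\mapsto c},w,\tau_{[y\mapsto e]}\models\phi'$; by IH, $\T,w,(\tau_{[y\mapsto e]})^\star\models\phi'$. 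If $e\neq d$ then $(\tau_{[y\mapsto e]})^\star=(\tau^\star)_{[y\mapsto e]}$ and $e\in\live(w)$; if $e=d$ then $(\tau_{[y\mapsto e]})^\star=(\tau^\star)_{[y\mapsto c]}$ and $c\in\live(w)$ by hypothesis. In either case $\T,w,\tau^\star\models\exists y\,\phi'$.

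The main obstacle I anticipate is precisely this existential case: the witness chosen in the extended model may be the new element $d$ itself, and we must transfer it back to a witness in $\T$. The strengthened formulation in terms of $\tau\mapsto\tau^\star$ is what makes this transfer routine, since collapsing $d$ to $c$ produces a legitimate witness in $\live(w)\supseteq\{c\}$ and the IH absorbs the replacement uniformly. The universal case is dual and presents no additional difficulty.
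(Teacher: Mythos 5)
Your proof is correct and follows the same basic strategy as the paper's: structural induction on $\phi$, with the only delicate point being the existential case where the witness chosen in the extended model may be the new element $d$ itself. The substantive difference is that you strengthen the induction hypothesis to arbitrary assignments $\tau$ into $\live'(w)$, collapsed to $\tau^\star$ by sending $d$ to $c$, whereas the paper keeps the single-variable formulation $\sigma_{[x\mapsto c]}$ versus $\sigma_{[x\mapsto d]}$ and then, in the converse existential case, applies the induction hypothesis to an assignment of the shape $\sigma_{[xy\mapsto dd]}$ in which \emph{two} variables are mapped to $d$ --- a situation not literally covered by the stated single-variable claim (one would have to iterate it, or generalize exactly as you do). Your $\tau\mapsto\tau^\star$ formulation absorbs the multi-occurrence case uniformly and makes the transfer of the witness $d$ back to $c\in\live(w)$ routine, so it is arguably the cleaner way to write the argument; the paper buys slightly less notation at the cost of this looseness. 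The only cosmetic omission on your side is that you establish the equivalence of the first and third statements and $1\Rightarrow 2$, leaving $2\Rightarrow 1$ implicit; it follows by instantiating your claim at $\tau:=\tau^\star$, since $(\tau^\star)^\star=\tau^\star$, and the paper's own write-up has the same asymmetry.
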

\begin{proof}
Proof is by induction on the structure of $\phi$. The atomic case follows by the definition of $\live'$. The $\neg$ and $\land$ cases are standard. 

\begin{itemize}
\item 
For the $\Diamond \phi$ case, if $\T,w,\sigma_{[x\mapsto c]} \models \Diamond \phi$ then there is some $w\to v$ such that $\T,v,\sigma_{[x\mapsto c]} \models \phi$. By induction,  $\T_{d\mapsto c},v,\sigma_{[x\mapsto c]}\models \phi$ and  $T_{d\mapsto c},v,\sigma_{[x\mapsto d]} \models \phi$. Hence $T_{d\mapsto c},w,\sigma_{[x\mapsto c]}\models \Diamond \phi$ and $T_{d\mapsto c},w,\sigma_{[x\mapsto d]} \models \Diamond \phi$.

Conversely, if $\T_{d\mapsto c},w,\sigma_{[x\mapsto d]}\models \Diamond \phi$ then there is some $w\to v$ such that $\T_{d\mapsto c},v,\sigma_{[x\mapsto d]}\models \phi$. By induction $\T,v,\sigma_{[x\mapsto c]} \models \phi$ and $\T_{d\mapsto c},v,\sigma_{[x\mapsto c]} \models \phi$. Hence $\T,w,\sigma_{x\to c}\models \Diamond \phi$ and $\T_{d\mapsto c},w,\sigma_{[x\mapsto c]}\models \Diamond \phi$.

\item  For the case $\exists y~ \phi$, if $\T, w,\sigma_{[x\mapsto c]}\models \exists y~ \phi$ then let $c'\in \delta^T(r)$ such that $\T,w,\sigma_{[xy\to cc']}\models \phi$. By induction $\T_{d\mapsto c},w,\sigma_{[xy\to cc']}\models \phi$ and $\T_{d\mapsto c},w,\sigma_{[xy\to dc']}\models \phi$.
 Thus, $\T_{d\mapsto c},w,\sigma_{[x\mapsto c]}\models \exists y~ \phi$ and $\T_{d\mapsto c},w,\sigma_{[x\mapsto d]}\models \exists y~ \phi$.

\medskip
Conversely, suppose $\T_{d\mapsto c},w,\sigma_{[x\mapsto d]}\models \exists y~ \phi$ then there exists $d' \in \live^{\T_{d\mapsto c}}(r)$ such that $\T_{d\mapsto c},w,\sigma_{[xy\mapsto dd']}\models \phi$. Now, if $d'\ne d$ then $d'\in \live^T(r)$ and by induction we have $\T,w,\sigma_{[xy\mapsto cd']}\models \phi$ and $\T_{d\mapsto c},w, \sigma_{[xy\mapsto cd']}\models \phi$. Otherwise if $d'=c$ then also by induction we have $\T,w,\sigma_[{xy\mapsto cc]}\models \phi$ and $\T_{d\mapsto c},w,\sigma_{[xy\mapsto cc]}\models \phi$. So  we have both $\T,w,\sigma_{[x\to c]}\models \exists y~ \phi$ and $\T_{d\mapsto c},w,\sigma_{[x\mapsto c]}\models \exists y~ \phi$. 
\end{itemize}
\end{proof}

Now we are ready to prove Lemma \ref{lemma-bounded witness enough}.

\begin{proof}[Proof of Lemma \ref{lemma-bounded witness enough}]
 Let $\T$ be a tree model rooted at $r$ such that $\T,r,\sigma \models\bigwedge\Gamma \land \forall x \phi[\exists y \Diamond \psi]$.

For every domain element $a\in \delta^T(r)$ define:

\begin{center}
\small
\begin{tabular}{c r}
$\type(r,a) =$
 &$\bigcup\limits_{\forall x'\alpha \in \Gamma'} \{ \lambda \mid \lambda\in \comp(\alpha)$ and $T,r,\sigma_{[x' \mapsto a]} \models \lambda\}$\\ 
 \end{tabular}
 \end{center}

  Let $\type(r) = \{ \type(r,a) \mid a\in \delta^T(r)\}$.  Let $|\type(r)| = l $ (note that $l\le 2^{|\Gamma'|}$).
  Enumerate $\type(r) = \{ \Lambda_1,\ldots \Lambda_l\}$ and for every $i\le l$ pick $a_i \in \delta^T(r)$ such that  $\type(r,a_i)   = \Lambda_i$.

Now let $\overline{d} = d_1,d_1',d_2,d_2',\ldots d_l,d_l'$ be fresh domain elements and let $\omega: \overline{d} \mapsto \{a_1,a_2,\ldots a_l\}$ where for all $i\le l$ we have $\omega(d_i) = \omega(d_i') = a_i$.
We define the required model $\M = (\W',\D',\R',\live',\val')$ as follows:

Let $\M_0 = \T_{\omega}$ be the new tree model  rooted at $r$ obtained by adding $d_1,d_1',\ldots, d_l,d_l'$ to $\live^T(r)$  where each $d_i$ and $d_i'$ mimics $a_i$. Now $\M$ is obtained by extending $\M_0$ as follows:  

For every $c\in \delta(r)$  such that $\T,r,[x\mapsto c]\models \exists y\Diamond \psi$ we pick $c' \in \delta^T(r)$ and $r\to s^c$ be such that~~ $\T,s^c,[xy\mapsto cc'] \models \psi$. Let $\T^c$ be the sub-tree of $\T$ rooted at $s^c$  and $\type(r,c') = \Lambda_j$. Then :

 Create a new subtree $\Center^c = \T^c_{\omega'}$ where for all $h\ne j$ we have $\omega'(d_h) = \omega'(d_h') = a_h$ and $\omega'(d_j) = \omega'(d_j') = c'$. Let $u^c$ be the root of $\Center^c$. Add an edge from $r$ to $u^c$ in $\M$.
 
 Further, for every $i\le l$ if  $\T,r,[x\mapsto a_i]\models \exists y\Diamond \psi$, then let $b \in \delta^T(r)$ and $r\to s^{i} \in \R^T$ be  such that\\ $\T,s^{i},[xy\mapsto a_ib]\models  \psi$.  Let $\type(r,b) = \Lambda_j$. Then:
 
 Create  $\Left^i = \T^{i}_{\omega_1}$ and $\Right^i = \T^{i}_{\omega_2}$ where $\omega_1$ and $\omega_2$ are defined as follows:
 \begin{itemize}
 \item For all $h\ne j$,\\
 $\omega_1(d_h) = \omega_1(d_h')  = \omega_2(d_h) = \omega_2(d_h') = a_h$
 \item $\omega_1(d_j) = a_j$ and $\omega_1(d_j') = b$
 \item $\omega_2(d_j) = b$ and $\omega_2(d_j') = a_j$
 \end{itemize}
 Let $v^i$ and $w^i$ be the root of $\Left^i$ and $\Right^i$ respectively. Add the edges from $r$ to $v^i$ and from $r$ to $w^i$ in $\M$. The two copies of subtrees are intended to provide witness for $\exists y \Diamond \psi$ for $d_j$ and $d_j'$ respectively. We need the two copies to ensure that $\omega_1$ and $\omega_2$ are well defined in the case when $i=j$ and $a_j\ne  b$. 
 
 Note that $\Center^c$ rooted at $u^c$ is created for every $c\in \live^T(r)$ such that $\T,r,\sigma_{[x\mapsto c]} \models \exists y \Diamond \psi$. If $c'$ is the picked witness for $c$ with $(r,s^c)\in \R^T$  such that $\T,s^c,\sigma[xy\mapsto cc'] \models \psi$ and $\type(r,c') = \Lambda_j$ then by construction $\Center^c$ rooted at $u^c$ is obtained where $d_j$ mimics $c'$ in the subtree rooted at $u^c$. All these together indicate that we can use $d_j$ and the subtree rooted at $u^c$ in $\M$  to verify that $\M,u^c,\sigma_{[xy\mapsto cd_j]} \models \psi$. Also note that for all $h\ne j$ the fresh elements $d_h$ and $d_h'$ mimic $a_h$ in the subtree $\Center^c$ (i.e, we have not added any extra `types').
 
Further, we want the type of $d_i$ and $d_i'$ at $r$ in $\M$ to mimic the type of $a_i$ at $r$ in $\T$. All  type information is taken care in $\M_0$ where both $d_i$ and $d_i'$ mimic $a_i$ except the formula $\exists y \Diamond \psi$. So if $\T,r,\sigma_{[x\mapsto a_i]}\models  \exists y \Diamond \psi$ then we need witness to verify $\M,r,\sigma_{[x\mapsto d_i]}\models  \exists y \Diamond \psi$  and $\M,r,\sigma_{[x\mapsto d_i']}\models \exists y \Diamond \psi$.  If the witness for $y$ for $a_i$ is $b$ and $\type(r,b) = \Lambda_j$ then we want $d_j'$ to be the witness for $d_i$ and $d_j$ to be the witness for $d_i'$. 

Consequently if $s^i$ is the world such that $a\to s^i \in \R^T$ and $\T,s^i,[xy\mapsto a_ib] \models \psi$ then we create two new copies of subtree $\T^i$ rooted at $s^i$ and call it $\Left^i$ and $\Right^i$. By construction, in particular, the new element $d_i$ mimics $a_i$  and $d_j'$ mimics $b$ in $\Left^i$. Similarly $d_i'$ mimics $a_i$ and $d_j$ mimics $b$ in $\Right^i$. Thus, we can pick $d_j'$ to be the witness for $d_i$ (and consider $\Left^i$) and pick $d_j$ to be the witness for $d_i'$ (and consider $\Right^i$). 

Also, it is important to note that  for every $r\to v\in \R^M$, if $d_i$ mimics $c$ and $d_i'$ mimics $c'$ at $v$ then we will always have $\type(r,c) = \type(r,c') = \Lambda_i = \type(r,a_i)$.
Now it can be verified that
$\M,r,\sigma \models \bigwedge\Gamma~\land~\exists y_1 \exists y_1' \ldots \exists y_l\exists y_l' ~ \forall x \phi[\overline{y} \Diamond \psi]$.

Towards this, first we prove some useful claims.

\bigskip

{\it Claim 1.} \label{Claim: old interpretation is ok for modules}
 For every interpretarion $\pi: Var\mapsto \delta^T(r)$ and every \module $\alpha$,~~
 If $\T,r,\pi \models \alpha$ then $\M,r,\pi\models \alpha$.
 
 \medskip
 {\it Proof.} Let $\alpha$ be a module. We consider all ppssible cases.
 If $\alpha$ is a literal then the claim follows since $\val^M(r)$ is an extension of $\val^T(r)$.

For the case $\Box ~\alpha'$ let $\T,r,\pi \models \Box\alpha'$. To verify  $\M,r,\pi \models \Box \alpha'$ pick any arbitrary successor $r\to v \in \R^M$. Let $v$ be a copy of $w \in \W^T$ extended with $d_1,d_1',\ldots d_l,d_l'$. Since $\T,r,\pi\models \Box\alpha'$, we have $\T,w,\pi \models \alpha'$. By Proposition \ref{prop-tree extension is ok} we have $\M,v,\pi \models \alpha'$. 

Since we picked $v$ arbitrarily, we have $\M,r,\pi\models \Box\alpha'$.

\medskip

For the case $\Diamond ~\alpha'$ let $\T,r,\pi \models \Diamond\alpha'$. By semantics, there exists $r\to w \in \R^T$ such that $\T,w,\pi \models \alpha'$. Then by construction the subtree of $\T$ rooted at $w$ continues to be a part of $\M_0$ (and hence $\M$) extended with $d_1,d_1',\ldots d_l,d_l'$. So by Proposition \ref{prop-tree extension is ok} we have $\M,w,\pi\models \alpha'$ and $(r,w)\in R^M$. Thus, $\M,r,\pi \models \Diamond \alpha'$.

\bigskip
{\it Claim 2.}
\label{Claim: old interpretation is ok for all formulas inside forall scope}
 Let $\pi: Var\to \delta^T(r)$ be any interpretation. Let $\alpha$ be a \module or of the form $\forall z\Box~\beta$ or $\exists z\Diamond~\beta$. ~~~
  If $\T,r,\pi \models \alpha$ then $\M,r,\pi \models \alpha$.

\medskip
{\it Proof.} 
 We consider all possible cases of $\alpha$:

\begin{itemize}
\item If $\alpha$ is a module then Claim $2$ follows from Claim $1$.

\item If $\alpha = \forall z\Box~ \beta$ then by semantics, for every $d\in \live^T(r)$ and every $r\to w\in R^T$ we have $\T,w,\pi_{[z\mapsto d]}\models \beta$.

Suppose $\M,r,\pi\not\models \forall z\Box~\beta$ then by semantics we have  $\M,r,\pi\models \exists z\Diamond(\neg \beta)$. Let $c\in \live^M(r)$ and $r\to v \in \R^M$ such that $\M,v,\pi_{[z\mapsto c]}\models \neg\beta$. 

Let the subtree rooted at $v$ in $\M$ be a copy of the subtree rooted at $w\in \W^T$ in $\T$. Now if $c\in \live^T(r)$ then by proposition \ref{prop-tree extension is ok} we have $\T,w,\pi_{[z\mapsto c]}\models \neg \beta'$ which is a contradiction. 

Otherwise if $c$ mimics some $c'$ at the world $v$ then again by Proposition \ref{prop-tree extension is ok} we have $\T,w,\pi_{[z\mapsto c']}\models \neg \beta'$ which is a contradiction.

\item If $\alpha = \exists z\Diamond~\beta$ then let $d\in \delta^T(r)$ and $r\to w\in \R^T$ such that $\T,w,\pi_{[z\mapsto d]} \models \beta$.  By construction, the subtree rooted at $w$ in  $\T$  continues to be a part of $\M_0$ (and hence $\M$) extended with $d_1,d_1',\ldots d_l,d_l'$. 

So by Proposition \ref{prop-tree extension is ok}  we have $\M,w,\pi_{[z\mapsto d]}\models \beta$ and $(r,w)\in \R^M$. Thus, $\M,r,\pi_{[z\mapsto c]} \models \exists z\Diamond \beta$ which is a contradiction.
\end{itemize}

 \bigskip
 {\it Claim 3.} 
\label{Claim: new elements are ok for all subformulas}
 Let $d\in \{d_k,d_k'\}$ for some $k\le l$. Then for every $\forall z~\alpha~~\in \Gamma'$ and for every $\beta\in \comp(\alpha)$:
  If $\T,r,\sigma_{[z \mapsto a_k]} \models \beta'$ then $\M,r,\sigma_{[z\mapsto d]}\models \beta'$.

\medskip
{\it Proof. }  From Proposition \ref{prop-no AEB inside ABBABE} it follows that every $\beta\in \comp(\alpha)$ is either a \module or of the form $\forall z'\Box\beta'$ or $\exists z'\Diamond \beta'$. We consider various possible cases for $\beta$:

\begin{itemize}
\item If $\beta$ is a \literal then the claim follows since $d$ mimics $a_k$ at $r$ and $\val^M$ is defined accordingly.
\item If $\beta$ is of the form $\Box\beta'$ then by definition, $\Box\beta'\in \type(r,a_k)$ and hence for all $c\in \delta^T(r)$ if $\type(r,c) = \type(r,a_k)$ then $\T,r,\sigma[z\mapsto c]\models \Box\beta'$. Thus, for every $c$ such that $\type(r,a_k) = \type(r,c)$ and for every $r\to w \in \R^T$ we have $\T,w,\sigma[z'\mapsto c] \models \beta'$. 

Now to verify that $\M,r,\sigma[z\mapsto d]\models \Box \beta'$, pick some arbitrary world $v$ such that $r\to v\in \R^M$. Let the subtree rooted at $v$ in $\M$ be a copy of the subtree rooted at $w\in \W^T$ in $\T$.  Let $d$ mimic some $c$ in the subtree rooted at $v$ in $\M$. By construction it follows that $\type(r,c) = \type(r,a_k)$. So we have $\T,w,\sigma_{[x\mapsto c]}\models \beta''$ and by Proposition  \ref{prop-tree extension is ok} we have $\T,w,\sigma_{[z\mapsto d]}\models  \beta'$.

\item If $\beta$ is of the form $\Diamond \beta'$ then there is some $r\to w \in \R^T$ such that $\T,w,\sigma_{[z\mapsto a_k]} \models \beta'$. Then by construction, the subtree rooted at $w$ is a subtree of $\M_0$ (and hence subtree of $\M$) where $d$ mimics $a_k$. Hence by proposition \ref{prop-tree extension is ok} we have $\M,w,\sigma_{[z\mapsto d]} \models \beta'$ which implies $\M,r,\sigma_{[z\mapsto d]} \models \Diamond \beta'$.

\item If $\beta$ is of the form $\exists z' \Diamond \beta'$ then let $c'\in \delta^T(r)$ and $r\to w\in \R^T$ such that $\T,w,\sigma_{[zz'\mapsto a_kc']} \models \beta'$. By construction, the subtree of $\T$ rooted at $w$ continues to be a part of $\M_0$ (and hence of $\M$) where $d$ mimics $a_k$. Hence, by Proposition \ref{prop-tree extension is ok} we have $\M,w,\sigma_{[zz'\mapsto dc']}\models \beta'$. Thus, $\M,r,\sigma_{[z\to d]}\models \exists z'\Diamond \beta'$.

\item If $\beta$ is of the form $\forall z' \Box \beta'$ then then pick some arbitrary $c'\in \delta^M(r)$ and some arbitrary $r\to v\in \R^M$.  We will verify that $\M,v,\sigma_{[zz'\mapsto dc']}\models \beta'$.  Let $v$ be a copy of $w\in \W^T$ and let $d$ mimic some $e$ at $v$. Then by construction we have $\type(r,a_k) = \type(r,e)$ and hence $\T,r,\sigma_{[z\mapsto e]}\models \forall z' \Box\beta'$.

Now if $c' \in \delta^T(r)$ then we have $\T,w,\sigma_{zz'\mapsto ec'}\models  \beta'$ and since $d$ mimics $e$ at $v$, by Proposition \ref{prop-tree extension is ok} we have $\M,v,\sigma_{[zz'\mapsto dc']}\models \beta'$.

Otherwise if $c'\in \{d_1,d_1',\ldots d_l,d_l'\}$ then let $c'$ mimic some $d'$ at $v$ and again we have $\T,w,\sigma_{[zz'\mapsto ed']}\models \beta'$. Further, since $d,c'$ mimics $e,d'$ respectively at $v$, by Proposition \ref{prop-tree extension is ok} we have $\M,v,\sigma_{[zz'\mapsto dc']}\models \beta'$.
\medskip

Since we picked $c'$ and $v$ arbitrarily, we have $\M,r,\sigma_{[z\mapsto d]}\models \forall z'\Box\beta'$.
\end{itemize}

Now we are ready to prove that
 $\M,r,\sigma \models \bigwedge\Gamma \land \exists y_1\exists y_2\ldots \exists y_l\exists y_l' \forall x ~\phi[\overline{y} \Diamond \psi]$
\\
First pick some $\alpha \in \Gamma$. We have $\T,r,\sigma\models \alpha$ and we need to verify that $\M,r,\sigma \models \alpha$. Suppose not then there is some $\chi\in \comp(\alpha)$ such that $\T,r,\sigma \models \chi$ and $\M,r,\sigma \models \neg \chi$ where $\chi$ is a \module or of the form $Qz~\alpha'$.

 Note that $\sigma$ is a mapping from $\Var$ to  $\delta^T(r)$. So, if $\chi$ is a \module then the assumption is a contradiction to Claim $1$.
Otherwise $\chi$ is of the form $Qz~\alpha'$. Then by Proposition \ref{prop-no AEB inside ABBABE} every $\beta\in \comp(\alpha')$ is either a \module or of the form $\forall z'\Box\beta'$ or $\exists z' \Diamond \beta'$. Now we have two cases:

\begin{itemize}
\item If $\chi$ is of the form $\exists z~\alpha'$ then let $c\in \live^T(r)$ such that $\T,r,\sigma_{[z\mapsto c]}\models \alpha'$. But now if $\M,r,\sigma_{[z\mapsto c]}\not\models \alpha'$ then there is some $\beta'\in \comp(\alpha')$ such that $\T,r,\sigma_{[z\mapsto c]}\models \beta'$ and $\M,r,\sigma_{[z\mapsto c]}\models \neg \beta'$. This is a contradiction to Claim $2$.

\item If $\chi$ is of the form $\forall z~\alpha'$ then  pick any arbitrary $d\in \live^M(r)$ and we verify that $\M,r,\sigma_{[z\mapsto c]}\models \alpha'$ (this implies $\M,r,\sigma \models \forall z~\alpha'$ which contradicts the assumption)

Suppose $d\in \live^T(r)$ then by Claim $2$, for every $\beta' \in \comp(\alpha')$
if $\T,r,\sigma_{[z\mapsto d]}\models \beta'$ then $\M,r,\sigma_{[z\mapsto d]}\models \beta'$. Hence, $\M,r,\sigma_{[z\mapsto d]}\models \alpha'$.

\medskip

Suppose $d\in \{d_1,d_1',\ldots d_l,d_l'\}$ then let $d \in \{d_k,d_k'\}$ for some $k\le l$. Note that we have $\T,r,\sigma_{[z\mapsto a_k]}\models \alpha'$. By Claim $3$, for every $\beta' \in \comp(\alpha')$
if $\T,r,\sigma_{[z'\mapsto a_k]}\models \beta'$ then $\M,r,\sigma_{[z'\mapsto d]}\models \beta'$. Hence, $\M,r,\sigma_{[z\mapsto d]}\models \alpha'$.

\end{itemize}

Thus we have proved that $\M,r,\sigma \models \bigwedge \Gamma$.

\bigskip

Now we verify  
$\M,r,\sigma \models \exists y_1 \exists y_1' \ldots \exists y_l\exists y_l' ~ \forall x \phi[\overline{y} \Diamond \psi]$. Let $d_1,d_1',\ldots d_l,d_l'$ be the witness for $y_1,y_1',\ldots y_l,y_l'$ respectively.
Let $\sigma'$ be an extension of $\sigma$ where for all $i\le l$ we have $\sigma'(y_i) = d_i$ and $\sigma'(y_i') = d_i'$. Thus, we will verify that $M,r,\sigma' \models \forall x~ \phi[\overline{y} \Diamond \psi]$.
Pick any arbitrary $c\in \live^M(r)$. We claim that $\M,r,\sigma'_{[x\mapsto c]} \models \phi[\overline{y} \Diamond \psi]$.  Again we have two cases:

\begin{itemize}
\item Suppose $c\in \live^T(r)$ then note  we have $\T,r,\sigma_{[x\mapsto c]}\models \phi$. So, if the claim is false then there is some $\chi \in \comp(\phi)$ such that $\T,r,\sigma_{[x\mapsto c]}\models \chi$ and $M,r,\sigma_{[x\mapsto c]}\not\models \hat{\chi}$ where $\hat{\chi} = \chi$ if $\chi\ne \exists y\Diamond~\psi$ and otherwise $\hat{\chi} = \overline{y}\Diamond\psi$.

Also by Proposition \ref{prop-no AEB inside ABBABE}, $\chi$ is either a \module or of the form $\forall z\Box \beta$ or $\exists z\Diamond \beta$. 
Now, if $\chi\ne  \exists y\Diamond~\psi$ then $\hat{\chi} = \chi$ and by Claim $2$, since $\T,r,\sigma_{[y\mapsto c]}\models \chi$ we have $\M,r,\sigma_{[y\mapsto c]}\models \chi$ which is a contradiction to the assumption.

So let $\chi = \exists y\Diamond\psi$. Then let $c'\in \live^T(r)$ and $r\to s^c \in \R^T$ be such that $\T,s^c,\sigma_{[xy\mapsto cc']}\models \psi$. Let $\type(r,c') = \type(r,a_k)$ for some $k\le l$. By construction we have a subtree $\Center^c$ of $\M$ rooted at $u^c$ which is a copy of $\T^c$ (the subtree rooted at $s^c$ in $\T$) extended with $d_1,d_1',\ldots d_l,d_l'$ where $d_k$ mimics $c'$. So by Proposition \ref{prop-tree extension is ok} it follows that $\M,u^c,\sigma_{[xy\mapsto cd_k]}\models \psi$ which implies $\M,u^c,\sigma'_{[x\to c]} \models \psi[y_k/y]$. Hence $\M,r,\sigma'_{[x\mapsto c]} \models \overline{y}\Diamond\psi$.

\bigskip

\item Otherwise $c\in \{d_1,d_1',\ldots d_l,d_l'\}$. Let $c\in \{d_j,d_j'\}$ for some $j\le l$. Note that we have $\T,r,\sigma_{[x\mapsto a_j]}\models \phi$. So, if the claim is false then there is some $\chi \in \comp(\phi)$ such that $\T,r,\sigma_{[x\mapsto d_j]}\models \chi$ and $\M,r,\sigma'_{[x\mapsto c]}\not\models \hat{\chi}$ where $\hat{\chi} = \chi$ if $\chi\ne \exists y\Diamond~\psi$ and otherwise $\hat{\chi} = \overline{y}\Diamond\psi$.

Also by Proposition \ref{prop-no AEB inside ABBABE}, $\chi$ is either a \module or of the form $\forall z\Box \beta$ or $\exists z\Diamond \beta$. 
Now, if $\chi\ne  \exists y\Diamond~\psi$ then  $\hat{\chi} = \chi$ and by Claim $3$, since $\T,r,\sigma_{[x\mapsto a_j]}\models \chi$ we have $\M,r,\sigma'_{[x\mapsto c]}\models \chi$ which is a contradiction to the assumption (note that $\sigma$ and $\sigma'$ differ only on the valuation of $y_1,y_1',\ldots y_l,y_l'$ and these variables do not occur in $\chi$ in this case).

So let $\chi = \exists x\Diamond\psi$. Then let $b\in \live^T(r)$ and $r\to s^j \in \R^T$ be such that $\T,s^j,\sigma_{[xy\mapsto a_jb]}\models \psi$. Let $\type(r,b) = \type(r,a_k)$ for some $k\le l$. 

\medskip
Now by construction, we have the subtree $\Right^j$ rooted at $v^j$ in $\M$ which is a copy of $\T^j$ (subtree rooted at $s^j$ in $\T$) such that $d_j$ and $d_k'$ mimic $a_k$ and $b$ respectively in $\Right^j$. Hence by Proposition \ref{prop-tree extension is ok} we have $\M,v^j,\sigma_{[xy\mapsto d_jd_k']}\models \psi$ which implies $\M,v^j,\sigma'_{[x\mapsto d_j]}\models \psi[y_k'/y]$. Hence, we have $\M,r,\sigma'_{[x\mapsto d_j]}\models \overline{y}\Diamond\psi$.

Similarly, we have the subtree $\Left^j$ rooted at $w^j$ and $w^j$ in $\M$ which is a copy of $\T^j$ (subtree rooted at $s^j$ in $\T$) such that $d_j'$ and $d_k$ mimic $a_k$ and $b$ respectively in $\Left^j$. Hence by Proposition \ref{prop-tree extension is ok} we have $\M,v^j,\sigma_{[xy\mapsto d_j'd_k]}\models \psi$ which implies $\M,v^j,\sigma'_{[x\mapsto d_j']}\models \psi[y_k/y]$. Hence, we have $\M,r,\sigma'_{[x\mapsto d_j']}\models \overline{y}\Diamond\psi$.

\end{itemize}

Thus, in both cases of $c\in \{d_j,d_j'\}$ we have  $M,r,\sigma'_{[x\mapsto c]} \models \overline{y}\Diamond\psi$ (which is a contradiction to the assumption).
 \end{proof}

\begin{corollary}
Let $\Gamma'$ be  a clean finite set of  formulas such that every $\alpha\in \Gamma$ is a subformula of $\ABBABE$ where $\Gamma' = \Gamma \cup \{\forall x \phi[\exists y \Diamond \psi]\}$. 
 If~ $\bigwedge\Gamma \land \forall x \phi[\exists y \Diamond \psi]$ is satisfiable then  $\bigwedge\Gamma~ \land~ \exists y_1\exists y_1'\exists y_2\exists y_2'\ldots \exists y_l\exists y_l'~ \forall x~ \phi[\overline{y} \Diamond\psi]$ is satisfiable, where  $l = 2^{|\Gamma'|}$ and  $\overline{y} = y_1,y_1',\ldots y_l,y_l'$ are fresh variables.
\end{corollary}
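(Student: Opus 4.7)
The plan is to derive this corollary from Lemma \ref{lemma-bounded witness enough} by a straightforward padding argument, since the only gap between the two statements is that the lemma provides some $l_0 \le 2^{|\Gamma'|}$ while the corollary fixes $l = 2^{|\Gamma'|}$ exactly. The idea is to take the satisfying model produced by the lemma and extend the existential prefix by $2(l-l_0)$ extra witnesses whose values are chosen arbitrarily; these extra witnesses only add further disjuncts to $\overline{y}\Diamond\psi$, which can only help satisfaction.

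More concretely, I would first invoke Lemma \ref{lemma-bounded witness enough} on the hypothesis that $\bigwedge\Gamma \land \forall x\,\phi[\exists y\Diamond\psi]$ is satisfiable, obtaining some $l_0 \le l := 2^{|\Gamma'|}$, fresh variables $y_1,y_1',\ldots,y_{l_0},y_{l_0}'$, an increasing domain model $\M$, a world $r$, and an assignment $\sigma$ such that
\[
\M, r, \sigma \models \bigwedge \Gamma \;\land\; \exists y_1 \exists y_1' \cdots \exists y_{l_0} \exists y_{l_0}'\; \forall x\; \phi[\overline{y}^{(l_0)}\Diamond\psi],
\]
where $\overline{y}^{(l_0)} = y_1,y_1',\ldots,y_{l_0},y_{l_0}'$. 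Next, I would introduce $2(l-l_0)$ additional fresh variables $y_{l_0+1},y_{l_0+1}',\ldots,y_l,y_l'$ and extend $\sigma$ to an assignment $\sigma'$ by mapping each new variable to an arbitrary element $a \in \delta^\M(r)$ (such an $a$ exists since local domains are non-empty by Definition \ref{def: FOML structure}).

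Finally, I would observe that $\overline{y}^{(l_0)}\Diamond\psi$ occurs as a sub-disjunction of the longer $\overline{y}\Diamond\psi$ (for $\overline{y} = y_1,y_1',\ldots,y_l,y_l'$), hence the latter is logically weaker. Since $\exists y \Diamond \psi$ is a component of $\phi$, it sits outside every modal scope, and because the entire formula is in NNF (negation only on atomic predicates), this occurrence is a positive one, built from $\land$, $\lor$ and bundled quantifiers by Proposition \ref{prop-no AEB inside ABBABE}. Thus replacing it by a weaker formula preserves truth. Combined with the fact that $\sigma$ and $\sigma'$ agree on all variables occurring in $\Gamma$ and on the old witnesses $y_1,y_1',\ldots,y_{l_0},y_{l_0}'$ (by freshness of the new variables), this yields
\[
\M, r, \sigma' \models \bigwedge \Gamma \;\land\; \exists y_1 \exists y_1' \cdots \exists y_l \exists y_l'\; \forall x\; \phi[\overline{y}\Diamond\psi],
\]
which is exactly the required conclusion. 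No genuine obstacle is expected; the only points requiring attention are verifying monotonicity of the substitution (immediate from NNF and Proposition \ref{prop-no AEB inside ABBABE}) and making sure the new existential witnesses can always be chosen (immediate from non-emptiness of local domains).
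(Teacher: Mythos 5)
Your proposal is correct and follows essentially the same route as the paper, which justifies the corollary in one line by invoking Lemma \ref{lemma-bounded witness enough} and ``padding sufficiently many dummy variables'' to reach $l = 2^{|\Gamma'|}$ exactly. Your additional checks (monotonicity of the replacement in a positive, non-modal position and non-emptiness of the local domain for the dummy witnesses) are exactly the details the paper leaves implicit.
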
 

 To see why the corollary is true, by Lemma \ref{lemma-bounded witness enough} we get some $l \le 2^{|\Gamma'|}$, and we can pad sufficiently many dummy variables to get a strict equality. This  gives us a useful tableau rule which we call ($\forall\exists\Diamond$) rule for $\ABBABE$ fragment, described in figure \ref{fig-tableau for ABBABE}. The full tableau rules for $\ABBABE$ is given by the tableau rules of $\LBF$ (Figure \ref{fig-tableau for LBF}) along with the ($\forall\exists \Diamond$)-rule.

\begin{figure}[h]
 \begin{center}
 \begin{tabular}{|c c|}
\hline
        \multicolumn{2}{|c|}{}\\
      \multicolumn{2}{|c|}{$\dfrac{w: \forall x~\phi[\exists y\Diamond \psi],~ \Gamma,~\sigma}
       {w:\forall x~\phi[\overline{y} \Diamond\psi],~ \Gamma,~\sigma'}(\forall\exists\Diamond)$} \\& \\
        \multicolumn{2}{|c|}{ where  $l= 2^{|\Gamma|+|\phi|}$ and $\overline{y} = y_1,y_1',\ldots y_l,y_l'$ }\\
        \multicolumn{2}{|c|}{ are fresh variables and $\sigma' = \sigma \cup \{ (y_i,y_i)~,(y_i',y_i') \mid i\le l\}$}\\
      
           \hline
\end{tabular}

\caption{($\forall \exists\Diamond$) rule for $\ABBABE$ fragment}
\label{fig-tableau for ABBABE}
\end{center}
\end{figure}

\begin{theorem}
\label{thm-tableau for ABBABE}
For any clean $\ABBABE$ formula $\theta$, let $Dom(\sigma_r)=Free(\theta)\cup\{z\}$ where $z$ does not occur in $\theta$. There is an open tableau with $(r:\{\theta\},\sigma_r)$ as the root iff $\theta$ is satisfiable in an increasing domain model. 
\end{theorem}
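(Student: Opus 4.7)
The plan is to extend the proof of Theorem~\ref{thm-tableau for LBF} by adding a single case for the new $(\forall\exists\Diamond)$ rule, since all remaining rules are exactly the $\LBF$ rules. Before addressing the two directions, I would fix up the ``some rule always applies'' argument of Proposition~\ref{prop-some rule can be applied always}: in $\ABBABE$ a set $\Gamma$ can fail to be \Esafe because some $\forall x\phi\in\Gamma$ has an $\exists y\Diamond\psi$ component, blocking $(\forall)$. I would give $(\forall\exists\Diamond)$ strict priority over $(\forall)$, firing it whenever some $\forall x\phi[\exists y\Diamond\psi]\in\Gamma$ remains. By Proposition~\ref{prop-no AEB inside ABBABE}, once all such components have been eliminated, the remaining components inside any $\forall x\phi$ are modules or of the form $\forall z\Box\beta$, so $\Gamma$ becomes \Esafe and the $\LBF$ case-analysis resumes unchanged.

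For the $(\Leftarrow)$ direction I would reuse the $\LBF$ proof that every rule preserves satisfiability and add one case for $(\forall\exists\Diamond)$. The required preservation statement is exactly the corollary to Lemma~\ref{lemma-bounded witness enough}: if $\bigwedge\Gamma\wedge\forall x\,\phi[\exists y\Diamond\psi]$ is satisfiable, then so is $\bigwedge\Gamma\wedge\exists y_1\exists y_1'\cdots\exists y_l\exists y_l'\,\forall x\,\phi[\overline{y}\Diamond\psi]$ with $l=2^{|\Gamma|+|\phi|}$. Extending the identity assignment to $\sigma'=\sigma\cup\{(y_i,y_i),(y_i',y_i')\mid i\le l\}$ places the new witnesses in the local domain, so the conclusion node of the rule inherits a model from the premise.

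For the $(\Rightarrow)$ direction I would reuse the tableau-to-model construction from Theorem~\ref{thm-tableau for LBF}: worlds are the $(\Diamond)$-generated name sequences, $\D=\Var$, $\live(w)=Dom(t_w)$ with $t_w$ the last node carrying $w$, and $\val(w,P)$ records the positive $P$-literals in $t_w$. The induction from leaves to the root showing $\phi\in\Gamma\Rightarrow\M,w,\sigma\models\phi$ goes through unchanged for every $\LBF$ rule. The only new case is $(\forall\exists\Diamond)$: the induction hypothesis gives $\M,w,\sigma'\models\forall x\,\phi[\overline{y}\Diamond\psi]$, and because every $y_i,y_i'\in Dom(\sigma')\subseteq\live(w)$, any disjunct $\Diamond\psi[y_i/y]$ or $\Diamond\psi[y_i'/y]$ witnessing the big disjunction also witnesses $\exists y\Diamond\psi$. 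Since $\theta$ is in negation normal form and $\exists y\Diamond\psi\in\comp(\phi)$ sits positively and outside any modality, positive monotonicity of $\phi$ in that subformula yields $\M,w,\sigma\models\forall x\,\phi[\exists y\Diamond\psi]$, matching the premise.

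The main obstacle I expect is termination of the saturation once $(\forall\exists\Diamond)$ is allowed, because each application pumps $2l$ fresh variables into $Dom(\sigma)$ with $l$ already exponential in $|\Gamma|+|\phi|$, and these new variables in turn trigger further $(\forall)$-expansions. I would argue termination by noting that $(\forall\exists\Diamond)$ strictly reduces the multiset of $\exists y\Diamond\psi$ components sitting directly in $\comp(\phi)$ for universal prefixes $\forall x\phi\in\Gamma$, since the replacement $\overline{y}\Diamond\psi$ is a boolean combination of pure $\Diamond$-modules; after finitely many applications every $\forall$-formula in $\Gamma$ is \Esafe. Combined with the finite modal depth of $\theta$ and the already-bounded $(\forall)$-expansion inherited from the $\LBF$ proof, each tableau branch saturates in doubly exponential space, yielding the desired decision procedure.
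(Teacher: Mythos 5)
Your proposal matches the paper's proof essentially step for step: the same tableau-to-model construction inherited from the \LBF case, the same single new induction case for the $(\forall\exists\Diamond)$ rule resolved by observing that the fresh $y_i,y_i'$ lie in the local domain and hence witness $\exists y\Diamond\psi$, and the same appeal to (the corollary of) Lemma~\ref{lemma-bounded witness enough} for satisfiability preservation in the backward direction. Your added remarks on rule priority and termination only make explicit what the paper leaves implicit.
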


\begin{proof}
First note that $(\forall\exists\Diamond)$ preserves cleanliness because the new free variables $y_1,y_1',\ldots y_l,y_l'$ are fresh  and hence not bound by any other quantifier in the antecedent. 

($\Rightarrow$) Let $\T$ be an open tableau rooted at $(r:\{\theta\},\sigma_r)$. The model is defined exactly as we defined in the Proof of Theorem \ref{thm-tableau for LBF} and we have the claim:

 For every node $(w:\Gamma,\sigma)$ in $T$  and for every $\LBF$ formula $\phi$ if $\phi\in \Gamma$ then  $\M,w,\sigma\models \phi$.
 
The proof of the claim is by induction on the nodes of $T$ from leaf nodes to the root which same as in the proof of Theorem \ref{thm-tableau for LBF}.
The only additional part is to prove the inductive claim  for the application of  $(\forall\exists\Diamond)$ rule at a node $(w: \Gamma,\sigma)$.

In this case $\Gamma$ is of the form $\Gamma'\cup \{\forall x \phi[\exists y \Diamond \psi]\}$ and we have a descendant $(w: \Gamma' \cup \{\forall x\phi[\overline{y}\Diamond \psi]\}, \sigma')$ where $\sigma' = \sigma \cup\{(y_i,y_i),~(y_i',y_i')\mid i\le l\}$.

First note that since $y_1,y_1'\ldots y_l,y_l'$ are fresh,  by induction hypothesis, for every $\phi'\in \Gamma'$ we have $\M,w,\sigma \models \phi'$.
Now to prove that $\M,w,\sigma \models \forall x \phi[\exists y \Diamond \psi]$, pick any arbitrary $z\in Dom(\sigma)$. We claim that $\M,w,\sigma_{[x\mapsto z]}\models \phi[\exists y \Diamond \psi]$. Note that we have $\M,w,\sigma'_{[x\mapsto z]} \models \phi[\overline{y}\Diamond \psi]$.

So suppose $\M,w,\sigma_{[x\mapsto z]}\not\models \phi[\exists y \Diamond \psi]$ then we can evaluate the conjunction and disjunctions and obtain some subformula $\chi$ of $\phi[\overline{y}\Diamond \psi]$ such that $\M,w,\sigma'_{[x\mapsto z]} \models \chi$ and\\ $\M,w,\sigma_{[x\mapsto z]}\not\models \hat{\chi}$ where: $\chi = \overline{y}\Diamond \psi$ (in which case $\hat{\chi} = \exists x\Diamond \psi$) or  $\chi  = \hat{\chi}$. Now:

\begin{itemize}
\item If $\chi =  \overline{y}\Diamond \psi$ then 
$\M,w,\sigma'_{[x\mapsto z]} \models ~\bigvee\limits_{i\le l}\Big(~\Diamond~\psi[y_i/y] \lor \Diamond~\psi[y_i'/y]~\Big)$. So there is some $j\le l$ and $\hat{\psi} \in \{ \psi[y_j/y], \psi[y_j'/y]\}$ and $\hat{y_j} \in \{ y_j,y_j'\}$ such that $\M,w,\sigma'_{[x\mapsto z]}\models \Diamond \hat{\psi}$. 

Since $\FV(\hat{\psi}) = \FV(\psi)\cup\{\hat{y_j}\}$, we have $\M,w,\sigma_{[xy\mapsto z\hat{y_j}]} \models \Diamond \psi$. Thus  $\M,w,\sigma_{[x\mapsto z]}\models \exists z \Diamond\psi$ which is a contradiction.

\item Otherwise $\hat{\chi} = \chi$ and also  $y_1,y_1',\ldots y_l,y_l'$ do not occur in $\chi$. Hence $\M,w,\sigma'_{[x\mapsto z]} \models \chi$ iff $\M,w,\sigma_{[x\mapsto z]}\models \chi$  which is a contradiction.

\end{itemize}

($\Leftarrow$)  First note that we can prove a proposition analogous to Proposition \ref{prop-some rule can be applied always} since if $\Gamma$ is not \Esafe then either $(\exists)$ rule can be applied or $(\forall\exists\Diamond)$ rule can be applied. Thus, we can always get a saturated tableau and to show that such a tableau is open, it is sufficient to show that all rules preserve satisfiability.  
We only prove this for $(\forall \exists \Diamond)$ rule (Other rules are already proved in Theorem \ref{thm-tableau for LBF}).

Let $\M,w,\pi\models \forall x \phi[\exists y\Diamond \psi]\land \bigwedge \Gamma$. By lemma \ref{lemma-bounded witness enough}  there exists  $\M'$ and $w' \in \W'$ and $\pi'$ such that $\M',w',\pi' \models \bigwedge \Gamma ~\land~ \exists y_1\exists y_1'\ldots \exists y_l,\exists y_l'~\forall x \phi[\overline{y}\Diamond\psi]$.

\noindent
Let $d_1,d_1',\ldots d_l,d_l' \in \live'(w')$ be the witness for $y_1,y_1',\ldots y_l,y_l'$ respectively and let $\pi''$ be the extension of the assignment $\pi'$ where $y_1y_1',\ldots y_ly_l'$ are assigned to $d_1d_1',\ldots d_ld_l'$ respectively.\\
So we have $\M',w',\pi''\models \bigwedge\Gamma~\land~\forall x \phi[\overline{y}\Diamond\psi]$ as required.

\end{proof}

Note that at if we start with a formula of length $n$ then the application of $(\forall\exists\Diamond)$ rule will blow up the formula to size $2^n$. Thus the size of the tableau is  $2^{O(n^2)}$. So we have a tableau procedure entails an \EXPspace algorithm.

\subsection{Lower bound} \label{sec: lower bound}
The \PSPACE upper bound for the fragments $\AB,\EB$ and  $\BA$ follows  for these fragments since we do not need `clean rewriting' of formulas in the tableau rules specialized for these fragments from propositional modal logic. So the tableau size remains polynomial and hence satisfiability problem for these fragments are \PSPACE-complete.\footnote{$\BE$ fragment needs clean rewriting and hence we get an \EXPspace upper bound.} The $\PSPACE$ lower bound follows from that of propositional modal logic.

We can prove a \NEXPtime lower bound for $\ABEB$ and $\BABE$ fragments over increasing domain models which implies the same lower bound for $\LBF$ and other fragments that contains these as sub fragments.\footnote{The authors of \cite{padmanabha_et_al:LIPIcs:2018:9942} claim \PSPACE upper bound for $\ABEB$ fragment which is rectified here. The bug is that they do not consider clean rewriting in the intermediate steps and hence calculate the tableau size to be polynomial. } 

The \NEXPtime complete version of the tiling problem is that given a tiling instance $\T = (T,H,V,t_0)$ and a natural number $n$ (in unary representation), does there exists a tiling function $f:[0,2^n]^2 \to T$ that can tile the $2^n\times 2^n$ grid \cite{van1996convenience}.

Given a natural number $k\in [0,2^n]$, it has a binary representation with $n$ bits. We use unary predicates $P_0, P_1,\dots, P_{n-1}$ using which $k$ can be encoded. For example, given $n=3$ and $k=5$, the binary representation of $5$ is $101$. It can be encoded by a domain element $d$ at $w$ such that $\M,w \models P_2(d) \land \neg P_1(d) \land P_0(d)$.

First,  we need a set of formulas to force exponential size domain with respect to $n$, and then use this domain to encode the tiling problem. The encoding formulas for $\ABEB$ fragment are given in Figure ~\ref{Fig-ABEB NEXPTIME}. 

\begin{figure}[h]
  \centering
  {\scriptsize
 \begin{tabular}{|r l |}
\hline
$\phi_0:=$& $\exists x~\Box \big(\bigwedge\limits_{0 \le i< n} \neg P_i(x)~\big)$ \\
\hline
$\phi_{1}:=$& $\forallBox{x} \Big( \bigwedge\limits_{0 \le i< n} \big(P_i(x) \implies \Box P_i(x) \land \neg P_i(x) \implies \Box\neg P_i(x)\big)\Big)$\\
\hline

$\phi_{2}:=$& $\forall x\Box \Big(~\bigwedge\limits_{0 \le i < n}\Big(~\neg P_i(x) \implies \exists y_i\Box~ \big( P_i(y) \land \bigwedge\limits_{i\ne j} P_j(x) \leftrightarrow P_j(y)\big)~\Big)~ \Big) $\\
\hline

$\alpha^n :=$&$ \phi_0 ~\land ~\Box^{\le n}~( \phi_1~ \land ~\phi_2 \land \Diamond \top)$\\
\multicolumn{2}{|c|}{}\\
\multicolumn{2}{|l|}{where
 $\Box^{\le n} \psi = \psi \land \Box^{\le n-1} \psi$ ~~~and~~~ $\Box^{\le 0} \psi = \top$}\\
\hline
\hline

$\psi^0:=$&$\forall x \Box\Box\Box \Big(\bigwedge\limits_{0 \le i < n}\big(\neg P_i(x)\big) \implies Q_{t_0}(x,x)~\Big)$\\
\hline
$\psi_1:=$& $\forall x \Box \forall y \Box \Box \Big(  \bigvee\limits_{t} \big(Q_t(x,y) \land \bigwedge\limits_{t'\ne t} \neg Q_{t'}(x,y)~\big)~\Big)$ \\
\hline
$\psi_2:=$& $\forall x \Box \forall y\Box \forall z \Box \Big(succ(x,y)\implies \bigvee\limits_{(t,t')\in H} \big(~ Q_t(x,z) \land Q_{t'}(y,z) ~\big)~ \Big)$\\

$\psi_3:=$& $\forall x \Box \forall y\Box \forall z \Box \Big( succ(x,y) \implies
\bigvee\limits_{(t,t')\in V}\big(~ Q_t(z,x) \land Q_{t'}(z,y) ~\big)~ \Big)$\\
\hline

\multicolumn{2}{|c|}{ }\\

\multicolumn{2}{|l|}{where }\\

$succ(x,y):=$&$\bigvee\limits_{0 \le i < n}\Big( \neg P_i(x) \land P_i(y)~ \land~\bigwedge\limits_{j<i}\big( P_j(x) \land \neg P_j(y)~\big)~$ \\
&\quad\quad\quad\quad\quad\quad\quad\quad\quad$ \land \bigwedge\limits_{j>i}\big( P_j(x)\leftrightarrow P_j(y) ~\big)~\Big)$ \\

\hline
$\beta_T^n :=$&$\alpha^n~ \land~ \Box^n\Big( \phi_1\land \Box \phi_1 \land \Box\Box\phi_1 \land \Diamond^3 \top~ \land  \psi_0 \land \psi_1\land \psi_2 \land \psi_3~\Big)$\\
\hline

\end{tabular}
}
   \caption{$\ABEB$ formulas for encoding \NEXP-time complete tiling problem  over increasing domain models}
  \label{Fig-ABEB NEXPTIME}
\end{figure}

Formula $\phi_0$ says natural number $0$ exists at the root. Formula $\phi_1$ ensures that the encoding propagates to successors. Formula $\phi_2$ asserts that for every position $i$ and every element $x$ if $i^{th}$  bit of $x$ is $0$ then there exists $y$  such that in all the successor worlds, $i^{th}$ bit of $y$ is $1$, and for all $j\neq i$, $j^{th}$ bit of $x$ and $y$ is identical. Formula $\alpha_n$ propagates $\phi_1$ and $\phi_2$ to $n$ depth. 

In a given model $\M$ for all $u,v\in \W$ we say that $v$ is at a distance $n$ from $u$ if there is some path from $u$ to $v$ of length  $n$ (note that $v$ could be at distance $m$ and also at distance $n$ from $u$).

Let $\Sigma^n_j$ be the set of all $n$-length binary strings in which there are at most $j$ many $1$-bit. Clearly, $\Sigma^n_n$ covers all binary strings of length $n$. Also for every string $s$ of length $n$ let $s(j)$ denote the $j^{th}$ position of $s$.

\begin{lemma}
\label{lemma-exponential domain forallbox + existsbox increasing}
The formula $\alpha^n$ is satisfiable. Moreover, for all models $\M$  and $r \in \W$ if $\M,r\models \alpha^n$ then for every $0 \le j < n$ there exists some world $w$ at distance $j$ from $r$ and for every world $w$ at a distance $j$ from $r$, there exists  a one-one function $f_w: \Sigma^n_j \to \live(w)$ such that if $f_w(s) = d$ then for every $0 \le k < n$,\\
 $s(k) = 1$ iff for every $u$ which is a child of $w$, we have $d\in \val(u,P_k)$.
\end{lemma}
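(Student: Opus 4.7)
For satisfiability, I will exhibit an explicit witness model: take $\D=\{0,1\}^n$, worlds $w_0 \to w_1 \to \cdots \to w_n$ in a chain (so $\delta(w_i)=\D$ at every world), and set $\val(w_i,P_k)=\{s\in\D\mid s(k)=1\}$. A direct inspection shows $\M,w_0\models \phi_0$ (take $x=0^n$), $\M,w_0\models \Box^{\le n}\phi_1$ (bits are stable along the chain), $\M,w_0\models \Box^{\le n}\phi_2$ (for any $x$ with $\neg P_i$, pick $y$ obtained from $x$ by flipping bit $i$ to $1$), and $\M,w_0\models\Box^{\le n}\Diamond\top$. This handles the first sentence.

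For the ``moreover'' part, I would proceed by induction on $j$ with $0\le j<n$, establishing simultaneously that (i) at least one world at distance $j$ from $r$ exists, and (ii) at each such $w$ there is an injective $f_w:\Sigma^n_j\to\delta(w)$ with the stated bit-pattern property. The base $j=0$ uses $\phi_0$ at $r$: it hands us $x_0\in\delta(r)$ with $\Box\bigwedge_{i<n}\neg P_i(x_0)$, so $f_r(0^n):=x_0$ works, and $\Box^{\le n}\Diamond\top$ guarantees $r$ has a child (so the ``iff'' for $s(k)=0$ is nonvacuous).

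For the inductive step, fix $w$ at distance $j+1$, so $w$ is a child of some $w'$ at distance $j$. The IH gives $f_{w'}:\Sigma^n_j\to\delta(w')$. I define $f_w$ in two stages. \emph{Stage A (reuse old witnesses).} For $s\in\Sigma^n_j\subseteq\Sigma^n_{j+1}$, set $f_w(s):=f_{w'}(s)=:d_s$, which lives in $\delta(w)$ by increasing domains. By IH, at $w$ (a child of $w'$) the bit pattern of $d_s$ is exactly $s$; invoking $\phi_1$ at $w'$ (which pushes bit values from $w$ to every child $v$ of $w$), the bit pattern of $d_s$ at every child $v$ of $w$ is still $s$. \emph{Stage B (flip a bit via $\phi_2$).} For $s\in\Sigma^n_{j+1}\setminus\Sigma^n_j$, pick any position $i$ where $s(i)=1$ and let $s'=s[i\mapsto 0]\in\Sigma^n_j$. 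Since $\neg P_i(d_{s'})$ holds at $w$, apply $\phi_2$ at $w'$ with $x:=d_{s'}$ to obtain $y_i\in\delta(w)$ with $\Box\bigl(P_i(y_i)\land\bigwedge_{k\ne i}P_k(d_{s'})\leftrightarrow P_k(y_i)\bigr)$; combining with Stage A's bit pattern for $d_{s'}$, the element $y_i$ has bit pattern exactly $s$ at every child $v$ of $w$, so set $f_w(s):=y_i$. Injectivity of $f_w$ is immediate because distinct $s,s'\in\Sigma^n_{j+1}$ disagree at some $k$, and any child $v$ of $w$ (one exists by $\Box^{\le n}\Diamond\top$) witnesses that $f_w(s)$ and $f_w(s')$ differ in membership of $\val(v,P_k)$.

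The main technical obstacle is getting the interplay between $\phi_1$ and $\phi_2$ right: $\phi_2$ produces a witness $y_i$ whose bits are compared to those of $d_{s'}$ \emph{at children of $w$}, not at $w$ itself, so I must use $\phi_1$ at $w'$ to transport the IH bit pattern of $d_{s'}$ from $w$ down to every child of $w$ before I can conclude that $y_i$ realises the pattern $s$ at every child. Once this is set up cleanly, the induction carries through and also yields existence of worlds at depth $j+1<n$ from $r$, since each distance-$j$ world has a child by $\Box^{\le n}\Diamond\top$.
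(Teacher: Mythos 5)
Your proof is correct and follows essentially the same route as the paper's: the same explicit chain model for satisfiability, and the same induction on the distance $j$ in which witnesses for strings in $\Sigma^n_j$ are inherited from the parent world and new witnesses for $\Sigma^n_{j+1}\setminus\Sigma^n_j$ are obtained by applying $\phi_2$ at the parent to flip a single bit, with $\phi_1$ transporting bit patterns one level down. The one point worth making explicit (the paper glosses over it in the same way) is that the induction hypothesis must be carried in the strengthened form ``the bit pattern of $f_w(s)$ equals $s$ uniformly at \emph{every} child of $w$'', since the literal ``iff'' in the statement does not by itself determine the $0$-bits of $d_s$ at the particular child $w$ you descend to; your Stage A/Stage B reasoning does maintain exactly this stronger invariant, so the argument goes through.
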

\begin{proof}
To prove that $\alpha^n$ is satisfiable, consider the model $\M = (\W,\R,\D,\live,\val)$ where
\begin{itemize}
\item $\W = \{ w_0,w_1\ldots w_n,w_{n+1}\}$
\item $\R = \{ (w_i,w_{i+1}) \mid i\le n\}$
\item $\D = \bigcup\limits_{j=0}^n \Sigma^n_j$ and  $\delta(w_i) = \D$ for all $i\le n+1$
\item For every $P_j$ where $j\le n$ and every $w_i\in W$  define 
 $\rho (w_i,P_j) = \{ s\mid s\in \D$ and $s(j) = 1\}$
\end{itemize}

It can be verified that $\M,w_0 \models \alpha^n$.

For the second part of the lemma, let $\M,r\models \alpha^n$. The proof is by induction on $j$. 

In the base case, $j=0$. Let $d_0 \in \delta(r)$ be the witness for $x$ in $\phi^0$. Define $f_r(0^n) = d_0$. The invariant holds since $\phi_0$ ensures that for all $0 \le k <  n$ and every $r\to u\in \R$  we have $d_0 \not\in \val(u,P_k)$.

\medskip
In the induction step, let $w'$ be any world at distance $j < n$ from $r$. By $\alpha^n$ there is some $w'\to w \in \R^M$ which implies $w$ is at a distance $j+1$ from $r$. 

Now pick any arbitrary $w$ be at a distance $j+1$ from $r$ and let $w'$ be the parent of $w$. By induction hypothesis  there is $f_{w'}: \Sigma^n_j \to \delta(w')$ such that for all $s\in \Sigma^n_j$  if $f_{w'}(s) = d$ then for every position $0 \le k < n$ (since $w$ is the child of $w'$),  $s^k =1$ iff $d\in \rho(w,P_k)$.
Now, define $f_w:\Sigma^n_{j+1} \to \delta(w)$ such that for  every $t\in \Sigma^n_{j+1}$: 

\begin{itemize}
\item If $t\in \Sigma^n_j$ then $f_w(t) = f_{w'}(t)$. 

\item Otherwise, let $k$ be an arbitrary position  such that $k^{th}$ bit of $t$ is  $1$. Consider $t'$ such that for all $l\ne k$ we have $t'(l) = t(l)$ and $t'(k) = 0$. Then, $t'\in \Sigma^n_j$ and let $f_{w'}(t') = d'$. By induction hypothesis (since $w$ is child of $w'$), for every position $j$ we have $t'(j) = 1$ iff $d'\in \rho(w,P_j)$.

Now, by $\alpha^n$ we have $\M,w \models \Diamond \top$. Hence there is at least one child for $w$ and moreover, since $\M,w \models \phi_1$, for every child $u$ of $w$ and for every position $j$ we have $t'(j) = 1$ iff $d\in \val(u,P_j)$. 
\medskip

Now since $\M,w \models \neg P_k(d')$ and $\M,w \models \phi_2$, there is some $d$ (witness for $y_k$) such that for every child $u$ of $w$, we have $\M,u \models P_k(d) \land \bigwedge\limits_{k'\ne k} P_{k'}(d') \leftrightarrow P_{k'}(d)$. \\
Define $f_w(t) = d$.
\end{itemize}

To see that the claim holds, if $t\in \Sigma^n_j$ then the claim holds since $\M,w' \models \phi_1 \land \Box\phi_1$. Otherwise, if $t\in \Sigma^n_{j+1}$ then we have $t'\in \Sigma^n_{j}$ (as defined above) and $t$ and $t'$ differ exactly in the position $k$. Hence by $\M,w\models \phi_2$, the defined $f_w(t)=d$ satisfies the required condition of  the invariant.
\end{proof}

Thus, the exponential tiling problem is encoded in the scope of $n$ modal depth as described in Figure \ref{Fig-ABEB NEXPTIME}. Note that in this encoding we have exponential domain instead of the grid. Thus, we use the binary predicate $Q_t(x,y)$ to say that the grid point $(x,y)$ has tile $t$. The formula $succ(x,y)$ encodes that $y$ is the successor of $x$ ( as a natural number in bit representation).

The formula  $\psi_0$ says $(0,0)$ has tile $t_0$; $\psi_1$ ensures every grid point has exactly one tile; $\psi_2$ and $\psi_3$ encode horizontal and vertical constraints respectively. Also, $\phi_1\land \Box \phi_1 \land \Box\Box\phi_1$ ensures that the `bit encoding' continues for the next $3$ modal depth. The formula $\Diamond^3 \top$ is to ensure that there is some world after $3$ modal depth. Finally, $\beta^n_T$ is the $\ABEB$ formula that encodes the exponential tiling problem. Note that the size of $\beta$ is polynomial in the size of the input instance (since $n$ is given in unary representation).

\begin{theorem} \label{thm-ABEB NEXPTIME}
For all tiling instance $\T= (T,H,V,t_0)$ and all $n\in\mathbb{N}$, there is a proper tiling $f:(2^n\times 2^n)\rightarrow T$ if and only if $\beta^n_T$ is satisfiable. 
\end{theorem}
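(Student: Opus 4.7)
The plan is to prove both directions using Lemma~\ref{lemma-exponential domain forallbox + existsbox increasing} as the central tool, which already guarantees that any model of $\alpha^n$ realizes all $2^n$ bit-strings as domain elements at depth $n$ from the root. Extending $\alpha^n$ to $\beta^n_T$ adds the formulas $\psi_0,\psi_1,\psi_2,\psi_3$ that read off a tiling from the binary predicates $Q_t$ interpreted at the deep enough worlds.

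For the $(\Leftarrow)$ direction, suppose $\M,r\models\beta^n_T$. By Lemma~\ref{lemma-exponential domain forallbox + existsbox increasing} applied to the conjunct $\alpha^n$, for every world $w$ at distance $n$ from $r$ there is an injection $f_w:\Sigma^n_n\to\live(w)$ encoding all $n$-bit strings via the $P_i$ predicates, with the encoding stable at all children of $w$. The conjunct $\Box^n(\Diamond^3\top\land\phi_1\land\Box\phi_1\land\Box\Box\phi_1\land \psi_0\land\psi_1\land\psi_2\land\psi_3)$ ensures that we can descend three more levels from $w$ to a world $u$ where the $P_i$-encoding still holds (by the iterated $\phi_1$), and where $\psi_0,\psi_1,\psi_2,\psi_3$ instantiate to actual constraints on the $Q_t$-predicates at $u$. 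Define $f:[0,2^n)\times[0,2^n)\to T$ by $f(k,l)=t$ iff $Q_t(f_w(\mathit{bin}(k)),f_w(\mathit{bin}(l)))$ holds at $u$; by $\psi_1$ this $t$ is unique, so $f$ is well-defined. Then $\psi_0$ gives $f(0,0)=t_0$ (using that $f_w(0^n)$ is the element with all $P_i$ false), while $\psi_2$ and $\psi_3$, together with the observation that $\mathit{succ}(x,y)$ correctly identifies the binary successor relation on the encoded strings, yield the horizontal and vertical tiling constraints.

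For the $(\Rightarrow)$ direction, given a proper tiling $f$, build a linear model $\M=(\W,\D,\R,\live,\val)$ with $\W=\{w_0,\dots,w_{n+3}\}$, $\R$ the successor relation on this chain, $\D=\live(w_i)=\Sigma^n_n$ (identifying each $n$-bit string with a number in $[0,2^n)$), and $\val(w_i,P_j)=\{s\mid s(j)=1\}$ for every $i$. Set $\val(w_i,Q_t)=\{(s,s')\mid f(s,s')=t\}$ at the relevant deep worlds. A direct verification shows $\M,w_0\models\alpha^n$ (as in the model construction in the proof of Lemma~\ref{lemma-exponential domain forallbox + existsbox increasing}) and that $\psi_0,\psi_1,\psi_2,\psi_3$ hold at $w_0$ because $f$ is a proper tiling satisfying the horizontal, vertical, origin, and uniqueness conditions.

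The main obstacle is the bookkeeping of modal depths: each of $\phi_1,\phi_2,\psi_0,\psi_1,\psi_2,\psi_3$ has a different nesting of $\Box$'s, so one must carefully track at which level the bit-encoding stabilizes and at which level the tiling constraints become literal facts. In particular, ensuring that the witnesses produced by $\phi_2$ at consecutive levels coincide across branches (needed to make the domain uniformly encode $\Sigma^n_n$) relies on the $\phi_1$ propagation and the uniformity of the $P_i$ across children, exactly as exploited in Lemma~\ref{lemma-exponential domain forallbox + existsbox increasing}; once this stabilization is in place, the tiling extraction and verification are routine.
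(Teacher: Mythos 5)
Your proposal is correct and follows essentially the same route as the paper: the forward direction builds the same linear chain model with an exponential domain of bit-strings, and the backward direction invokes Lemma~\ref{lemma-exponential domain forallbox + existsbox increasing} to obtain the $2^n$ encoded elements at depth $n$ and then reads the tiling off the $Q_t$ predicates three levels deeper, using $\psi_1$ for uniqueness, $\psi_0$ for the origin, and $\psi_2,\psi_3$ for the constraints. The only (harmless) deviation is that you extract the tiling at one fixed depth-$(n+3)$ world rather than via the paper's $\Box\Box\Box Q_t(c,d)$ condition, which if anything sidesteps a small well-definedness question about agreement across branches.
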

\begin{proof}
Suppose there is a tiling, then consider the model $\M = (\W,\R,\D,\live,\val)$ where
\begin{itemize}
\item $\W = \{ w_0,\ldots w_n,w_n,w_{n+1} \}~ \cup ~\{ u_0,u_1,u_2\}$
\item $\R = \{ (w_i, w_{i+1}) \mid i \le n\}~ \cup~ \{ (w_{n+1},u_0),~ (u_0,u_1),~ (u_1,u_2) \}$
\item $\D = \{ 0,1,\ldots 2^n\}$ and for all $v\in \W$, $\live(v) = \D$
\item For all $v\in \W$ and for all $0 \le k < n$\\
$\val(v,P_k) = \{ d\mid k^{th}$ bit of $d$ in binary representation is $1\}$ and 
for $u_2$ and every tile $t\in T$ we have\\
$\val(u_2,Q_t) = \{ (c,d) \mid (c,d)$ has the tile $t$ in the tiling$\}$ 
\end{itemize}
It can be verified that $\M,w_0 \models \beta^n_T$.

For the other direction, let $\M,r\models \beta_T^n$. First note that since $\M,r \models \alpha^n$  there exists a $w$ at distance $n$ from $r$ and by Lemma \ref{lemma-exponential domain forallbox + existsbox increasing}, we assume that $\{0,1,\ldots 2^n\} \subseteq \live(w)$ such that and for all $u$ that is a child of $w$ and every $d\in \{ 0,1\ldots 2^n\}$ we have $\M,u \models P_k(d)$ iff $k^{th}$ bit in the binary representation of $d$ in $1$.

Define the tiling $g: \{0,1,\ldots 2^n\}^2 \to T$ where $g(c,d) = t$ iff $\M,w\models \Box\Box\Box Q_t(c,d)$. To prove that $g$ is well defined, note that $\M,w\models \Diamond^3\top$ and also since $\M,w\models \psi_1$ there is a unique tile for every grid point $(c,d)$.

Further, since $\M,w\models \psi_0$, we have $g(0,0) = t_0$. Finally,  $\psi_2$ and $\psi_3$ ensure that the horizontal and vertical tiling constraints are satisfied.
\end{proof}

\begin{corollary}
The satisfiability problem for $\ABEB$ over increasing domain is \NEXPtime-hard. 
\end{corollary}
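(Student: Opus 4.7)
The plan is to derive the corollary directly from Theorem~\ref{thm-ABEB NEXPTIME} by arguing that the map $(\T,n)\mapsto \beta^n_T$ is a polynomial-time many-one reduction from the exponential tiling problem, which is \NEXPtime-complete. So the main task is bookkeeping: checking that the formula constructed in Figure~\ref{Fig-ABEB NEXPTIME} lies in the \ABEB fragment and has size polynomial in $|\T|+n$, with $n$ given in unary.

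First, I would verify syntactic containment in \ABEB. Inspecting each conjunct: $\phi_0$ is an $\exists x\Box$ formula (an \EB formula); $\phi_1$ is of the form $\forall x\Box(\cdots)$ with only propositional matrix, so it is an \AB formula; $\phi_2$ is $\forall x\Box(\cdots \implies \exists y_i\Box(\cdots))$, which is a nested combination of $\forall x\Box$ and $\exists y\Box$ bundles over a propositional matrix; and the padding $\Box^{\le n}$ and $\Diamond\top$ are plain box/diamond formulas which are freely allowed. For the tiling part, $\psi_0$, $\psi_1$, $\psi_2$, $\psi_3$ are all of the form $\forall x\Box\forall y\Box(\cdots)$ (and $\forall x\Box\forall y\Box\forall z\Box(\cdots)$), which are nested $\forall\Box$ bundles, hence \AB formulas. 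The matrix $succ(x,y)$ uses only atomic predicates and booleans. Thus every conjunct of $\beta^n_T$ is built from $\forall x\Box$, $\exists x\Box$, $\Box$, and their duals, placing $\beta^n_T$ in \ABEB.

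Next I would bound the size. The predicates $P_0,\ldots,P_{n-1}$ give linear blow-ups inside the conjunctions $\bigwedge_{i<n}$; $\phi_2$ is $O(n^2)$ because of the inner $\bigwedge_{j\ne i}$; $\alpha^n$ contains $\Box^{\le n}$ which unfolds to $n$ boxes, yielding $O(n^3)$ overall. The tiling formulas $\psi_0,\ldots,\psi_3$ each have size $O(n+|T|^2)$, and $succ(x,y)$ has size $O(n^2)$. Since $n$ is given in unary, $\Box^n$ and $\Diamond^3\top$ also have polynomial representation. Altogether $|\beta^n_T|$ is polynomial in $|\T|+n$, and the construction is clearly computable in polynomial time.

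Finally, Theorem~\ref{thm-ABEB NEXPTIME} asserts the equivalence between existence of a proper tiling of the $2^n\times 2^n$ grid by $\T$ and satisfiability of $\beta^n_T$ in an increasing domain model. Combining this with \NEXPtime-completeness of the exponential tiling problem (cited from \cite{van1996convenience}), we obtain a polynomial-time reduction from a \NEXPtime-hard problem to satisfiability of \ABEB over increasing domains, establishing the \NEXPtime lower bound. The only subtle step I expect is re-checking that the nested bundles in $\phi_2$ and the grid formulas indeed stay within \ABEB---in particular that the quantifier $\exists y_i$ in $\phi_2$ is immediately followed by a $\Box$, so that the bundle shape $\existsBox{y_i}$ is respected rather than a bare $\exists$ appearing inside the scope of $\forall x\Box$.
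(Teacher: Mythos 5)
Your proposal is correct and matches the paper's (implicit) argument exactly: the corollary is an immediate consequence of Theorem~\ref{thm-ABEB NEXPTIME} once one checks that $\beta^n_T$ lies in \ABEB and has size polynomial in $|\T|+n$ with $n$ in unary, which the paper asserts in passing and you verify in more detail. Your added care about the bundle shape of $\exists y_i\Box$ inside $\phi_2$ is a reasonable sanity check but raises no new issue.
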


The encoding formulas of $\BABE$ fragment are very similar. We only need to swap positions of $\Box$ and quantifiers to suit the fragment in the encoding formulas (Fig. \ref{Fig-ABEB NEXPTIME}). The details are omitted. 

\begin{corollary}
The satisfiability problem for $\BABE$ over increasing domain is \NEXPtime-hard. 
\end{corollary}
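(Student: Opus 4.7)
The plan is to reduce from the same $\NEXPtime$-complete exponential tiling problem used in Theorem \ref{thm-ABEB NEXPTIME} (tiling a $2^n\times 2^n$ grid with $n$ given in unary), producing a polynomial-size $\BABE$ sentence that is satisfiable over increasing domain models iff the input tiling instance admits a proper tiling. The adaptation of the formulas in Figure \ref{Fig-ABEB NEXPTIME} is primarily syntactic: each $\forall x\Box$ bundle is replaced by $\Box\forall x$ and each $\exists x\Box$ by $\Box\exists x$, with an outer $\Diamond$ or $\Box$ prepended as padding so that the ``home world'' of the quantifier is shifted one step along the accessibility relation. For instance, $\phi_0$ becomes $\Diamond\Box\exists x\,\bigwedge_{i} \neg P_i(x)$, introducing the element encoding $0^n$ at a successor of the root rather than at the root itself, and the propagation formula $\phi_1$ and bit-flip witness formula $\phi_2$ are reshaped analogously with an extra layer of modality. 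Because domains only grow along the accessibility relation, elements introduced by a $\Box\exists$ at a successor persist at all deeper worlds where the subsequent predicate assertions must hold, so nothing is lost by the shift; the modal-depth budget in $\alpha^n$ and $\beta^n_T$ is scaled by a small additive constant and the overall formula size remains polynomial in the input.

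Correctness will follow the same two-directional argument as in Theorem \ref{thm-ABEB NEXPTIME}. For the ``if'' direction, the canonical model built from a tiling solution transfers with valuations shifted to the appropriate modal depth. For the ``only if'' direction, the extraction of an injection $f_w : \Sigma^n_j \to \live(w)$ for every world $w$ at distance $j$ from the root proceeds by induction on $j$, mirroring Lemma \ref{lemma-exponential domain forallbox + existsbox increasing}, with the bit-propagation invariant rephrased one layer deeper to account for the modal shift. The tiling predicates $Q_t(x,y)$ can then be read off at a uniform world at the bottom of the modal tree, exactly as in the $\ABEB$ case.

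The main obstacle I anticipate is witness synchronization. Under $\Box\exists x$, different successors of a world may in principle pick different witnesses for the same existential formula, whereas $\exists x\Box$ fixes a single witness at the current world. To force effective uniqueness, I would strengthen the propagation formula $\phi_1$ so that the bit predicates $P_k$ are carried one modal step deeper than in the $\ABEB$ encoding; any two candidate witnesses chosen at sibling successors will then be forced to agree on their full $n$-bit representation, making them interchangeable for the purposes of the grid encoding and closing the reduction.
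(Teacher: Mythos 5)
Your proposal matches the paper's own treatment, which simply states that the $\BABE$ encoding is obtained by swapping the positions of $\Box$ and the quantifiers in the formulas of Fig.~\ref{Fig-ABEB NEXPTIME} (with the details omitted); your extra padding modalities and the shifted induction in the analogue of Lemma~\ref{lemma-exponential domain forallbox + existsbox increasing} are exactly the adjustments that implicit sketch requires. Your worry about witness synchronization under $\Box\exists$ is legitimate but ultimately harmless here, since grid points are identified by their $P_k$-bit patterns rather than by element identity and the extraction argument only needs one world at the bottom whose local domain realises all $2^n$ patterns.
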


\section{Conclusion}

We began with the question: are bundles good deals? Now we see the answer is: {\em it depends}, on the combinations of bundles you have chosen, and whether you work with varying domain interpretations or insist on constant domains.

In this paper, we have studied the decidability of bundled fragments of $\FOML$, where we have no restrictions on the use of variables or arity of relations. Over constant domain interpretations, it is only the $\exists \Box$ bundle that is well-behaved, but over increasing domain interpretations, we get a trichotomy of decidability, lack of finite model property and undecidability. The obvious question is to settle the issue of decidability for cases where we have only shown lack of finite model property. This work is one step in the programme of mapping the terrain of decidable bundled fragments and identifying the borderline between decidability and undecidability. We identify some strands that constitute immediate next steps in the programme.

In the context of verification of infinite state systems, security theory and database theory, we are often more interested in the \textsc{model checking} problem. If the domain is finite, this is no different from model checking of first order modal logic. 
However we are usually interested in the specification being checked against a finitely specified (potentially infinite) model, e.g. when the domain elements form a \textit{regular} infinite set. This is a direction to be pursued in the context of bundled fragments.

The applications we have referred to often call for reachability analysis rather than the study of single updates. Moreover, the richness of modal logic stems from its extensions over various classes of frames, and hence the study of bundles over models with various frame conditions is relevant. Unfortunately, while it is clear that equivalence frames seem to lead to undecidability \cite{Wang17d}, even with transitive frames the situation is unclear. Obtaining good decidable fragments over \textsc{linear} frames is an important challenge.

We consider only  the ``pure'' fragments, without constants, function symbols, or equality. The addition of constants is by itself simple, but equality complicates things considerably. Since equality is extensively used in specifications, mapping fragments with equality is an important direction. 

While the decision procedure for the $\exists \Box$ bundle is tight in terms of complexity, the other results involve only an $\EXPspace$ upper bound, leaving a gap between the upper bound and known lower bound. We need sharper technical tools for investigating lower bounds for bundled fragments.

We have presented tableau-based decision procedures that are easily implementable, but inference systems for reasoning in these logics require further study.

\bibliography{ref}

\end{document}